\newcommand{\ubar}[1]{\underaccent{\bar}{#1}}
\DeclareDocumentCommand{\publicBelief}{O{\mu}}{#1}
\DeclareDocumentCommand{\privateBelief}{O{p}}{#1}
\DeclareDocumentCommand{\signalupdate}{O{q}}{#1}
\DeclareDocumentCommand{\signal}{O{s}}{#1}
\DeclareDocumentCommand{\signalsSet}{O{S}}{#1}
\DeclareDocumentCommand{\state}{O{\omega}}{#1}
\DeclareDocumentCommand{\statesSet}{O{\Omega}}{#1}
\DeclareDocumentCommand{\price}{O{\tau}}{#1}
\DeclareDocumentCommand{\action}{O{a}}{#1}
\DeclareDocumentCommand{\limitParam}{O{\alpha}}{#1}
\DeclareDocumentCommand{\deterrencePrice}{O{\price} O{d}}{#1^#2}
\DeclareDocumentCommand{\LLR}{O{x}}{\log(\frac{#1}{1-#1})}
\DeclareDocumentCommand{\llr}{O{x}}{l\left(#1\right)}
\DeclareDocumentCommand{\lBound}{O{\limitParam} O{\publicBelief}}{\ubar{#1}_{#2}}
\DeclareDocumentCommand{\uBound}{O{\limitParam} O{\publicBelief}}{\bar{#1}_{#2}}
\DeclareDocumentCommand{\eqPrice}{O{\price}}{#1^{*}}
\newcommand{\pr}{\mathrm{Pr}_\sigma}
\newtheorem{lemma}{Lemma}
\newtheorem{proposition}{Proposition}
\newtheorem{theorem}{Theorem}
\newtheorem{corollary}{Corollary}
\theoremstyle{definition}
\newtheorem{definition}{Definition}
\newtheorem{example}{Example}
\newlist{secenum}{enumerate}{10}
\setlist[secenum]{label=\thesection.\arabic*,leftmargin=*}
\newcommand{\Params}{n,B,\mu,p,\tau}
\newcommand{\limninf}{\lim_{n\rightarrow\infty}}
\definecolor{ao}{rgb}{0.0, 0.5, 0.0}
\newcommand{\blue}{\textcolor{blue}}
\providecommand{\customgenericname}{}
\newcommand{\newcustomtheorem}[2]{%
	\newenvironment{#1}[1]
	{%
		\renewcommand\customgenericname{#2}%
		\renewcommand\theinnercustomgeneric{##1}%
		\innercustomgeneric
	}
	{\endinnercustomgeneric}
}
\begin{document}
\title{The One-Shot Crowdfunding Game}
\author{Itai Arieli}
\author{Moran Koren}
\author{Rann Smorodinsky}

\affil{\small Faculty of Industrial Engineering, Technion\textemdash Israel Institute of Technology. \thanks{Smorodinsky gratefully acknowledges the support the joint United States-Israel Binational Science Foundation and National Science Foundation grant 2016734, German-Israel Foundation grant I-1419-118.4/2017, Ministry of Science and Technology grant 19400214, Technion VPR grants, and the Bernard M. Gordon Center for Systems Engineering at the Technion.}}

\renewcommand\Authands{, and }
\renewcommand\footnotemark{}
\maketitle

\begin{abstract}
	Society uses the following game to decide on the supply of a public good. Each agent can choose whether or not to contribute to the good. Contributions are collected and the good is supplied whenever total contributions exceed a threshold. We study the case where the public good is excludable, agents have a common value and each agent receives a private signal about the common value. This game models a standard crowdfunding setting as it is executed in popular crowdfunding platforms such as Kickstarter and Indiegogo. We study how well crowdfunding performs from the firm's perspective, in terms of market penetration, and how it performs from the perspective of society, in terms of efficiency.
	
\end{abstract}

\thanks{This work is supported by the National Science Foundation,
	under grant CNS-0435060, grant CCR-0325197 and grant EN-CS-0329609.}

\maketitle


\section{Introduction}

The evolution of the `sharing economy' has  made it possible for the general public to invest in early-stage innovative and economically risky projects and products. In 2015 the total funds raised via this innovative form of funding, commonly referred to as  \textit{Crowdfunding}, already exceeded 34 Billion Dollars and it is by all means the largest growing avenue for funding new products. Such funding may take the form of a capital investment, peer-to-peer loans, early purchase of goods, typically in a nascent and undeveloped stage and new innovative investment structures such as initial coin offerings.

In traditional funding avenues, the power to decide which projects to support and which products would prevail was often endowed to small committees of experts. In the private sector, banks and private equity funds would endow such decisions to their investment committee while in the public sector such decisions would often be taken by a small group of civil servants and public officials. Crowdfunding essentially revokes the power of such small teams and endows the funding decision to the crowd with the basic premise that the crowd is smarter than any small team of experts. The goal of this paper is to study how well the wisdom of the crowd performs in the context of funding decisions.

Inspired by popular crowdfunding platforms such as `Kickstarter' and `Indiegogo' we introduce a simple game of incomplete information, which we dub the {\it Crowdfunding game}. A firm who want to propose a new product offers the the following game to its potential customer base: The firm posts a price for its product and, in addition, sets a revenue goal.%
\footnote{In reality the firms typically propose more than a single variant of the product alongside a menu of posted prices.}
The product is at a nascent stage and so its true value is yet unknown.
Potential customers may have some private information regarding the value of the product. Based on this information the customers choose whether or not to buy the product at the posted price (hereinafter we refer to this action as a contribution). If the total contributions pledged in the campaign exceed the revenue goal then contributions are collected and the firm supplies the product to the contributors. Otherwise contributions are not collected.

As crowdfunding campaigns are often associated with early stage products, when the demand is unknown, they serve a few  objectives. From the firm's perspective the goal of the crowdfunding campaign is to raise funds in order to develop the product. Equally as important, the campaign serves to raise awareness to the product and so it serves as a means to penetrate the market and provide exposure to a critical mass of early adopters. From society's point of view it serves to aggregate the information from the crowd and so it serves as an institution to tunnel funds to the viable products. Ideally, crowdfunding campaigns will deny funds from low value products and projects while guaranteeing the support to high value products and projects.

Thus, we associate with each game two indices - a (market) Participation Index which is associated with how well do campaigns perform in terms of attracting contributors, and a Correctness Index that is associated with how well do campaigns harness the wisdom of the crowd to support the high quality products while denying funds to the low quality ones.

More technically, the Crowdfunding Game is a game of incomplete information played among a population of  $n$ potential contributors. The common value of the product, $v$, is unknown and players have  some private information about this value. A player must decide whether or not to buy the product at some posted price, $\tau$ (`contribute'). If a player contributes and the total number of contributors exceeds some preset threshold, $B$, then her utility is $v-\tau$.%
\footnote{Alternatively, one can set the threshold in terms of contributions pledged and not in terms of the number of contributors. We chose the latter form as it seems that market traction often plays a more important role than actual revenues.}
Otherwise it is zero. In particular foregoing the contribution opportunity entails a utility of zero.
The two measures of success for a crowdfunding game that we study are:
\begin{itemize}
	\item
	The {\em correctness index} of a game is defined as the probability that the game ends up doing the correct thing. That is, the probability the product be funded when its value is high or is rejected when its value is low. The correctness index measures how well the crowdfunding aggregates the private information from the buyers in order to make sure the firm pursues the product only when it is viable.
	\item
	The {\em market penetration index} is the expected number of buyers provided that the product is supplied, i.e, the threshold is surpassed. This number serves as a proxy for success of the campaign as a means to attract further investments.
\end{itemize}

Our proposed crowdfunding game is a stylized model for how crowdfunding actually takes place in reality. One obvious limitation of the current model is that it comprises  of a simultaneous move game, whereas in reality campaigns are executed over a period of time and agents have the option to wait for others (possibly more informed) agents to make a pledge before they commit. This, however, is not a major drawback of the model as it has been noticed empirically that the majority of contributions made by unaffiliated players (no family and friends) take place just before the campaign's deadline (see \cite{Kuppuswamy2018}).

The crowdfunding game is sufficiently simple and abstract to serve as a model for the formation of institutions. For example, consider the evolution of multi-national institutions (e.g., the UN's International Court of Justice in Hague, the Kyoto Protocol or the Geneva Conventions) where these institutions form only if supported by sufficiently many nations and serve the supporting nations only. In a similar vein, the formation of industry standards can be modeled as a Crowdfunding game.

\subsection{Main findings}\label{sec:main_fin}

Our first result establishes the existence of a symmetric, non-trivial equilibrium in crowdfunding games. It turns out that, for some parameter combinations, such an equilibrium necessarily exists while for others it is guaranteed to exist only when the crowd is large enough. Once this has been established we turn to study the consequences of such symmetric equilibria of {\bf large} crowdfunding games in terms of both aforementioned two success measures:

\begin{itemize}
	\item
	We provide a tight bound on the  correctness index which is strictly less than one. Thus, no matter how the campaign goal is set, full information aggregation cannot be guaranteed. We compare this with the efficiency guarantees of majority voting implied by Condorcet Jury Theorem.
	\item
	We provide a bound on the penetration index and we show that by setting the champaign goal optimally the resulting market penetration is higher than the prior.
\end{itemize}

Our analysis is typically done for three distinct cases:
\begin{itemize}
	\item
	Games in which the price is cheap, and players contribute regardless of their personal signal.
	\item
	Games with moderate prices where the only (symmetric) equilibrium is one in which players with a high signal surely contribute while those with a low signal either decline or take a mixed strategy whereby they contribute at a positive probability, strictly less than one; and
	\item
	Games with expensive prices where, for sufficiently low thresholds, the only (symmetric) equilibrium is one in which players with a low signal opt-out and players with a high signal play a mixed strategy.
\end{itemize}

In addition to the aforementioned theoretical results we present some computational results that pertain to symmetric equilibria in moderate size crowdfunding games. Inspired by related empirical research we focus on games with around $100-1000$ players, which is a realstic estimate for real-world crowdfunding campaigns. By and large the computations corroborate that our theoretical findings in the asymptotic analysis prevail in moderate size games.

\subsection{Related Literature}

The lion's  share of the literature on crowdfunding takes an empirical approach according. In this context, one can divide the relevant literature into two strands. One strand uses crowdfunding data to calibrate parameters of some complex systems to match the date best (e.g., \cite{Lei2017},\cite{Ellman2014}, and \cite{Yang2016}). In contrast with our model, the laws of motion for the underlying models in these papers are not derived from strategic analysis of the players and so they are not the result of any equilibrium analysis.%
\footnote{For example, Yang et al \cite{Yang2016} assume that the the decision of every agent is determined by the historic success rate of previous projects.}

In another strand of the relevant empirical literature, data from online crowdfunding platforms is summarized statistically and some overarching observations are made on such campaigns, often in the context of additional variables such as culture and geography (e.g., Hemer \cite{Hemer2011}).
Three of these observations are worth noting in the context of our work: Yum et al. \cite{Yum2012} argue that firms use the crowdfunding platform as a means for information gathering. Mollick \cite{Mollick2014} observes that most crowdfunding campaigns reach extreme results. Either, the number of contributors to a campaign is small or it is over subscribed. The same author  uses a survey of over $47,000$ contributors to conclude that about $9$ percent of successful campaigns never deliver (\cite{Mollick2015}).

Recent empirical papers (\cite{Kuppuswamy2017}, \cite{Kuppuswamy2018}) study data from $14,704$ ``Kickstarter" campaigns held between 2012 and 2014 and provide new insights into crowdfunding campaigns: (1) The magnitude of contributions is greater in the first and last week of a campaign's time span yielding a ``U-shape" pattern over time. This U-shape is seen both in successful and failed campaigns (by `failure' we mean a campaign for which the contributions fell short of the threshold). (2) In most cases, failed campaigns fail by a large margin while successful ones succeed only by a thread. (3) Once the campaign goal is reached, the rate of contributions decreases significantly.
In \cite{Kuppuswamy2017}, the authors go on and provide a behavioral model that is compatible with the data, but does not assume agents are rational (or common knowledge thereof). For example, in the proposed model agents are over confident of their influence on the campaign outcome, an observations that is often inconsistent with equilibrium analysis and common knowledge of rationality.

\cite{Kuppuswamy2018} examines the U-shape contributions pattern and find that the early backers are not necessarily playing to maximize their value as they primarily belong to the social circle of the entrepreneur (friends and family). On the other hand, most of the activity of unaffiliated backers takes place at the very end of the campaign. Cating this observation onto our model suggests that our crowdfunding game should be thought of as a model of the final stage of the campaign, when value-maximizing agents take action.

Game theoretical models have been used to study a variety of aspects of crowdfunding.
Strausz \cite{Strausz2017}  studies the vulnerability of crowd-funding platforms to entrepreneurial moral hazard. In contrast with our model the firm has the informational advantage and may seek to embezzle part of the funds. The paper offers an efficient mechanism to circumvent this issue. Chemla and Tinn \cite{Chemla2016} compare two common crowdfunding mechanisms - ``All-or-Nothing (AoN)" and ``Keep-it-All (KiA)". In AoN, as in our model, funds are collected only if a pre-determined threshold is reached. In the KiA mechanism this threshold is set to zero. The paper shows that AoN dominates KiA in terms of efficiency and is less vulnerable to moral hazard. Kumar et al \cite{Kumar2017} study the competition between two means for fund raising -  crowdfunding and loans. They go on and show the connection between the cost of capital, the level of price discrimination in the crowdfunding campaign and  the efficiency of the final allocation.

Finally, Alaei, Malekian and Mostagir \cite{Alaei2016}, consider a model of crowdfunding where buyers with private valuations take actions sequentially. Whereas their model is not strategic, and players follow some ad-hoc `natural' strategy, their conclusion supports the main finding in Mollick  \cite{Mollick2014}. Namely, crowdfunding campaigns most often end in one of two extreme outcomes, they either attract a few contributors or are oversubscribed.


Somewhat related to our model is the line of research on the Condorcet model. In the standard model, similar to the crowdfunding game, players have a state dependant common value with some private information and ea player can take one of two actions (`vote'). The Condorcet Jury theorem argues that the majority rule will aggregate information. In other words it will result in the correct decision if voters vote naively (`truthfully') and the population is large enough (this is no more than the law of large numbers).
Austen-Smith and Banks \cite{Austen-Smith1996} challenge this premise by noting that naive voting is not necessarily rational. Mclennan \cite{McLennan1998} provides an alternative framework where Condorcet's asymptotic efficiency results hold in equilibrium. In contrast with the Condorcet Jury theorem and Mclennan's result the crowdfunding game need not aggregate information fully and could lead to an inefficient outcome, even in large populations.

The paper is organized as follows. In Section \ref{sec:model} we present the Crowdfunding game and our first result regarding existence and uniqueness of an equilibrium.
In section \ref{sec:results}  we present our asymptotic results for the Crowdfunding game, related to large markets. In Section \ref{sec:finite} we provide some calculations for the outcome of such games in smaller markets. We conclude in Section \ref{sec:conc} and suggest some future avenues of further investigation.

\section{The Crowdfunding Game}\label{sec:model}

A crowdfunding game is a game of incomplete information played among a population of $n$ potential contributors (or players). An unknown state of nature $\omega \in \Omega=\{H,L\}$ is drawn with prior probabilities $(\mu,1-\mu)$, respectively.
In state $\omega$ the common value of the product is $v_\omega.$ Conditional on the realized state $\omega$, a private signal $s_i\in S_i=\{h,l\}$ is drawn independently for every player $i$. We assume $p= Pr(s_i=h|\omega=H)=Pr(s_i=l|\omega=L)> 0.5$. Each player $i$ has a binary action set, $A_i= \{0,1\}$,
with $a_i=1$ representing a decision to contribute. A contribution can be seen, in fact, as a commitment to buy the product at some pre-set  price, $\tau$, if it is eventually supplied. The action $a_i=0$ represents a decision to opt-out and not to contribute.  The utility of every player $i\in N$ is defined as follows
\begin{equation}\label{eq:consumer_util}
u_i(a_i,a_{-i},\omega)=\begin{cases}
v_H-\tau&\mbox{ if }a_i=1\mbox{ and }\sum_{j=1}^n a_j \geq B\mbox{ and }\omega=H\\
v_L-\tau&\mbox{ if }a_i=1\mbox{ and }\sum_{j=1}^n a_j \geq B\mbox{ and }\omega=L\\
0&\mbox{ otherwise}
\end{cases}.
\end{equation}
In words, whenever player $i$ chooses to opt-out, she receives a utility of zero. If she chooses to contribute, then her utility is determined by the total number of contributors. If less than $B$ players contributed then the product is not supplied and her utility is zero. If the number of contributions exceeds $B$ then her utility is determined by the state of nature and equals $v_H-\tau$ in state  $H$ and $v_L-\tau$ in state $L$. Hereinafter we assume, without loss of generality, that $0=v_L< \tau < v_H=1$ and denote the corresponding game by $\Gamma(n,B,\mu,p,\tau)$.

A strategy for player $i$ is a mapping $\sigma_i:S_i\rightarrow \Delta A_i$. For simplicity we identify $\sigma_i(s)$ with the probability that player $i$ assigns to the action $1$ (`contribute'),  conditional on receiving signal $s$. A strategy profile is called \textit{symmetric} if $\sigma_i=\sigma_j$ for all players $i,j\in N$.

The distribution over the states of nature and the corresponding vector of signals, coupled with a strategy profile, $\sigma$, induce a probability distribution over the players' actions profile. A strategy profile $\sigma$ forms a \emph{Bayes-Nash equilibrium} if
$$E_{\sigma}(u_i(\sigma_i(s_i),\sigma_{-i}(s_{-i}))) \ge  E_{\sigma}(u_i(a_i,\sigma_{-i}(s_{-i})))
\ \  \forall i,\forall s_i \in S_i, \forall a_i\in A_i,$$
where the expectation is taken w.r.t to the aforementioned probability distribution.

One obvious equilibrium in the crowdfunding game, whenever $B>1$, is one where all players choose to opt-out, in which case the revenue goal is never met and the product is never supplied. To avoid such equilibria we restrict attention to equilibria for which there is a positive probability that the good be supplied:
\begin{definition}
	A strategy profile (in particular an equilibrium strategy profile) $\sigma=(\sigma_1,\dots,\sigma_i,\dots,\sigma_{n})$ is called \textit{non-trivial} if, $$Pr_{\sigma}(\sum_{i} a_i \geq B)>0.$$
\end{definition}

Our first result is related to the existence and uniqueness of non-trivial,symmetric equilibria. We show that in any crowdfunding game, there can be at most one such equilibrium. Furthermore, when the population is large enough, such an equilibrium is guaranteed to exist.

\begin{theorem}\label{thm:gen_atmost_one_eq}
	(1) No crowdfunding game has more than one symmetric non-trivial Bayes-Nash equilibrium. (2) Consider the sequence $\{B_n\}_{n=1}^{\infty}$ where $\lim_{n\rightarrow\infty}\frac{B_n}{n}=q$ for some $q\in(0,1].$   For any 4-tuple of parameters $(q,\mu,p,\tau)$ there exists some $N$ such that for any $n>N,$  the crowdfunding game $\Gamma(n,B_n,\mu,p,\tau)$ has a unique symmetric non-trivial Bayes-Nash equilibrium.
\end{theorem}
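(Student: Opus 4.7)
The plan is to classify the possible shapes of symmetric non-trivial equilibria, prove uniqueness via a monotonicity lemma for a certain pivotal likelihood ratio, and prove existence in large games via the intermediate value theorem combined with concentration of binomial tails.

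\textbf{Classification.} Because $p > 1/2$, the posterior $P(H \mid h)$ strictly exceeds $P(H \mid l)$, and hence for any opponents' strategy inducing a positive probability of the pivotal event in both states, the expected payoff from contributing is strictly greater for the high-signal player than for the low-signal player. Thus in any symmetric non-trivial equilibrium $\sigma(h) \geq \sigma(l)$, and the two mixing probabilities cannot both lie in $(0, 1)$. Every such equilibrium therefore takes one of three forms: (a) $(1,1)$, (b) $(1, x)$ with $x \in [0,1)$, or (c) $(y, 0)$ with $y \in (0,1]$. I parametrize these by a single $t \in [0,2]$: for $t \in [0,1]$ let $\sigma(t) = (t, 0)$, and for $t \in [1,2]$ let $\sigma(t) = (1, t-1)$. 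Write $c_H(t), c_L(t)$ for the per-player contribution probabilities in the two states, $q_\omega(t) = \Pr(\mathrm{Bin}(n-1, c_\omega(t)) \geq B-1)$ for the associated pivotal probabilities, and
\[
\phi_s(t) = P(H\mid s)(1-\tau)\, q_H(t) - P(L\mid s)\, \tau\, q_L(t)
\]
for the expected payoff from contributing given signal $s$ when the others play $\sigma(t)$.

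\textbf{Uniqueness.} The key lemma is that $q_H(t)/q_L(t)$ is strictly decreasing on $(0, 2)$; this should follow from the monotone likelihood ratio property of the binomial family together with the observation that $c_H(t) \geq c_L(t)$ with the ratio $c_H/c_L$ non-increasing in $t$. Given the lemma, each indifference equation $\phi_s(t) = 0$ is equivalent to $q_H(t)/q_L(t) = K_s := \frac{P(L\mid s)\tau}{P(H\mid s)(1-\tau)}$ and therefore has at most one solution. Since $K_h < K_l$, a mixed case-(c) equilibrium at $t^* \in (0,1)$ forces $\phi_h(t^*) = 0$ and $\phi_l(t^*) < 0$; monotonicity then gives $\phi_h, \phi_l \leq 0$ throughout $[t^*, 2]$, ruling out every case-(a), case-(b), or boundary $t=1$ candidate. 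The analogous implication rules out coexistence of a case-(b) mixed equilibrium with any other form, and uniqueness follows.

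\textbf{Existence for large $n$.} By the classification, a non-trivial symmetric equilibrium exists iff one of the following holds: $\phi_l(2) \geq 0$; $\phi_l(1) \leq 0 \leq \phi_h(1)$; $\phi_l$ has a root in $(1,2)$; or $\phi_h$ has a root in $(0,1)$. When $B_n/n \to q$, the binomial tails concentrate: $q_\omega(t) \to 1$ if $c_\omega(t) > q$ and $q_\omega(t) \to 0$ if $c_\omega(t) < q$. I would split on whether $q < 1-p$, $1-p < q < p$, or $q > p$; in each regime, there is a nonempty $t$-window in which $c_L(t) < q < c_H(t)$, namely $(q/p, q/(1-p)) \subset (0,1)$ in case (c) and $(1 + (q-p)/(1-p),\ 1 + (q-(1-p))/p) \subset (1,2)$ in case (b). Inside such a window $\phi_s(t) \to P(H \mid s)(1-\tau) > 0$ for $s \in \{h, l\}$. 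Combined with the boundary value $\phi_s(2) = P(H \mid s) - \tau$ and the continuity of $\phi_h, \phi_l$ in $t$, applying the IVT on an appropriate subinterval produces a root of $\phi_h$ in $(0,1)$ or of $\phi_l$ in $(1,2)$, or a non-strict boundary solution at $t \in \{1,2\}$, for all sufficiently large $n$.

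\textbf{Main obstacle.} The principal technical difficulty is the monotonicity lemma for $q_H(t)/q_L(t)$: direct differentiation is cumbersome because $t$ moves both binomials' success probabilities simultaneously, so the cleaner route is via a conditional pivotal-event identity or a log-derivative argument that exploits binomial MLR. A secondary subtlety is uniform control of the binomial tail asymptotics so that the IVT roots always land strictly inside the required open intervals; this reduces to the elementary algebraic fact that the two windows above are nonempty whenever $p \in (1/2, 1)$ and $q \in (0, 1)$.
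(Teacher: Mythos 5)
Your route is genuinely different from the paper's in both halves, and the overall architecture is sound. The classification step coincides with the paper's Lemma \ref{lem:psi_ge_lambda}. For uniqueness, however, the paper does not work with the ratio $q_H(t)/q_L(t)$ along your one-parameter path; it fixes one mixing probability, writes the other type's payoff as a single-variable function $f(\lambda)$ (resp.\ $\hat f(\psi)$), and shows by direct computation with the binomial pmf that at any zero the derivative is strictly negative (a term-domination argument on $\eta(k,n-1)$), giving single crossing. For existence in large games the paper argues by contradiction: if no equilibrium exists then $E_{(0,\psi)}u(1\mid h)<0$ for all $\psi$, and letting $\psi\to 0$ gives $\lim_{\psi\to 0} \varphi((1-p)\psi,n-1,B-1)/\varphi(p\psi,n-1,B-1)=((1-p)/p)^{B-1}\to 0$, a contradiction once $B_n$ is large. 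That argument needs only $B_n\to\infty$ and so covers $q=1$ uniformly, whereas your IVT-plus-concentration argument buys a more transparent picture of where the equilibrium sits (which branch and which window) at the cost of a case split on $q$.

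Two concrete gaps. First, your central monotonicity lemma is asserted, not proved, and the justification you sketch does not quite work: on the case-(c) branch $c_H(t)/c_L(t)=p/(1-p)$ is \emph{constant}, so ``MLR plus $c_H/c_L$ non-increasing'' cannot be the mechanism there. The lemma is nonetheless true (for $B<n$): on the case-(c) branch, write $\varphi(\rho)=(n-1)\binom{n-2}{B-2}\int_0^\rho u^{B-2}(1-u)^{n-B}\,du$ and check via the substitution $u=\rho s$ that the elasticity $\rho\varphi'(\rho)/\varphi(\rho)=\bigl[\int_0^1 s^{B-2}\bigl(\tfrac{1-\rho s}{1-\rho}\bigr)^{n-B}ds\bigr]^{-1}$ is strictly decreasing, which gives $\tfrac{d}{dt}\log\bigl(\varphi(pt)/\varphi((1-p)t)\bigr)<0$; on the case-(b) branch one instead needs log-concavity of $\varphi$ (which holds because $\varphi$ is the integral of a log-concave density) together with $c_H'(t)=1-p<p=c_L'(t)$. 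You should also note the degenerate case $B=n$ on branch (c), where $q_H/q_L$ is constant in $t$ and strict single crossing fails. Second, your existence windows $(q/p,\,q/(1-p))$ and $\bigl(1+\tfrac{q-p}{1-p},\,1+\tfrac{q-(1-p)}{p}\bigr)$ are empty when $q=1$, which the theorem permits; to cover $q=1$ you either need the large-deviations refinement that $q_H(t)/q_L(t)\to\infty$ at fixed interior $t$ even when both tails vanish, or the paper's small-$\psi$ asymptotics. With the lemma proved and the $q=1$ case patched, your proof goes through.
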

%
The proof of Theorem \ref{thm:gen_atmost_one_eq} is relegated to Appendix \ref{sec:proofs}.

\subsection{Performance Measures}
In the introduction we discuss the various objectives of crowdfunding campaigns.
The following two indices correspond to two of these objectives. The first is the \emph{correctness index} of a game which pertains to  how well the game tunnels funds. The second is the \emph{participation index} of a game which refers to how well does the campaign attract contributions. Formally, let  $\sigma^*$ denote the unique symmetric non-trivial equilibrium of the game $\Gamma(n,B,\mu,p,\tau)$. Then:

{\bf The Correctness Index} is the following expectation:
\begin{equation}
\theta(n,B,\mu,p,\tau)=\mu Pr_{\sigma^*}(c^H_n \geq B)+(1-\mu) Pr_{\sigma^*}(c^L_n < B)
\end{equation}
where $c^\omega_n=\sum_{i=1}^n a_i$ is the expected number of contributors, conditional on the realized state $\omega\in\{L,H\}$.  The first summand captures the probability of a correct outcome whenever the state of the world if $H$ and ideally the product should be supplied and the second summand captures the opposite situation.
 If no such equilibrium exists then we set $\theta(n,B,\mu,p,\tau)=0.$

The Correctness index of a {\it large} crowdfunding game, associated with the parameters $(\mu,p,\tau)$ is
$$\theta(\mu,p,\tau) = \lim_{n\rightarrow\infty} \max_{B\in\{1\dots n\}} \theta(n,B,\mu,p,\tau).$$
\\

{\bf The Participation Index} is the following expectation:
\begin{equation}
R(n,B,\mu,p,\tau)=E_{\sigma^*}\big[\frac{c_n}{n} \chi(c_n \geq B) \big] =
Pr_{\sigma^*}(c_n \geq B)E_{\sigma^*}\big[\frac{c_n}{n}|c_n \geq B \big].
\end{equation}
Where $c_n$ counts the number of contributors and $\chi(A)$ is the indicator function of the event $A$. If no such equilibrium exists then we set $R(n,B,\mu,p,\tau)=0$. In words, the Participation Index is the expected number of contributions {\bf collected} (conditional on the campaign target being met).

The Participation index of a large crowdfunding game, associated with the parameters $(\mu,p,\tau)$ is
$$R(\mu,p,\tau) = \lim_{n\rightarrow\infty} \max_{B\in\{1\dots n\}} R(n,B,\mu,p,\tau).$$
\\

\subsection{The role of $B$}

The threshold $B$ that is prevalent in many crowdfunding campaigns (often presented in terms of revenues and not in terms of contributors) plays a dual role. From the society's perspective, it introduces a barrier to entry, guaranteeing funds only to those project with sufficient public support. The underlying implicit assumption is that public support will only be provided whenever the collective wisdom assigns high probability to the state $H$. In addition, from the firm's point of view it serves to entice participants. The fact that a contribution is collected only when the overall support is high enough offers an inherent `social' insurance. That is, when a certain participant is contemplating whether to contribute he does not base his decision only on his private information but also on the likelihood the product is good product, conditional that the threshold $B$ is reached. Consequently, players that are initially doubtful (those with a low signal) will also contemplate a contribution. However, on the other hand, a high threshold may imply lower participation even if more players contemplate a contribution. This is because our notion of participation refers to the number of contributions that are actually collected.
\\
Let us now see how these arguments play out in an example.
\begin{example}\label{example:1}
Consider the following symmetric 3-player crowdfunding game:  $\Gamma(n=3,B, \mu=0.5, p=0.75,\tau=0.5)$:
\begin{itemize}
	\item We first assume a low threshold, $B=1$. Such a low threshold offers no social insurance and each contribution is necessarily collected. In this case the expected utility of a player from contributing is
	$Pr(\omega=H|s_i)(1-\tau)-Pr(\omega=L|s_i)\tau$ which is equal $0.25 >0$
	whenever a player receives a high signal ($s_i=h$) and $-0.25<0$ whenever he receives a low signal. The participation index is therefore the expected proportion of high signals, which is equal $0.5$ and the correctness index is
	$0.5(1-0.25^3)+ 0.5 (0.75)^3= 0.703$.
	\item
	In contrast, consider the high threshold, $B=3$.
	As above, players receiving the high signal will surely contribute. However, having only the high signal players contribute is no longer an equilibrium because of the social insurance effect. That is, if only high signal players contribute, then a low signal player is better-off contributing as in this case he assigns a probability of $0.75$ to $\omega=H$ \emph{conditional on reaching the threshold $B=3$}. However, if all low type player choose to contribute, then the social insurance is no longer valid. Thus, in equilibrium, they use it with caution, or more formally play a mixed strategy. The actual probability of contribution for the low signal players in equilibrium turns out to be $\lambda=0.302$. Now the probability of a successful campaign conditional on the state being $H$ is $0.563$ whereas the probability of a failed campaign conditional on state $L$  is $0.892$. From this we can compute that the correctness index is equal $0.727$, higher than the case $B=1$. On the other hand the participation index is now $0.335,$ lower than the case $B=1.$
\end{itemize}
\end{example}

\section{Asymptotic Results}\label{sec:results}

We present results for three distinct scenarios, depending on the product pricing. We distinguish between three price levels: cheap, moderate and expensive, formulated as follows.
Let $p_l=Pr(\omega=H|s_i=l)= \frac{(1-p)(1-\mu)}{p\mu+(1-p)(1-\mu)}$ and $p_h=Pr(\omega=H|s_i=h) = \frac{p\mu}{p\mu+(1-p)(1-\mu)}$ be the two possible posterior expectations over the value of good, depending on the signal received. Obviously $p_h>p_l$. Recall the values of the product at the two states, $v_H=1$ and $v_L=0$, which in turn implies that the posterior forms the maximal price an agent would pay for the good in a simple take-it-or-leave setting.%
The three cases we study are:
\begin{itemize}
	\item The campaign offers a {\em cheap} price whenever $\tau \le p_l <p_h$. It should not be surprising that when the campaign offers a cheap price both types of agents necessarily contribute. Consequently, the crowd does not convey its wisdom.
	\item The campaign offers a {\em moderate } price whenever $ p_l < \tau < p_h $. Whereas players a low signal would not buy the good they nevertheless participate in the campaign (recall Example \ref{example:1}).
	\item The campaign offers an {\em expensive} price whenever $ p_l < p_h \le \tau$. Whereas both types would decline to but the good at the price $\tau$ participation does take place due to the inherent social insurance.%
\end{itemize}

\subsection{Cheap Prices}\label{sec:asym_game}

A Crowdfunding game is  \textit{cheaply priced} whenever $\tau<Pr_{\mu}(\omega=H|s_i=l).$ In such games the outcome is trivial as the unique symmetric equilibrium (which is necessarily non trivial) is for all players (low and high) to participate:

\begin{theorem}\label{thm:eq_asym_game_towards}
	In any crowdfunding game  with a cheap price there exists a unique symmetric Bayesian Nash equilibrium where all players contribute. This equilibrium is non-trivial.
\end{theorem}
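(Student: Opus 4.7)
The plan is to verify directly that the ``all contribute'' strategy profile, i.e.\ $\sigma(h)=\sigma(l)=1$, is a symmetric Bayes-Nash equilibrium, observe that it is non-trivial, and then invoke part (1) of Theorem~\ref{thm:gen_atmost_one_eq} to conclude uniqueness.

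First I would fix an arbitrary player $i$ and suppose every other player contributes with probability one regardless of signal. Given signal $s_i \in \{h, l\}$, the threshold is reached with certainty whenever player $i$ contributes (the total reaches $n \ge B$), so her expected utility from contributing equals
\[
Pr(\omega \! = \! H \mid s_i \! = \! s)(1-\tau) + Pr(\omega \! = \! L \mid s_i \! = \! s)(-\tau) \; = \; p_s - \tau,
\]
while opting out yields zero. The cheap price assumption gives $\tau \le p_l < p_h$, so $p_s - \tau \ge 0$ for $s \in \{l, h\}$; hence contributing is a (weak) best response after every signal realization, and the all-contribute profile is a symmetric BNE.

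Non-triviality is immediate: under this profile the total number of contributions is exactly $n$ with probability one, so $Pr_{\sigma}(\sum_i a_i \ge B) = 1 > 0$. Combining this with part (1) of Theorem~\ref{thm:gen_atmost_one_eq}, which asserts that any crowdfunding game has at most one symmetric non-trivial BNE, yields that the all-contribute profile is \emph{the} unique symmetric non-trivial BNE.

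There is essentially no obstacle; the proof is a one-line best-response calculation plus an appeal to the already-established uniqueness result. The only point meriting mild care is the boundary case $\tau = p_l$, where the low-signal player is indifferent between the two actions, but indifference is sufficient for contributing to qualify as a best response, which is all the equilibrium definition requires.
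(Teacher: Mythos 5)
Your proof is correct, and its existence half coincides with the paper's calculation in its easiest special case; the difference lies in how uniqueness is obtained. The paper's proof is a one-liner from Lemma \ref{lem:dominant_one}: in \emph{any} non-trivial symmetric equilibrium, a type whose posterior exceeds $\tau$ gets strictly positive expected payoff from contributing, because $\varphi(\alpha^H,n-1,B-1)\ge\varphi(\alpha^L,n-1,B-1)>0$ (Lemma \ref{lem:psi_ge_lambda}); with a cheap price this applies to both types, so every non-trivial symmetric equilibrium must be the all-contribute profile, and the same computation (with both $\varphi$ terms equal to one) shows that profile is indeed an equilibrium. You verify existence the same way but delegate uniqueness to Theorem \ref{thm:gen_atmost_one_eq}(1), which is logically fine --- that theorem is proved independently of this one --- but it calls on far heavier machinery than the dominance argument requires. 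One caveat about the boundary case you flag: the paper's operative definition of ``cheap'' (Section 3.1 and the appendix) is the strict inequality $\tau<Pr(\omega=H\mid s_i=l)$, so $\tau=p_l$ is excluded, and this matters for your route. At $\tau=p_l$ with $B=1$ the low type is indifferent against \emph{every} symmetric profile (both $\varphi$ terms are identically one), so there is a continuum of non-trivial symmetric equilibria and the uniqueness you import from Theorem \ref{thm:gen_atmost_one_eq}(1) would fail; under the strict definition both types strictly prefer to contribute and your argument goes through unchanged.
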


The proof of Theorem \ref{thm:eq_asym_game_towards} is relegated to Appendix \ref{sec:proofs} however the intuition behind it is quite straightforward. Whenever the price is cheap both types of players are happy to buy it even if their contributions will surely be collected, and do not require the social insurance for that.

Given the simplicity of the equilibrium strategies we can easily derive the value of the two indices:
\begin{theorem}
	For any crowdfunding game, $\Gamma(n,B,\mu,p,\tau)$, with a cheap price:
	\begin{itemize}
		\item $\theta(n,B,\mu,p,\tau)=\mu$; and
		\item $R(n,B,\mu,p,\tau)=1.$
	\end{itemize}
\end{theorem}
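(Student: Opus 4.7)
The plan is to leverage Theorem \ref{thm:eq_asym_game_towards} directly: since every player contributes at $\sigma^*$ with probability one regardless of her signal, the number of contributors $c_n$ is deterministic and equal to $n$ in both states of the world. From this point everything reduces to a one-line computation, so the proof is essentially a substitution exercise.

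First I would invoke Theorem \ref{thm:eq_asym_game_towards} to fix $\sigma^*$ as the strategy profile in which $\sigma_i^*(h)=\sigma_i^*(l)=1$ for every player $i$. Under this profile, $c_n^H = c_n^L = n$ almost surely, since contributions are made deterministically and do not depend on the signal realization. Non-triviality of $\sigma^*$ (which is part of the conclusion of Theorem \ref{thm:eq_asym_game_towards}) further forces $B\le n$, because otherwise $\Pr_{\sigma^*}(\sum_i a_i \ge B) = 0$.

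Next I would plug these observations into the definitions. For the correctness index, $\Pr_{\sigma^*}(c_n^H \ge B) = 1$ because $c_n^H = n \ge B$, and $\Pr_{\sigma^*}(c_n^L < B) = 0$ for the same reason; hence
\[
\theta(n,B,\mu,p,\tau) \;=\; \mu\cdot 1 + (1-\mu)\cdot 0 \;=\; \mu.
\]
For the participation index, $\chi(c_n \ge B) = 1$ and $c_n/n = 1$ almost surely, so $R(n,B,\mu,p,\tau) = E_{\sigma^*}[1] = 1$.

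There is really no obstacle here beyond making explicit the fact that non-triviality rules out the degenerate case $B > n$; once that is observed, the entire argument is a direct evaluation of the two indices under a deterministic contribution profile. The substance of the result lies in the preceding theorem, not in this computation.
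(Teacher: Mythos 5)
Your proposal is correct and follows exactly the route the paper takes: the paper's own proof is the one-line observation that the result is immediate from Theorem \ref{thm:eq_asym_game_towards} and the definitions of the two indices, and your write-up simply makes that substitution explicit. The added remark that non-triviality rules out $B>n$ is a harmless clarification, not a departure.
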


\begin{proof}
	The proof follows immediate from Theorem \ref{thm:eq_asym_game_towards} and the corresponding definitions of the two indices
\end{proof}

\subsection{Moderate Prices}\label{sec:sym_game}

When a campaign price is moderate, high type players find the price attractive while low type players do not:
$$Pr(\omega=H|s_i=l)=  < \tau < Pr(\omega=H|s_i=h) .$$

We begin by establishing the existence and uniqueness of a symmetric non-trivial equilibrium:

\begin{theorem}\label{thm:unique_eq}
	For any crowdfunding game, $\Gamma(n,B,\mu,p,\tau)$, with a moderate price, there exists a \emph{unique} symmetric non-trivial Bayesian Nash equilibrium $\sigma^*=(\sigma^*_1,\ldots,\sigma^*_n).$ Moreover, $\sigma^*_i$ has the following form,
	\begin{equation}\label{eq_equilibrium}
	\sigma^*_i(s_i)=\begin{cases}
	1&\mbox{ if }s_i=h\\
	\lambda=\lambda(n,B,\mu,p,\tau)\in[0,1) &\mbox{ if }s_i=l.
	\end{cases}.
	\end{equation}
\end{theorem}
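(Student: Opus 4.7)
Since Theorem~\ref{thm:gen_atmost_one_eq}(1) already precludes more than one symmetric non-trivial equilibrium in any crowdfunding game, the only task is to establish \emph{existence} and to show that the equilibrium has the stated form. My plan has two parts: first, argue that in any non-trivial symmetric equilibrium of a moderate-price game the high-signal players must contribute with probability one; second, with $\sigma_h^{*}=1$ fixed, reduce the problem for low-signal players to a single-variable equation in $\lambda$ and solve it by a continuity/intermediate-value argument, using $p_l<\tau$ to rule out $\lambda=1$.

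\textbf{Step 1: high-signal players contribute with probability one.} For a candidate symmetric profile $(\sigma_h,\sigma_l)$, denote by $c_{-i}$ the number of contributors among the other $n-1$ players; conditional on $\omega$, $c_{-i}$ is binomial with success probabilities $\pi_H=p\sigma_h+(1-p)\sigma_l$ and $\pi_L=(1-p)\sigma_h+p\sigma_l$. Let $q_\omega(\sigma_h,\sigma_l)=Pr\bigl(c_{-i}\geq B-1\,\big|\,\omega\bigr)$ and let $U_s(\sigma_h,\sigma_l)$ denote the expected payoff of contributing for a player with signal $s$, so that
\[ U_s=Pr(\omega=H\mid s)\,q_H(1-\tau)-Pr(\omega=L\mid s)\,q_L\,\tau. \]
A direct computation yields $U_h-U_l=(p_h-p_l)\bigl(q_H(1-\tau)+q_L\tau\bigr)$, which is strictly positive whenever $q_H+q_L>0$. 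Suppose toward a contradiction that $\sigma_h^{*}<1$ in some non-trivial symmetric equilibrium. Best-responding then requires $U_h(\sigma_h^{*},\sigma_l^{*})\leq 0$, so $U_l(\sigma_h^{*},\sigma_l^{*})<0$ and hence $\sigma_l^{*}=0$. Non-triviality rules out $\sigma_h^{*}=0$, so $\sigma_h^{*}\in(0,1)$, $\pi_H=p\sigma_h^{*}>(1-p)\sigma_h^{*}=\pi_L>0$, and by strict monotonicity of the binomial tail in the success probability $q_H>q_L>0$. This gives
\[ U_h(\sigma_h^{*},0)\geq q_L\bigl(p_h(1-\tau)-(1-p_h)\tau\bigr)=q_L(p_h-\tau)>0, \]
by the moderate-price assumption $p_h>\tau$, contradicting $U_h\leq 0$. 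Hence $\sigma_h^{*}=1$.

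\textbf{Step 2: pinning down $\lambda^{*}\in[0,1)$.} With $\sigma_h^{*}=1$, define $f(\lambda):=U_l(1,\lambda)$ on $[0,1]$; as a polynomial in $\lambda$, $f$ is continuous. Assuming $n\geq B$ (else no non-trivial profile exists at all), at $\lambda=1$ every player contributes for sure, $q_H=q_L=1$, and so $f(1)=p_l-\tau<0$ by moderate pricing. If $f(0)\leq 0$, then $\lambda^{*}=0$ is a best response and the profile $(\sigma_h=1,\sigma_l=0)$ is the desired equilibrium. If $f(0)>0$, then since $f(1)<0$ the intermediate-value theorem produces some $\lambda^{*}\in(0,1)$ with $f(\lambda^{*})=0$, at which point low-signal players are indifferent and may mix at $\lambda^{*}$. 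In either case non-triviality holds because $\sigma_h^{*}=1$ gives $Pr(c_n\geq B)\geq Pr\bigl(|\{i:s_i=h\}|\geq B\bigr)>0$. Uniqueness of the overall equilibrium is then inherited from Theorem~\ref{thm:gen_atmost_one_eq}(1).

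\textbf{Main obstacle.} The substantive step is Step~1: ruling out equilibria in which the high-signal players randomize or abstain. The argument rests on the payoff-monotonicity identity $U_h-U_l=(p_h-p_l)\bigl(q_H(1-\tau)+q_L\tau\bigr)$, which reflects the MLRP of the signal structure, combined with the moderate-price inequality $p_h>\tau$. Step~2 is essentially a one-dimensional intermediate-value argument; here the crucial use of the moderate-price condition is the other half, $p_l<\tau$, which gives $f(1)<0$ and thereby keeps the mixing probability strictly below one.
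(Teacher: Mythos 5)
Your proof is correct and follows essentially the same route as the paper's: establish $\sigma^*(h)=1$ from the moderate-price bound $p_h>\tau$ together with the ordering $q_H\ge q_L$ (the paper's Lemmas on $\varphi(\alpha^H,\cdot,\cdot)\ge\varphi(\alpha^L,\cdot,\cdot)$ and on dominance of contributing when $\tau<Pr(\omega=H\mid s_i)$), then use $p_l<\tau$ to get $f(1)<0$ and a one-dimensional continuity argument to locate $\lambda^*\in[0,1)$, with uniqueness inherited from the at-most-one-equilibrium result. The only point left implicit is verifying the high player's strict incentive at the corner candidate $(\sigma_h,\sigma_l)=(1,0)$ when $f(0)\le 0$, but this is exactly the computation $U_h\ge q_L(p_h-\tau)>0$ you already carried out in Step 1, so nothing substantive is missing.
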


We relegate the proof to the Appendix \ref{sec:proofs} but provide some intuition. We refer to a player who receives the signal $h$ as a `high' player and to a player who receives the signal $l$ as a `low' player. The high player is perfectly happy with the price and would contribute even without the social insurance embedded in the threshold $B$. What about `low' players?  Assume only high players contribute and none of the low players do. Then each low player has an incentive to leverage the social insurance by deviating and contributing. If, on the other hand, all low players as well as all high players contribute then there is no social insurance and each low player can profitably deviate by opting out. By properly mixing between the two actions each low player can be made indifferent and hence best-replies by mixing. This establishes the equilibrium.

In the following lemma we characterize the limit equilibrium strategy of the low player as the size of the population increases. We restrict the analysis to sequences of games where the the limit, per-capita, threshold exists ($\exists \ \lim_{n\rightarrow\infty} \frac{B_n}{n}=q$ for some $q\in[0,1]$).


\begin{lemma}\label{lem:asympt_probs}
	Let $\{\Gamma(n,B_n,\mu,p,\tau)\}_n$ be a sequence of moderately priced crowdfunding games such that $\lim_{n\rightarrow\infty} \frac{B_n}{n}=q$ for some $q\in[0,1]$. Then the limit equilibrium strategy is:
	\begin{equation}
	\lim_{n\rightarrow\infty} \sigma^*_n(l)=\begin{cases}
	0&\mbox{ if } q\leq 1-p\\
	\frac{q-(1-p)}{p}&\mbox{ otherwise}
	\end{cases}
	\end{equation}
\end{lemma}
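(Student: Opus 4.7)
The plan is to reduce the equilibrium characterization of Theorem~\ref{thm:unique_eq} to a single (in)equality comparing two binomial tail probabilities, and then to identify the asymptotics of $\lambda_n := \sigma^*_n(l)$ by a law-of-large-numbers analysis.

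Because the high-signal player always contributes, conditional on state $\omega$ the number of other contributors $c_{-i}$ is Binomial$(n-1,\mu_\omega(\lambda_n))$ with $\mu_H(\lambda):=p+(1-p)\lambda$ and $\mu_L(\lambda):=(1-p)+p\lambda$. A low-signal player's expected gain from contributing, using $v_H=1$ and $v_L=0$, is
$$p_l(1-\tau)\,\Pr(c_{-i}\ge B_n-1\mid H) \;-\; (1-p_l)\,\tau\,\Pr(c_{-i}\ge B_n-1\mid L),$$
which equals $0$ when $\lambda_n\in(0,1)$ (indifference) and is non-positive when $\lambda_n=0$. Equivalently, setting $\rho:=(1-p_l)\tau/(p_l(1-\tau))$ (note $\rho>1$ because moderate pricing means $\tau>p_l$), the likelihood ratio $R_n:=\Pr(c_{-i}\ge B_n-1\mid H)/\Pr(c_{-i}\ge B_n-1\mid L)$ equals $\rho$ in the first case and is $\le\rho$ in the second.

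For $q\le 1-p$, I argue $\lambda_n\to 0$ by contradiction. Any subsequential limit $\lambda'>0$ would give $\mu_L(\lambda')>1-p\ge q$ and \emph{a fortiori} $\mu_H(\lambda')>q$, so by the law of large numbers both tails tend to $1$ and $R_{n_k}\to 1$; on the (eventually interior) tail of the subsequence this contradicts $R_{n_k}=\rho>1$.

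For $q>1-p$, I first rule out $\lambda_n=0$ for large $n$: at $\lambda=0$, $\mu_L(0)=1-p<q$ forces $\Pr(\cdot\mid L)\to 0$, while $\mu_H(0)=p>\mu_L(0)$ gives $R_n\to\infty$ by a Chernoff comparison, violating $R_n\le\rho$. Hence eventually $\lambda_n\in(0,1)$ and $R_n=\rho$. To show $\lambda_n\to\lambda^*:=(q-(1-p))/p$, the unique root of $\mu_L(\lambda^*)=q$, I rule out any other subsequential limit $\lambda'$: if $\lambda'>\lambda^*$ then $\mu_L(\lambda')>q$, both tails tend to $1$, and $R_{n_k}\to 1$; if $\lambda'<\lambda^*$ then $\mu_L(\lambda')<q$ (so the denominator vanishes) while $\mu_H(\lambda')>\mu_L(\lambda')$, which by a rate-function comparison yields $R_{n_k}\to\infty$. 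Either subcase contradicts $R_{n_k}=\rho$, and the limit is pinned to $\lambda^*$.

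The main obstacle is the subcase $\lambda'<\lambda^*$, where both binomial tails may vanish simultaneously. I would handle it using the Cram\'er rate function $I_\mu(q):=q\log(q/\mu)+(1-q)\log((1-q)/(1-\mu))$, which is strictly decreasing in $\mu$ on $[0,q]$; this gives $I_{\mu_L(\lambda')}(q)>I_{\mu_H(\lambda')}(q)$ and therefore exponential growth of $R_{n_k}$ (the case $\mu_H(\lambda')\ge q$ is even easier, since then the numerator stays bounded below while the denominator decays exponentially). All remaining steps are elementary LLN bookkeeping and continuity of $\mu_H,\mu_L$ in $\lambda$.
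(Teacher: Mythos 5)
Your proof is correct and follows the same overall strategy as the paper's: fix $\sigma^*_n(h)=1$, use the law of large numbers to locate the subsequential limits of $\lambda_n=\sigma^*_n(l)$, and derive a contradiction from the low player's equilibrium condition. The genuine difference is that you recast that condition as a constraint on the likelihood ratio $R_n$ of the two binomial tails ($R_n=\rho$ at an interior mixture, $R_n\le\rho$ at $\lambda_n=0$, with $\rho>1$ under moderate pricing) and then control $R_n$ by a Cram\'er rate-function comparison in the subcase $\lambda'<\lambda^*$ with $\mu_H(\lambda')<q$, where both tails vanish simultaneously. This is precisely the subcase the paper's own proof elides: it asserts that when $\varphi(\alpha^L_n,n-1,B_n-1)\to 0$ the low player's limiting utility is $p_l(1-\tau)>0$, which tacitly assumes $\varphi(\alpha^H_n,n-1,B_n-1)\to 1$, i.e.\ $\alpha^H>q$; when $q$ is close to $1$ and $\lambda'$ is small this fails, the utility limit is $0$, and no contradiction is obtained at the level of utilities. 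Your large-deviations comparison ($I_{\mu_L(\lambda')}(q)>I_{\mu_H(\lambda')}(q)\ge 0$, hence $R_{n_k}\to\infty$, contradicting $R_{n_k}=\rho$) closes exactly this gap, at the cost of invoking Chernoff/Cram\'er bounds rather than only the law of large numbers. The remaining cases ($q\le 1-p$, and $\lambda'>\lambda^*$, where both tails tend to $1$ and $R_{n_k}\to 1<\rho$) are handled essentially identically in the two arguments.
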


With this computation at hand we can now turn to study the Correctness index for large markets:

Our second result characterizes the asymptotic correctness of the moderate pricing Crowdfunding game.
\begin{theorem}\label{thm:asym_correctness}
	For any large crowdfunding game with prior $\mu$, signal quality $p$ and a moderate price $\tau$ the probability of making the correct choice is given by:
	\begin{equation}\label{eq:max}
	\theta(\mu,p,\tau) =1-\frac{1-p}{p}\frac{1-\tau}{\tau}\mu.
	\end{equation}
\end{theorem}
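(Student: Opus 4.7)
The plan is to combine the indifference condition for the low-signal player with law-of-large-numbers concentration of binomial counts. Fix a sequence $B_n$ with $B_n/n\to q$; by Theorem~\ref{thm:unique_eq} and Lemma~\ref{lem:asympt_probs} the unique non-trivial symmetric equilibrium is $(\sigma^*_n(h),\sigma^*_n(l))=(1,\lambda_n)$ with $\lambda_n\to\lambda^\ast(q)$ given in the lemma. Denote by $\beta_H=p+(1-p)\lambda_n$ and $\beta_L=(1-p)+p\lambda_n$ the per-player contribution probability conditional on the state; then $c_n^\omega/n\to\beta_\omega$ in probability. In the range $q\in(1-p,1)$ a short computation using $\lambda^\ast(q)=(q-(1-p))/p$ gives $\beta_L=q$ and $\beta_H-q=(2p-1)(1-q)/p>0$.

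Because $\beta_H$ strictly exceeds the per-capita threshold, the weak law of large numbers yields $\Pr(c_n^H\ge B_n)\to 1$. To handle state $L$ I would invoke the indifference equation for the mixed low player. Since $\lambda_n\in(0,1)$, equating the expected utility from contributing with zero (using $v_L=0$, $v_H=1$) reduces to
\begin{equation*}
\frac{\Pr(c_{n,-i}^H\ge B_n-1)}{\Pr(c_{n,-i}^L\ge B_n-1)}=\frac{p(1-\mu)\tau}{(1-p)\mu(1-\tau)},
\end{equation*}
where $c_{n,-i}^\omega$ counts contributions among the other $n-1$ players in state $\omega$. The elementary bound $|\Pr(c_n^\omega\ge B_n)-\Pr(c_{n,-i}^\omega\ge B_n-1)|\le \Pr(c_{n,-i}^\omega=B_n-1)$ combined with a local CLT estimate $\Pr(c_{n,-i}^\omega=B_n-1)=O(n^{-1/2})$ transfers the displayed ratio to the unconditional tail probabilities. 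Together with $\Pr(c_n^H\ge B_n)\to 1$, this forces $\Pr(c_n^L\ge B_n)\to\frac{(1-p)\mu(1-\tau)}{p(1-\mu)\tau}$; substituting into the definition of $\theta_n$ yields exactly $1-\frac{1-p}{p}\cdot\frac{1-\tau}{\tau}\cdot\mu$.

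To finish I would dispose of the remaining sequences. If $q\le 1-p$ then $\lambda_n\to 0$, and in both states the expected contribution fraction ($p$ in $H$ and $1-p$ in $L$) is at least $q$, so $\Pr(c_n^L\ge B_n)\to 1$ and $\theta_n\to\mu$. The moderate-price hypothesis $\tau>p_l$ rewrites as $1-\frac{(1-p)(1-\tau)\mu}{p\tau}>\mu$, so the value achieved for $q\in(1-p,1)$ strictly dominates the $q\le 1-p$ case; given an arbitrary sequence of maximizers $B_n$, compactness of $[0,1]$ lets me extract a subsequence with $B_{n_k}/n_k\to q$ and apply the previous analysis, matching the claimed supremum. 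The main technical obstacle is the local CLT step: one must control $\Pr(c_{n,-i}^\omega=B_n-1)=O(n^{-1/2})$ uniformly as $\lambda_n$ drifts with $n$, and verify at the borderline cases $q=1-p$ and $q\to 1$ (where either the equilibrium mixing or the non-triviality condition degenerates) that the uniform bound still passes to the limit.
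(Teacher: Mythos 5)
Your proposal is correct and follows essentially the same route as the paper's proof: the law of large numbers gives $\Pr(c_n^H\ge B_n)\to 1$ for $q>1-p$, the low player's indifference equation pins down $\lim\Pr(c_n^L\ge B_n)=\frac{1-p}{p}\frac{\mu}{1-\mu}\frac{1-\tau}{\tau}$, the case $q\le 1-p$ is dominated because the price is moderate, and a subsequence extraction handles the optimal threshold sequence. The only packaging differences are that the paper absorbs your local-CLT pivotality correction into its Lemma \ref{lem:almost_unpivotal} and phrases the upper bound as a small linear program rather than a direct comparison over $q$.
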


In fact, a careful reading of the proof suggests that the following slightly stronger result holds. Fix the prior $\mu$, signal quality $p$ and a moderate price $\tau$. If for every $n$ the threshold $B_n$ satisfies ???? (MORAN LEASE FILL IN) then  $\lim_n \theta(n,B_n,\mu,p,\tau) = 1-\frac{1-p}{p}\frac{1-\tau}{\tau}\mu$. To see why this is a bit stronger recall that the definition of $\theta(\mu,p,\tau)$ pertains to the threshold $B$ that maximizes the correctness indices along the sequence and not to arbitrary thresholds.

An immediate conclusion is that large crowdfunding campaigns, in the format we study, necessarily exhibit market failure when prices are moderate.%
\footnote{Compare this observation with Condorcet's jury theorem which argues that in a majority vote, large societies necessarily choose the correct alternative.}
This failure probability is given by $\frac{1-p}{p}\frac{1-\tau}{\tau}\mu$ and the following comparative statics follow immediately:

\begin{corollary}.
For any large crowdfunding game with prior $\mu$, signal quality $p$ and a moderate price $\tau$ the market failure probability decreases as one of the following occurs: (1) the signal accuracy of the signal increases, (2) the price increases; and (3) the prior probability (for the value being high) decreases.%
\footnote{Note the the constraint that prices are moderate rules out the extreme case $\mu=1$ in which it would have been surprising to learn of the possibility of market failure.}
\end{corollary}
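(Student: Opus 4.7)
The plan is to reduce the corollary to straightforward calculus applied to the closed-form failure probability that emerges from Theorem~\ref{thm:asym_correctness}. Concretely, Theorem~\ref{thm:asym_correctness} tells us that for any large moderately priced game the correctness index equals
\begin{equation*}
\theta(\mu,p,\tau) \;=\; 1 - f(\mu,p,\tau), \qquad f(\mu,p,\tau) \;:=\; \frac{(1-p)(1-\tau)}{p\,\tau}\,\mu,
\end{equation*}
so the failure probability is exactly $f(\mu,p,\tau)$. The three comparative statics assertions are then equivalent to showing that $f$ is decreasing in $p$, decreasing in $\tau$, and increasing in $\mu$ on the relevant domain.

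The first step would be to factor $f$ as a product $g(p)\cdot h(\tau)\cdot \mu$ where $g(p) = (1-p)/p$ and $h(\tau) = (1-\tau)/\tau$, and observe that each factor is positive on $(0,1)$, so it suffices to check the sign of each partial derivative separately. Differentiating, $g'(p) = -1/p^2 < 0$ and $h'(\tau) = -1/\tau^2 < 0$ on $(0,1)$, so $f$ strictly decreases in both $p$ and $\tau$; meanwhile $\partial f/\partial \mu = g(p)h(\tau) > 0$, so $f$ strictly increases in $\mu$, hence decreases as $\mu$ decreases. Each of these three elementary observations directly gives one of the three claims (1)--(3) in the corollary statement.

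The only subtlety worth a sentence of care is that the comparative statics are restricted to the moderate-price regime $p_l < \tau < p_h$, whose definition itself depends on $(\mu,p)$ through $p_l = (1-p)(1-\mu)/(p\mu+(1-p)(1-\mu))$ and $p_h = p\mu/(p\mu+(1-p)(1-\mu))$. So the step I would treat most carefully is to verify that the perturbations in (1)--(3) can be performed while keeping the resulting game in the moderate regime (which is the regime in which the formula from Theorem~\ref{thm:asym_correctness} applies); this is true because the moderate regime is an open condition in $(\mu,p,\tau)$, so any sufficiently small perturbation in the direction specified remains moderate. The remainder is just the monotonicity argument above.

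Since the proof is essentially a one-line calculation once Theorem~\ref{thm:asym_correctness} is in hand, there is no real obstacle to overcome. The heavy lifting has already been done in the derivation of the explicit formula for $\theta(\mu,p,\tau)$; the corollary is simply harvesting its monotonicity structure.
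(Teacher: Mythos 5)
Your proposal is correct and matches the paper's (implicit) argument: the paper simply asserts that the comparative statics ``follow immediately'' from the failure probability $\frac{1-p}{p}\frac{1-\tau}{\tau}\mu$ established in Theorem~\ref{thm:asym_correctness}, which is exactly the factor-by-factor monotonicity check you carry out. Your additional remark that the moderate-price regime is open in $(\mu,p,\tau)$ is a sensible bit of care that the paper omits.
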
.

Below we provide an outline of the proof of Theorem \ref{thm:asym_correctness} while relegating the full proof to  Appendix \ref{sec:proofs}. The proof leverages the intuition hinges that any player, conditional on the actual state of nature, is (almost) non-pivotal (similar to \cite{AlNajjar2000} and \cite{Levine1995}). In other words, whenever the population is large enough, each individual player deems her own action to have impact on the probability of supply, conditional on knowing the state of nature. Thus, in each state $\omega$ the probability of supply, given her contribution and the state $\omega$ is approximately equal $Pr(c^{\omega}_n\geq B)$.%

\noindent
{\bf Proof Outline of Theorem \ref{thm:asym_correctness}:}
Consider the sequence of games $\{\Gamma(n,\frac{n}{2},\mu,p,\tau)\}_{n=2}^{\infty}.$
By Lemma \ref{lem:asympt_probs}, the corresponding sequence of equilibrium strategies for low players converges to
\begin{equation}\label{eq:assym_lambda}
\limninf\sigma^*_n(l) = \frac{2p-1}{2p}.
\end{equation}

Let $\alpha^{\tilde\omega}=Pr_{\sigma}(a_i=1|\omega=\tilde{\omega})$ be the probability that an arbitrary player contribute in the state ${\tilde\omega}$. Using equation \eqref{eq:assym_lambda} and the fact that high players necessarily contribute we get that $\alpha^H$ converges to $\frac{3p-1}{2p}>\frac{1}{2}.$ This implies that the probability for a successful campaign, conditional on the state $H$ approaches one.

Using similar computations and relying on the indifference of the low players, we can show that whenever the state is $L$ the probability of success approaches
$ \frac{1-p}{p}\frac{\mu}{1-\mu}\frac{1-\tau}{\tau}.$  Combining these two computations yields a lower bound:
$$\theta(\mu,p,\tau) \ge 1-\frac{1-p}{p}\frac{1-\tau}{\tau}\mu.$$

To show the opposite inequality consider an arbitrary sequence $\{B_n\}$ and assume that the following three sequences converge:
$\{\theta(n,B,\mu,p,\tau)\}, \{Pr(c^H_n\geq B \}_n$ and $\{Pr(c^L_n\geq B)\}_n$ (otherwise, consider a sub-sequence). Let us denote the corresponding limits by $\theta^*, x^*$ and $y^*$. By the definition of the correctness index and by Lemma \ref{lem:asympt_probs} we get,
\begin{equation}\label{eq:correct_ll1}
\theta^* = \mu x^* + (1-\mu)(1-y^*).
\end{equation}

Recall that a `low' player mixes and so is indifferent between the two actions. Taking the limit of the indifference equation of the `low' players yields:
\begin{equation}\label{eq:correct_ll2}
(1-p)\mu x^*(1-\tau)-p(1-\mu)y^*\tau=0.
\end{equation}
By equations \eqref{eq:correct_ll1} and \eqref{eq:correct_ll2},  the asymptotic correctness value is bounded above by the solution for the following linear program:
\begin{equation}
\begin{aligned}
& {\text{max}}
& & \mu x^*+(1-\mu)(1-y^*) \\
& \text{s.t.} & & 1\geq x^*,y^*\geq 0 \\
& & &  (1-p)\mu x^*(1-\tau)-(1-\mu)py^*\tau=0, \\
\end{aligned}
\end{equation}
which is $1-\frac{1-p}{p}\frac{1-\tau}{\tau}\mu$.
\qed

We now to compute the Participation index for large markets when prices are moderate:

\begin{theorem}\label{thm:revenue} For any large crowdfunding game with prior $\mu$, signal quality $p$ and a moderate price $\tau$, the participation index is given by:
	\begin{equation}\label{eq:revenue}
	R(\mu,p,\tau) =\mu(1+\frac{1-p}{p}\frac{1-\tau}{\tau})
	\end{equation}
\end{theorem}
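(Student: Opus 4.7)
The plan is to compute, for each sequence $B_n$ with $B_n/n\to q\in(1-p,1)$, the limit $R_\infty(q)$ of the participation index, and then take the supremum over $q$, which will be approached as $q\to 1^-$ and equal $\mu\bigl(1+\tfrac{1-p}{p}\tfrac{1-\tau}{\tau}\bigr)$. By Lemma~\ref{lem:asympt_probs} the low-signal equilibrium mixing probability converges to $\lambda=(q-(1-p))/p$; writing $\alpha^\omega_n := Pr_{\sigma^*}(a_i=1\mid\omega)$, a direct computation yields $\alpha^H_n\to\alpha^H=\tfrac{(2p-1)+(1-p)q}{p}$, which is strictly above $q$ since $p>1/2$ and $q<1$, while $\alpha^L_n\to\alpha^L=q$, sitting precisely at the threshold.

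I would decompose $R(n,B_n,\cdot)=\mu\,E[c_n/n\,\chi(c_n\ge B_n)\mid H]+(1-\mu)\,E[c_n/n\,\chi(c_n\ge B_n)\mid L]$. In state $H$, the law of large numbers together with $\alpha^H>q$ gives $Pr(c_n\ge B_n\mid H)\to 1$ and, by bounded convergence, $E[c_n/n\,\chi\mid H]\to\alpha^H$. For state $L$, I would invoke the asymptotic non-pivotality argument from the proof of Theorem~\ref{thm:asym_correctness}: passing the low-player indifference condition $(1-p)\mu x^*(1-\tau)=p(1-\mu)y^*\tau$ to the limit with $x^*=1$ gives
\begin{equation*}
y^*:=\lim Pr(c_n\ge B_n\mid L)=\tfrac{(1-p)\mu(1-\tau)}{p(1-\mu)\tau}.
\end{equation*}
Splitting $E[c_n/n\,\chi(c_n\ge B_n)\mid L]=q\,Pr(c_n\ge B_n\mid L)+E[(c_n/n-q)\,\chi(c_n\ge B_n)\mid L]$ and bounding the second (``overshoot'') term by $O(1/\sqrt{n})$ via a CLT estimate on the positive part of the centered binomial fluctuation, the state-$L$ contribution converges to $q\,y^*$.

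Combining, $R_\infty(q)=\mu\alpha^H+(1-\mu)q\,y^*=\tfrac{\mu}{p}\bigl(2p-1+\tfrac{(1-p)q}{\tau}\bigr)$, which is strictly increasing in $q$ on $(1-p,1)$; letting $q\to 1^-$ and collecting fractions gives $\mu\bigl(1+\tfrac{1-p}{p}\tfrac{1-\tau}{\tau}\bigr)$, matching the target. The lower bound on $R(\mu,p,\tau)$ follows by an $\varepsilon$-argument: for any $\varepsilon>0$ choose $q_\varepsilon\in(1-p,1)$ with $R_\infty(q_\varepsilon)\ge \mu(1+\cdots)-\varepsilon$ and take $B_n=\lfloor q_\varepsilon n\rfloor$, so $\max_B R(n,B,\cdot)\ge R(n,B_n,\cdot)\to R_\infty(q_\varepsilon)$; the matching upper bound follows by extracting a subsequence of the optimizers $B^*_{n_k}/n_k\to q^*$ and identifying the limit with $R_\infty(q^*)$. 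The main obstacle is the state-$L$ bookkeeping: since $\alpha^L_n\to q$ from a $1/\sqrt{n}$-distance calibrated by the indifference condition, tracking the conditional overshoot and $y^*$ simultaneously through the CLT and non-pivotality arguments is the technical heart of the proof.
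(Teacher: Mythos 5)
Your lower bound is sound and follows essentially the same route as the paper: restrict to thresholds with $B_n/n\to q>1-p$, use Lemma \ref{lem:asympt_probs} to get the limiting mixture, combine the low player's indifference condition with $Pr(c_n\ge B_n\mid H)\to 1$ and non-pivotality to extract $y^*=\lim Pr(c_n\ge B_n\mid L)=\frac{(1-p)\mu(1-\tau)}{p(1-\mu)\tau}$, and let $q\to 1^-$. Your per-$q$ limit $R_\infty(q)=\mu\alpha^H+(1-\mu)q\,y^*$ is in fact slightly sharper than the paper's per-$q$ bound $q\bigl(\mu+(1-\mu)y^*\bigr)$ (the paper simply lower-bounds $c_n/n$ by $B_n/n$ on the success event rather than computing the exact conditional limit), but both give the same supremum as $q\to1$. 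The overshoot estimate in state $L$ is fine; Chebyshev already gives $E[\,|c_n/n-q|\,]\to0$, so the CLT is not even needed there.

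The gap is in the upper bound. You derive $R_\infty(q)$ only for $q\in(1-p,1)$, yet the subsequential limit $q^*$ of the optimizers $B^*_{n}/n$ may lie in $[0,1-p]$. There the equilibrium is $\psi=1,\lambda=0$, the indifference identity that pins down $y^*$ is unavailable (the low player's payoff from contributing is strictly negative), $\alpha^L\to 1-p\ge q^*$ so the campaign succeeds in state $L$ with probability tending to one, and the participation index converges to $\mu p+(1-\mu)(1-p)$ rather than to $\mu\alpha^H+(1-\mu)q^*y^*$. Closing the argument therefore requires the comparison $\mu p+(1-\mu)(1-p)\le\mu\bigl(1+\frac{1-p}{p}\frac{1-\tau}{\tau}\bigr)$, which the paper's outline asserts ``as prices are moderate'' and which its appendix proof sidesteps (the LP constraint used there also presumes low-player indifference). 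This comparison is exactly where the content lies: it reduces to $\frac{\mu}{p}\cdot\frac{1-\tau}{\tau}\ge 1-2\mu$, which can fail for moderate prices when $\mu$ is small and $p$ is large --- e.g.\ $\mu=0.1$, $p=0.9$, $\tau=0.49<p_h=0.5$ gives $\mu p+(1-\mu)(1-p)=0.18$ against a target of roughly $0.112$. So the omitted case is not a routine loose end, and your proof (like the paper's) does not dispose of it.
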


Note that participation is always greater the $\mu.$ Furthermore it increases as the price decreases and as the prior (for the good state) increases. Perhaps less intuitive is the conclusion that penetration decreases as the signal, $p$, becomes more accurate. A possible explanation is that with less accuracy the `low' players put more emphasis on the aforementioned social insurance. This is manifested in equation \eqref{eq:assym_lambda} which shows that the contribution probability of such players increases in  $p$.

\noindent
{\bf Proof Outline of Theorem \ref{thm:revenue}:} Let $\{\Gamma(n,qn,\mu,p,\tau)\}_{n=1}^{\infty}$ be a sequence of crowdfunding games. We discuss the two different cases, $q >1-p$ and $q\le 1-p$, separately.

\noindent
{\bf Case 1, $q >1-p$:}  By Lemma \ref{lem:asympt_probs} \begin{equation}\label{eq:indif_limit}
\limninf\sigma^*_n(l)=\frac{q-(1-p)}{p}>0.
\end{equation}
Let $\alpha^{\tilde\omega}_n = Pr_{\sigma^*_n}(a_i=1|\omega=\tilde{\omega})$ be the probability that an arbitrary player contributes in the state $\tilde{\omega}.$ Using equation \eqref{eq:indif_limit} and the fact that high players necessarily contribute we get that $\alpha^H_n$ converges to $p+(1-p)\frac{q-(1-p)}{p}>q.$ This implies that the probability for a successful campaign, conditional on the state $H$ approaches one.
Using similar computations and relying on the indifference of the low players and the observation that in large games, players are (almost) non-pivotal, we can show that whenever the state is $L,$ the probability of success approaches $\frac{1-p}{p}\frac{\mu}{1-\mu}\frac{1-\tau}{\tau}.$
Combining these two computations yields the unconditional probability of a successful campaign:
\begin{equation}\label{eq:mod_price_limit_sucess_prob}
\begin{split}
&\lim_{n\rightarrow \infty} Pr(c_n\ge B_n)=\lim_{n\rightarrow \infty} Pr(\frac{c_n}{n}\ge q)=\\
&\mu+(1-\mu)\frac{1-p}{p}\frac{\mu}{1-\mu}\frac{1-\tau}{\tau}
=\mu(1+\frac{1-p}{p}\frac{1-\tau}{\tau}).
\end{split}
\end{equation}

The Participation index is the expected number of contributions conditional on the campaign's success and therefore
$$
R(\mu,p,\tau)\ge \lim_{n\rightarrow \infty} R(n,B_n,\mu,p,\tau)\ge \lim_{n\rightarrow \infty} Pr(c_n\ge B_n) =  \mu(1+\frac{1-p}{p}\frac{1-\tau}{\tau}).
$$

In addition, for any $q$ the expected number of contribution conditional on a successful campaign is bounded below:
$$
\lim_{n\rightarrow \infty} R(n,B_n,\mu,p,\tau)\ge q(\mu(1+\frac{1-p}{p}\frac{1-\tau}{\tau})).
$$

Maximizing over $q>1-p$ yields $R(\mu,p,\tau)\ge \mu(1+\frac{1-p}{p}\frac{1-\tau}{\tau})$, as desired.

\noindent
{\bf Case 2, $q \le 1-p$:}
By Lemma  \ref{lem:asympt_probs} `low' players opt-out and only high players contribute. Thus, the expected number of contributions conditional on success equals the expected number of `high' players which yields an upper bound,
$$
\lim_{n\rightarrow \infty}R(n,qn,\mu,\tau)\le p\mu+(1-p)(1-\mu).
$$
As prices are moderate, this implies $p\mu+(1-p)(1-\mu)\le\mu(1+\frac{1-p}{p}\frac{1-\tau}{\tau})$, in the second case.

Combining the two cases yields the desired result.
\qed

\subsection{Expensive prices}

The price in the crowdfunding game is expensive whenever it is high enough such that none of the players would buy it without any additional insurance. Formally, $\tau > Pr_{\mu}(\omega=H|s_i=h).$

In contrast with the two previous cases, a symmetric non-trivial equilibrium need not exist when the population is small.
However, by Theorem \ref{thm:gen_atmost_one_eq}, when we consider large games, existence of exactly one non-trivial, symmetric, Bayes-Nash equilibrium is guaranteed.

\begin{theorem}\label{lem:complement_11}
	Let $\{\Gamma(n,B_n,\mu,p,\tau)\}_n$ be a sequence of expensively priced crowdfunding games such that $\lim_{n\rightarrow\infty} \frac{B_n}{n}=q$ for some $q\in[0,1]$. Then the limit equilibrium strategy is:
	\begin{equation}
	\lim_{n\rightarrow\infty} \sigma^*_n(l)=\begin{cases}
	0&\mbox{ if } q\leq 1-p\\
	\frac{q-(1-p)}{p}&\mbox{ otherwise}
	\end{cases}\mbox{ and }
	\lim_{n\rightarrow\infty} \sigma^*_n(h)= 	
	\begin{cases}
	\frac{q}{1-p}&\mbox{ if } q\leq 1-p\\
	1&\mbox{ otherwise}
	\end{cases}
	\end{equation}
\end{theorem}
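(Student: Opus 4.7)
The plan is to classify the form of the unique non-trivial symmetric equilibrium in the expensive regime (whose existence for large $n$ is guaranteed by Theorem~\ref{thm:gen_atmost_one_eq}) and then pass to the limit using a near-pivotal/CLT argument analogous to the one sketched in the proof of Theorem~\ref{thm:asym_correctness}.

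First I would prove a structural lemma: because $\tau>p_h>p_l$, neither signal type contributes absent social insurance, and whenever the pivotal event has positive probability a high-signal player's expected payoff from contributing strictly exceeds that of a low-signal player. Consequently any non-trivial symmetric equilibrium falls into exactly one of two cases: either (a) $\sigma^*_n(l)=0$ and $\sigma^*_n(h)\in(0,1]$ (with high-type indifference when $\sigma^*_n(h)<1$), or (b) $\sigma^*_n(h)=1$ and $\sigma^*_n(l)\in(0,1)$ with low-type indifference. The case $\sigma^*_n(l)=1$ is excluded because $\tau>p_l$ makes contributing strictly unprofitable for a low type when everyone always contributes.

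Next I pass to a subsequence along which $\sigma^*_n(h),\sigma^*_n(l),P_H^n:=\Pr_{\sigma^*_n}(c^H_n\ge B_n)$ and $P_L^n:=\Pr_{\sigma^*_n}(c^L_n\ge B_n)$ all converge. The near-pivotal approximation reduces the indifference condition for the mixing type $s\in\{l,h\}$ to
$$\Pr(H\mid s)(1-\tau)\,P_H \;=\; \Pr(L\mid s)\tau\,P_L,$$
so $P_L^n/P_H^n\to r_s:=\Pr(H\mid s)(1-\tau)/(\Pr(L\mid s)\tau)\in(0,1)$ (strictly less than $1$ because $\tau>p_s$). Combined with $P_L^n\le 1$ and non-triviality ruling out $P_H^n\to 0$, this forces $P_H^n\to 1$ and $P_L^n\to r_s$. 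A local-CLT analysis of the binomial mixture $c^\omega_n$ then requires the expected per-capita contribution in state $L$ to converge to $q$. In case~(a) that rate equals $(1-p)\sigma^*_n(h)$, giving $\sigma^*_n(h)\to q/(1-p)$, which is feasible (i.e., at most $1$) exactly when $q\le 1-p$, and one checks $p\cdot q/(1-p)>q$, consistent with $P_H^n\to 1$. In case~(b) the rate equals $(1-p)+p\sigma^*_n(l)$, giving $\sigma^*_n(l)\to (q-(1-p))/p$, feasible exactly when $q>1-p$. Case~(a) is inconsistent with $q>1-p$ (it would demand $\sigma^*_n(h)>1$) and case~(b) with $q<1-p$ (it would demand $\sigma^*_n(l)<0$), so the regimes partition cleanly at $q=1-p$ and uniqueness of the subsequential limits upgrades to convergence of the original sequence.

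The main obstacle is the large-deviation/CLT step that upgrades the ratio identity $P_L^n/P_H^n\to r_s\in(0,1)$ into the statement that the expected per-capita contribution in state $L$ \emph{converges} to $q$. Should that rate stay bounded away from $q$ from below, $P_L^n\to 0$ exponentially and indifference would then force $P_H^n\to 0$, contradicting non-triviality; should it stay bounded away from above, $P_L^n\to 1$ would violate $r_s<1$. Ruling out both alternatives and extracting the leading-order mixing probability demands a local central-limit / Berry--Esseen estimate for the binomial-mixture tail around the threshold, and this is the technical heart of the proof.
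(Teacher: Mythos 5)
Your proposal is correct and follows essentially the same route as the paper's proof: the type-monotonicity lemma (the paper's Lemma \ref{lem:psi_ge_lambda}) reduces the equilibrium to your cases (a) and (b), and then the limit strategies are pinned down by combining the mixing type's indifference condition with the law-of-large-numbers dichotomy for the success probability in state $L$. The one place you diverge is in overestimating the final step: you flag a local-CLT/Berry--Esseen estimate as ``the technical heart,'' but no such refinement is needed. Since the target ratio $r_s=\Pr(H\mid s)(1-\tau)/(\Pr(L\mid s)\tau)$ lies \emph{strictly} in $(0,1)$, the crude weak-LLN dichotomy already suffices: if the per-capita contribution rate in state $L$ stayed bounded away from $q$ on either side, $\varphi(\alpha^L_n,n-1,B_n-1)$ would converge to $0$ or to $1$, and either limit is incompatible with the equilibrium (in)equalities once one invokes the expensive-price condition $\tau>p_h$ or non-triviality. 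This is exactly how the paper argues, by contradiction on each hypothesized limit of $\sigma^*_n$, rather than by your direct subsequence-and-ratio derivation; the two packagings are interchangeable. One small sub-case worth making explicit in your write-up: within case (a) with $q<1-p$ you should separately rule out $\sigma^*_n(h)=1$ (no indifference there), which follows because it would give $\varphi(\alpha^L_n,\cdot,\cdot)\to 1$ and hence $\tau\le p_h$, contradicting the expensive-price assumption.
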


The proof of Theorem \ref{lem:complement_11} is relegated to Appendix \ref{sec:expensive_price_proofs}.

By Theorem \ref{lem:complement_11}, the equilibrium strategy depends on how high the threshold is. That is, even when the price is expensive, the equilibrium may take a similar form as that of the moderate price case whereby high players necessarily contribute while low players mix. However, for certain threshold levels we observe a different form of equilibrium, whereby low players opt-out while high players rely on the social insurance and mix.

The value of the two indices is given in the two last theorems. The main ideas underlying these proofs are similar to the analysis of the moderate price case.

\begin{theorem}\label{thm:correct_asym_game_e_price}
	\begin{equation}
	\theta(\mu,p,\tau)=
	1-\frac{1-p}{p}\frac{1-\tau}{\tau}\mu.
	\end{equation}
\end{theorem}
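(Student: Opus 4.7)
The plan is to adapt the proof of Theorem~\ref{thm:asym_correctness} to the expensive-price regime, invoking Theorem~\ref{lem:complement_11} in place of Lemma~\ref{lem:asympt_probs}. The key observation is that in both subregimes of Theorem~\ref{lem:complement_11} at least one of the two types mixes in the limit, so its indifference condition combined with the (approximate) non-pivotality of individual contributions in large games yields a linear relation between $x^{*}=\lim_{n} \Pr_{\sigma^{*}_n}(c^H_n \ge B_n)$ and $y^{*}=\lim_{n} \Pr_{\sigma^{*}_n}(c^L_n \ge B_n)$. Since the correctness index converges to $\mu x^{*}+(1-\mu)(1-y^{*})$, sharp bounds follow by maximising this linear objective subject to the indifference constraint, exactly as in the moderate-price proof.

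For the lower bound, I fix any $q \in (1-p, 1)$ and set $B_n = \lfloor qn \rfloor$. By Theorem~\ref{lem:complement_11}, $\sigma^{*}_n(l) \to (q-(1-p))/p$ while high players contribute, so the per-capita expected contribution in state $H$ tends to $p + (1-p)(q-(1-p))/p$, which strictly exceeds $q$ for $p > 1/2$ and $q < 1$; hence $x^{*} = 1$. The low-player indifference, together with non-pivotality, rearranges in the limit to $(1-p)\mu(1-\tau)x^{*} = p(1-\mu)\tau y^{*}$, so that $y^{*} = \frac{(1-p)\mu(1-\tau)}{p(1-\mu)\tau} \in (0,1)$ (using $\tau > p_h > p_l$), yielding $\theta(n, B_n, \mu, p, \tau) \to 1 - \frac{1-p}{p}\frac{1-\tau}{\tau}\mu$.

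For the matching upper bound, I take an arbitrary sequence $\{B_n\}$ and, passing to a subsequence, assume $B_n/n \to q \in [0,1]$ with $x_n, y_n$ convergent. If $q > 1-p$, the analysis is identical to the lower-bound computation and yields exactly the target. If $q < 1-p$, Theorem~\ref{lem:complement_11} says that the high players mix, with $\sigma^{*}_n(h) \to q/(1-p)$; the per-capita mean in state $H$ then tends to $pq/(1-p) > q$, forcing $x^{*} = 1$, while the high-player indifference gives $p\mu(1-\tau)x^{*} = (1-p)(1-\mu)\tau y^{*}$, so $y^{*} = \frac{p\mu(1-\tau)}{(1-p)(1-\mu)\tau} < 1$ by expensiveness. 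The resulting correctness $1 - \frac{p}{1-p}\frac{1-\tau}{\tau}\mu$ is no larger than the target, because $p \ge 1/2$ implies $p/(1-p) \ge (1-p)/p$. Boundary cases are dispatched directly: at $q = 0$ the campaign is asymptotically certain and $\theta \to \mu$, which is dominated by the target since expensiveness and $p \ge 1/2$ jointly give $\frac{(1-p)\mu}{p(1-\mu)} \le \frac{\tau}{1-\tau}$; at $q = 1$ the success probability vanishes; and $q = 1-p$ is handled by pushing the approaching subsequence into the adjacent regime.

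The principal obstacle is justifying non-pivotality rigorously in the regime $q < 1-p$, where the per-capita mean in state $L$ converges to exactly $q$: this puts the equilibrium at the critical threshold, making $y^{*}$ a delicate object determined by the $O(1/\sqrt{n})$ deviation of $\sigma^{*}_n(h)$ from its nominal limit $q/(1-p)$, and one must verify that a single player's contribution shifts $\Pr(c^L_n \ge B_n)$ by an amount vanishing relative to $y^{*}$ so that the indifference equation passes cleanly to the limit.
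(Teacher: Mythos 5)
Your proposal is correct and follows essentially the same route as the paper: achievability via a threshold sequence with $q>1-p$ combined with the low type's indifference condition and $\varphi(\alpha^H,\cdot)\to 1$, and the upper bound via the dichotomy $q^*\lessgtr 1-p$, where the subcritical case is dispatched by the high type's indifference and the comparison $\frac{p}{1-p}\ge\frac{1-p}{p}$. Your closing remark about non-pivotality at the critical threshold when $q<1-p$ correctly identifies the one step the paper itself treats only informally (via Lemma~\ref{lem:almost_unpivotal} and the endogenous $O(1/\sqrt{n})$ adjustment of $\sigma^*_n(h)$), so no substantive gap separates the two arguments.
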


\begin{theorem}\label{thm:part_asym_game1}\hfill
	\begin{itemize}
		\item  If $\mu<\frac{1}{3}$ and $p\le\sqrt{3}-1,$ or if $\mu<\frac{1}{3}, p>\sqrt{3}-1$ and $\tau>\frac{2\mu }{(1-\mu)p+2(1-p)\mu)}$ then	 	
		\begin{equation*}
		R(\mu,p,\tau)=\mu p+(1-\mu)\frac{1-p}{2}=\frac{(3\mu-1)p+(1-\mu)}{2}.
		\end{equation*}
		\item Otherwise,	
		\begin{equation*}
		R(\mu,p,\tau)=
		\mu(1+\frac{1-p}{p}\frac{1-\tau}{\tau})
		\end{equation*}	
	\end{itemize}
\end{theorem}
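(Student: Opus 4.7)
The plan is to mirror the structure of the proof of Theorem \ref{thm:revenue}: for each sequence $\{B_n\}$ with $B_n/n \to q \in (0,1]$, compute the limit $R(q) = \lim_n R(n,B_n,\mu,p,\tau)$ using the equilibrium strategies from Theorem \ref{lem:complement_11} together with the non-pivotal-player approximation, and then maximize over $q$.

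The regime $q > 1-p$ is essentially identical to the moderate case. Theorem \ref{lem:complement_11} gives $\sigma^*_h = 1$ and $\sigma^*_l = (q-(1-p))/p$, so the low-player indifference pins $y^* = \frac{(1-p)\mu(1-\tau)}{p(1-\mu)\tau}$, and since $\alpha^H > q$ we have $x^* = 1$. Concentration of $c^L_n/n$ around $\alpha^L = q$ then yields $R(q) = \frac{\mu(2p-1)}{p} + \frac{\mu(1-p)q}{p\tau}$, which is increasing in $q$ with limiting value $\mu(1 + \frac{1-p}{p}\frac{1-\tau}{\tau})$ as $q \to 1^-$. The regime $q < 1-p$ is genuinely new: here high players mix at $\sigma^*_h = q/(1-p)$ while low players opt out, so the \emph{high}-player indifference equation applies and pins $y^* = \frac{p\mu(1-\tau)}{(1-p)(1-\mu)\tau}$; a parallel computation yields $R(q) = \frac{\mu p q}{(1-p)\tau}$, with supremum $\mu p/\tau$ as $q \to (1-p)^-$. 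At the single boundary point $q = 1-p$, whenever the pure profile $(\sigma^*_h = 1,\sigma^*_l = 0)$ is itself a BNE in the expensive range, the identity $\alpha^L = 1-p$ forces $y^* = 1/2$ by the CLT, producing the first-case value $\mu p + (1-\mu)(1-p)/2$.

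The participation index is then the largest of these candidate values. A direct algebraic manipulation (using the identity $(1-\mu)p + 2(1-p)\mu = p(1-3\mu) + 2\mu$) shows that $\mu p + (1-\mu)(1-p)/2 > \mu(1 + \frac{1-p}{p}\frac{1-\tau}{\tau})$ is equivalent to $\mu < 1/3$ together with $\tau > \frac{2\mu}{(1-\mu)p + 2(1-p)\mu}$; this is where the explicit $\tau$-condition in the theorem statement comes from. The role of the $p$-threshold $\sqrt{3}-1$ is to demarcate the parameter range in which this $\tau$-condition is automatically implied by the expensive-price requirement $\tau > p_h$, and hence may be dropped from the hypothesis.

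The principal obstacle I anticipate is controlling the third candidate $\mu p/\tau$: this arises as a supremum realized only in the limit as $q \to (1-p)^-$, so I must show both that it is dominated by whichever of the two stated candidates is maximal throughout the expensive regime, and that the discontinuity of $R(\cdot)$ at $q = 1-p$ does not prevent $\lim_n \max_B R(n,B,\mu,p,\tau)$ from realizing $\sup_q R(q)$. A secondary subtlety is verifying that the pure profile at $q = 1-p$ is indeed the unique symmetric non-trivial BNE in the relevant sub-range, using that in the expensive regime, low players strictly prefer opting out and high players weakly prefer contributing when the $L$-state success probability equals $1/2$.
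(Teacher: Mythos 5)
Your overall strategy---split on $q=\limninf B_n/n$ relative to $1-p$, compute a per-$q$ limit of $R(n,B_n,\mu,p,\tau)$ from Theorem \ref{lem:complement_11}, and maximize over $q$---is exactly the paper's, and your treatment of $q>1-p$ agrees with its Case~1 (your formula $R(q)=\frac{\mu(2p-1)}{p}+\frac{\mu(1-p)q}{p\tau}$ is a correct refinement of the paper's sandwich bound and has the same supremum $R_1=\mu(1+\frac{1-p}{p}\frac{1-\tau}{\tau})$ at $q\to 1$). The substantive divergence is in the regime $q<1-p$, and it changes the answer. The paper does \emph{not} use the high-player indifference there: it sets $\alpha^L=(1-p)\frac{q}{1-p}=q$ and invokes the DeMoivre--Laplace theorem to get $Pr(c^L_n\ge B_n)\to\frac12$, hence $R(q)=\frac{\mu pq}{1-p}+(1-\mu)\frac{q}{2}$ and the candidate $R_2=\mu p+(1-\mu)\frac{1-p}{2}$ as $q\to(1-p)^-$. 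You instead impose the indifference of the mixing high type, which (since $\varphi(\alpha^H_n,n-1,B_n-1)\to1$) forces $\varphi(\alpha^L_n,n-1,B_n-1)\to y^*=\frac{p\mu(1-\tau)}{(1-p)(1-\mu)\tau}$ and gives $R(q)=\frac{\mu pq}{(1-p)\tau}$. The two are incompatible unless $y^*=\frac12$: for large $n$ Theorem \ref{lem:complement_11} puts $\sigma^*_n(h)$ strictly inside $(0,1)$, so the indifference equation must hold, and the equilibrium $\sigma^*_n(h)$ equals $q/(1-p)$ only up to a $\Theta(1/\sqrt{n})$ correction that is tuned precisely so that $\varphi(\alpha^L_n,\cdot)$ lands on $y^*$ rather than on the naive CLT value $\frac12$. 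Your reading is the internally consistent one, and it is the one the paper itself uses in the proof of Theorem \ref{thm:correct_asym_game_e_price}, where the same limit $\limninf\varphi(\alpha^L_n,n-1,B^*_n-1)$ for $q^*\le1-p$ is derived from indifference as $\frac{p\mu(1-\tau)}{(1-p)(1-\mu)\tau}$.

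The consequence---which your ``principal obstacle'' paragraph correctly senses but does not resolve---is that your route cannot terminate in the theorem as stated, because the stated first-case value rests on the step that fails. Under your Case-2 computation the interior candidate is $\sup_{q<1-p}R(q)=\frac{\mu p}{\tau}$, and $R_2$ survives only at the boundary $q=1-p$ when the pure profile $(\sigma(h)=1,\sigma(l)=0)$ is itself the equilibrium there, i.e.\ only for $\frac{2(1-p)\mu}{p(1-\mu)+2(1-p)\mu}\le\tau\le\frac{2p\mu}{2p\mu+(1-p)(1-\mu)}$, a strictly smaller set than the theorem's first bullet. Concretely, at $\mu=0.2$, $p=0.75$, $\tau=0.9$ (which satisfies the first bullet) the theorem asserts $R=R_2=0.25$, but the high type must mix at every $q\le1-p$, indifference pins the state-$L$ success probability at $y^*=\frac{1}{12}$, and every candidate is at most $R_1\approx0.207$. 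Two further points to be aware of: your guess that $\sqrt3-1$ marks where the $\tau$-condition is implied by $\tau>p_h$ does not check out ($\frac{2\mu}{(1-\mu)p+2(1-p)\mu}\le p_h$ is equivalent to $2(1-p)(1-\mu)\le p^2(1-3\mu)$, which reduces to $p\ge\sqrt3-1$ only as $\mu\to0$); and the closing sentence of the paper's proof assigns $R_1\ge R_2$ to exactly the parameter set on which the theorem statement asserts $R=R_2$, so the source is internally inconsistent here independently of the CLT issue. In short: your approach is the sound one at the point of divergence, but it proves a different (corrected) statement, not the one quoted.
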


Note that participation index is more than the prior, $\mu$. I addition, note that whenever the equilibrium takes the form where only high players participate the participation decreases as signals become more accurate.

MORAN - ANY MORE INTERESTING COMPARATIVE STATICS? ????????????

\section{Crowdfunding in small populations}\label{sec:finite}

Our theoretical results pertain to the asymptotic case and so are relevant to large markets.  Empirical data suggests that crowdfunding campaigns eventually attract around $100$ contributors   (See \cite{Kuppuswamy2017} and \cite{Mollick2014} ).
For example, in \cite{Kuppuswamy2017}, they find that the average number of contributors to Kickstarter campaigns that took place during March and April of  2012 was $100.32$.  These empirical results  suggest that an initial market size for such campaigns is of the of the order of magnitude of $100-1000$.

In order to validate our theoretical results we compute the equilibrium strategies ( $\sigma^*$) and the value of the to indices for the relevant market size ($n=100$ and $n=1000$). We do so for a variety of parameter values (threshold, signal accuracy and price). The result for a market size of $n=100$ are depicted in Table \ref{tab:n-100} while those for $n=1000$ are depicted in Table \ref{tab:n-1000}. Finally,  Table \ref{tab:n-infty} details the asymptotic results for the corresponding parameter values (recall that the asymptotic results are stated in terms of the optimal threshold).

Our theoretical results for case where prices are cheap hold for any market size and so the calculations we report on below are only for the case of moderate prices ($(p,\tau)$ equal $(0.55,0.5)$,  $(0.75,0.5)$ , $(0.75,0.7)$ ) and high prices ( $(p,\tau)  =(0.55,0.7)$ ).

The tables of results below are partial and, in particular focus on the symmetric prior. The interested reader is referred to Appendix \ref{sec:finite_tab} for the calculations in a wider variety of parameters, including asymmetric priors.

\begin{table}[]
	\centering
	\caption{$\Gamma(n=100,B,\mu=0.5,p,\tau).$ }
	\label{tab:n-100}
	\begin{tabular}{|c|c|c c c c|c c c c|}
		\hline
		$p$ & $B$    & \multicolumn{4}{|c|}{$\tau=0.5$}        & \multicolumn{4}{|c|}{$\tau=0.7$}    \\
		\hline
		&   & $\psi$ & $\lambda$ & $\theta$ & $R$  & $\psi$ & $\lambda$ & $\theta$ & $R$      \\
		\hline
		\multirow{3}{*}{0.55} & 9  & 1   & 0      & 0.5   & 0.5    & 0.126 & 0      & 0.570  & 0.019  \\
		& 44 & 1   & 0.044  & 0.606 & 0.478    & 0.957 & 0      & 0.751  & 0.366       \\
		& 98 & 1   & 0.954  & 0.561 & 0.594      & 1     & 0.866  & 0.518  & 0.037     \\
		\hline
		\multirow{3}{*}{0.75}  & 9  & 1   & 0      & 0.5 & 0.5     & 1     & 0      & 0.5  & 0.5            \\
		& 44 & 1   & 0.211  & 0.854 & 0.469       & 1     & 0.170  & 0.941  & 0.571 \\
		& 98 & 1   & 0.951  & 0.795 & 0.575      & 1     & .931  &0.823 & 0.422\\
		\hline
	\end{tabular}%
\end{table}

\begin{table}[]
	\centering
	\caption{$\Gamma(n=1000,B,\mu=0.5,p,\tau).$ }
	\label{tab:n-1000}
	\begin{tabular}{|c|c|c c c c |c c c c| }
		\hline
		$p$ & $B$    & \multicolumn{4}{|c|}{$\tau=0.5$}        & \multicolumn{4}{|c|}{$\tau=0.7$}    \\
		\hline
		&   & $\psi$ & $\lambda$ & $\theta$ & $R$  & $\psi$ & $\lambda$ & $\theta$ & $R$      \\
		\hline
		\multirow{3}{*}{0.55} & 90  & 1   & 0      & 0.5   & 0.5    & 0.198 & 0      & 0.752  & 0.07          \\
		& 440 & 1   & 0.006  & 0.596 & 0.462    & 0.977 & 0      & 0.745  & 0.384   \\
		& 980 & 1   & 0.968  & 0.586 & 0.845 & 1      & 0.954      & .668& 0.341\\
		\hline
		\multirow{3}{*}{0.75}
		& 90  & 1   & 0      & 0.5 & 0.5       & 1     & 0      & 0.5  & 0.5          \\
		& 440 & 1   & 0.243  & 0.839 & 0.477      & 1     & 0.23  & 0.932  & 0.434   \\
		& 980 & 1   & 0.970  & 0.834 & 0.659  & 1     & .966  &0.929 & 0.565   \\
		\hline
	\end{tabular}%
\end{table}

\begin{table}[]
	\centering
	\caption{Large markets: $\theta(\mu=0.5,p,\tau)$ and $R(\mu=0.5,p,\tau)$ }
	\label{tab:n-infty}
		\resizebox{\textwidth}{!}{%
	\begin{tabular}{|c|c|c|c|c|c|c|c|c|c|}
		\hline
		$p$                   & $q$  & \multicolumn{4}{c|}{$\tau=0.5$}                                        & \multicolumn{4}{c|}{$\tau=0.7$}                                        \\ \hline
		&      & $\psi$ & $\lambda$ & $max\theta(\mu,p,\tau)$ & $max R(\mu,p,\tau)$   & $\psi$ & $\lambda$ & $max\theta(\mu,p,\tau)$ & $max R(\mu,p,\tau)$    \\ \hline
		\multirow{3}{*}{0.55} & 0.09 & 1      & 0         & \multirow{3}{*}{0.590}   & \multirow{3}{*}{0.909} & 0.2    & 0         & \multirow{3}{*}{0.825}   & \multirow{3}{*}{0.675} \\ \cline{2-4} \cline{7-8}
		& 0.44 & 1      & 0         &                          &                        & 0.978  & 0         &                          &                        \\ \cline{2-4} \cline{7-8}
		& 0.98 & 1      & 0.964     &                          &                        & 1      & 0.964     &                          &                        \\ \hline
		\multirow{3}{*}{0.75} & 0.09 & 1      & 0         & \multirow{3}{*}{0.833}   & \multirow{3}{*}{0.667} & 1      & 0         & \multirow{3}{*}{0.929}   & \multirow{3}{*}{0.571} \\ \cline{2-4} \cline{7-8}
		& 0.44 & 1      & 0.253     &                          &                        & 0.978  & 0.253     &                          &                        \\ \cline{2-4} \cline{7-8}
		& 0.98 & 1      & 0.973     &                          &                        & 1      & 0.973     &                          &                        \\ \hline
	\end{tabular}
		}
\end{table}

There are several observations to be made from these tables (and from the additional calculations reported in  Appendix \ref{sec:finite_tab}):
\\
\begin{itemize}
	\item	The most interesting observation is that our asymptotic analysis provides a good approximation for Crowdfunding games with a realistic market size. The computed strategies converge quite fast and the corresponding bounds on the two indices are already quite relevant for these values. This holds both when prices high and more so when prices are moderate. This observation is robust with respect to the value of the signal accuracy and the price.
	\item
	Additionally, we observe that when the signal is weak ($p=0.55$) and the threshold is low ($B\approx\frac{n}{3}$),  low type players always opt-out. Nevertheless, the number of high type players is sufficient to induce production even if $\omega=L$ which causes a rapid deterioration of  the Correcteness index.  As $B$ increases the risk facing low type players decreases and therefore we can see that throughout the table, a higher threshold  $B$ leads to a higher probability that a low player will contribute (a higher $\lambda$)
	\item
	As predicted the correctness index is an increasing with price and decreasing with $\mu.$  Similarly the participation index is an increasing function of the prior $\mu$ and  a decreasing function of prices. This can be verified in Table \ref{tab:good_appx} bellow.
	\item
	Surprisingly, for a variety of parameter combinations the  theoretical predictions are quite accurate even for a very small populations, $n\in\{5,10\}$ (see the table in Appendix \ref{sec:finite_tab} ).
\end{itemize}

In the following table we can see the valuation in the calculations for varying priors. We can see that, as expected, Participation increases with the prior. However,  ceteris paribus, in some cases,  an increase in the prior may induce a decrease in the correctness. Intuitively, this occurs  as  the public signal weight increases in the players' contribution decision.
\FloatBarrier
\begin{table}[]
	\centering
	\caption{Correctness and Participation of $\Gamma(n=100,B=98,\mu,p=0.75,\tau)$}
	\label{tab:good_appx}
	\begin{tabular}{|c|c|c|c|c|c|c|c|c|}
		\hline
		$n$  & $B$ & $\mu$ & $p$  & $\tau$ & $\theta(n,B,\mu,p,\tau)$ & $R(n,B,\mu,p,\tau)$ & $\theta(\mu,p,\tau)$ & $R(\mu,p,\tau)$ \\ \hline
		100  & 50  & 0.2   & 0.75 & 0.5    & 0.946                    & 0.189               & 0.933                & 0.267           \\ \hline
		100  & 50  & 0.5   & 0.75 & 0.5    & 0.852                    & 0.489               & 0.833                & 0.667           \\ \hline
		100  & 50  & 0.7   & 0.75 & 0.5    & 0.776                    & 0.713               & 0.767                & 0.933           \\ \hline
		100  & 50  & 0.2   & 0.75 & 0.7    & 0.978                    & 0.171               & 0.971                & 0.229           \\ \hline
		100  & 50  & 0.5   & 0.75 & 0.7    & 0.940                    & 0.437               & 0.929                & 0.571           \\ \hline
		100  & 50  & 0.7   & 0.75 & 0.7    & 0.911                    & 0.623               & 0.900                & 0.700           \\ \hline \hline
		1000 & 500 & 0.2   & 0.75 & 0.5    & 0.937                    & 0.197               & 0.933                & 0.267           \\ \hline
		1000 & 500 & 0.5   & 0.75 & 0.5    & 0.839                    & 0.497               & 0.833                & 0.667           \\ \hline
		1000 & 500 & 0.7   & 0.75 & 0.5    & 0.769                    & 0.705               & 0.767                & 0.933           \\ \hline
		1000 & 500 & 0.2   & 0.75 & 0.7    & 0.973                    & 0.178               & 0.971                & 0.229           \\ \hline
		1000 & 500 & 0.5   & 0.75 & 0.7    & 0.932                    & 0.448               & 0.929                & 0.571           \\ \hline
		1000 & 500 & 0.7   & 0.75 & 0.7    & 0.903                    & 0.631               & 0.900                & 0.700           \\ \hline
	\end{tabular}
\end{table}

\FloatBarrier

%
%
%
%
\section{Discussion}

In this paper we report theoretical findings about crowdfunding campaigns - strategies, correctness and participation - for large markets (presented in terms of asymptotic results). We then go on to compute outcomes in campaigns where the market size is inspired by empirical findings using field data. The contribution of the computational part is in showing that the theory holds even for markets of small size.

Our model supports variations in the product price and in the prior belief that the product is viable.
The study of asymmetric prior beliefs is of importance as a typical crowdfunding scenario is that of a high-risk product in its pre-development stage. To model this one should consider a low prior for the state of the world where the product is valuable.  Another type of risk is manifested in the  price variation. Expensively priced product embed a greater loss if they are not viable and smaller gains if they are.

When examining the various cases, we found that the risk associated with a high price are, in a sense,  more instrumental for the analysis then the risk conveyed in a low prior.
When prices are sufficiently low we find out that crowdfunding campaigns do not provide any value in terms of sieving out the bad products from the good ones. In fact, for such prices, as intuition suggests, participation is maximal and the correctness index is equal the prior probability.

Typically, the maximal participation decreases as the signal quality improves. However, this is no longer true for sufficiently high prices. In such cases we notice that the correctness index approaches one as the price approaches the maximal value of one.

In our model we show that whenever the product is bad the expected number of contributions roughly equals the campaign's threshold. Thus, for risky products, ones that exhibit high risk, the unconditional participation index is almost equal the threshold.  This is indeed observed in field data as reported in \cite{Kuppuswamy2018} and \cite{Mollick2014}.

A crucial primitive of our model is the information structure, composed of the initial common prior regarding the product quality and the accuracy of the signals available to each of the agents. Our analysis allows for comparative statics and shows how the two indices behave as functions of the informational primitives. Holding the price fixed our model predicts that the aforementioned Correctness index decreases the as the product becomes more risky (a lower prior for the good state). In addition, higher prices induce higher correctness as they decrease players' expected utility from contributing.

\section{Concluding remarks}\label{sec:conc}
Crowdfunding is often used by many entrepreneurs to validate the market demand for innovative products or an art project. We study how well do Crowdfunding campaigns perform in this context. To do so we introduce a vary simple game of incomplete information which we call \textit{the Crowdfunding game}.  We consider two success measures for a Crowdfunding game. First, the `Correctness index' of a game which captures how well information is aggregated, and second, the `Participation index' that reflects how convincing the campaign is. We show that for large populations information is not fully aggregated and we provide bounds on the correctness and penetration index for large populations.

Our results are primarily asymptotic. However, calculations show that these asymptotic bounds provide good approximations for realistic values of populations size, sometimes as small $10$ players. In fact, even when the number of players is finite, all three parameters we measure: $\sigma^*,$ the Correctness index $\theta$ and the Participation index $R$, are quite close to their respective asymptotic values. This observation is robust to the game parameters $\Params$.

\subsection{Future directions of research}

The static model we study is quite elementary and a few natural extensions that could possibly change some of the qualitative results are called for.
\begin{itemize}
\item
Realistically, the value of many goods has a private component and so we would like to study how crowdfunding performs in an environment where the value of the good has some private component, in addition to the common component.
\item
In most campaigns firms offer a menu of bundles (or variants) and prices. In our model we reduced this to a single product. We would like to verify that our reduction is not critical for the qualitative observations that we have \item

\end{itemize}

Finally, as already mentioned in the introduction, most campaigns take place over a period of time and a dynamic model may be called for. We suspect that in such a model the equilibrium analysis will show that most (if not all) the activity takes place in the final stage, in which case our static model serves as a meaningful approximation. Obviously, nothing guarantees that and a reasonable hypothesis is that more knowledgeable players will tend to move earlier as well as more optimistic players.

\bibliographystyle{plain}
\bibliography{crowdfunding}

\begin{thebibliography}{10}

\bibitem{AlNajjar2000}
Nabil~I. Al-Najjar and Rann Smorodinsky.
\newblock {Pivotal Players and the Characterization of Influence}.
\newblock {\em Journal of Economic Theory}, 92(2):318--342, jun 2000.

\bibitem{Alaei2016}
Saeed Alaei, Azarakhsh Malekian, and Mohamed Mostagir.
\newblock {Working Paper A Dynamic Model of Crowdfunding}.
\newblock In {\em Proceedings of the 2016 ACM Conference on Economics and
  Computation}, Maastricht, The Netherlands, 2016. ACM.

\bibitem{Austen-Smith1996}
David Austen-Smith and Jeffrey~S Banks.
\newblock {Information Aggregation , Rationality , and the Condorcet Jury
  Theorem}.
\newblock {\em The American Political Science Review}, 90(1):34--45, 1996.

\bibitem{Chemla2016}
Gilles Chemla and Katrin Tinn.
\newblock {Learning Through Crowdfunding}.
\newblock 2016.

\bibitem{Ellman2014}
Matthew Ellman and Sjaak Hurkens.
\newblock {Optimal Crowdfunding Design}.
\newblock {\em NET Institute Working Paper No. 14-21}, (September), 2014.

\bibitem{Feller1968}
William Feller.
\newblock {\em {An introduction to probability theory and its applications:
  volume I}}, volume~3.
\newblock John Wiley {\&} Sons New York, 1968.

\bibitem{Hemer2011}
Joachim Hemer.
\newblock {A snapshot on crowdfunding}.
\newblock {\em Working Paper}, 2011.

\bibitem{Kumar2017}
Praveen Kumar, Nisan Langberg, and David Zvilichovsky.
\newblock {( Crowd ) funding Innovation , Financing Constraints and Real E ¤
  ects ( Crowd ) funding Innovation , Financing Constraints and Real E ¤
  ects}.
\newblock pages 1--40, 2017.

\bibitem{Kuppuswamy2017}
Venkat Kuppuswamy and Barry~L. Bayus.
\newblock {Does my contribution to your crowdfunding project matter?}
\newblock {\em Journal of Business Venturing}, 32(1):72--89, 2017.

\bibitem{Kuppuswamy2018}
Venkat Kuppuswamy and Barry~L. Bayus.
\newblock {Crowdfunding Creative Ideas: The Dynamics of Project Backers in
  Kickstarter}.
\newblock In Lars {Cumming, Douglas and Hornuf}, editor, {\em The Economics of
  Crowdfunding: Startups, Portals and Investor Behavior}, chapter~8, pages
  151--182. Springer, 2018.

\bibitem{Lei2017}
Yu~Lei, Ali~Alper Yayla, and Surinder~Singh Kahai.
\newblock {Guiding the Herd : The Effect of Reference Groups in Crowdfunding
  Decision Making}.
\newblock In {\em Proceedings of the 50th Hawaii International Conference on
  System Sciences}, pages 1912--1921, 2017.

\bibitem{Levine1995}
By~David~K Levine and Wolfgang Pesendorfer.
\newblock {When Are Agents Negligible?}
\newblock {\em The American Economic Review}, 85(5):1160--1170, 1995.

\bibitem{McLennan1998}
Andrew Mclennan.
\newblock {Consequences of the Condorcet Jury Theorem for Beneficial
  Information Aggregation by Rational Agents}.
\newblock {\em Source: The American Political Science Review}, 92(2):413--418,
  1998.

\bibitem{Mollick2014}
Ethan~R. Mollick.
\newblock {The dynamics of crowdfunding: An exploratory study}.
\newblock {\em Journal of Business Venturing}, 29(1):1--16, 2014.

\bibitem{Mollick2015}
Ethan~R. Mollick.
\newblock {Delivery Rates on Kickstarter}.
\newblock 2015.

\bibitem{Roth2016}
Benjamin~N Roth and Ran~I Shorrer.
\newblock {Making it Safe to Use Centralized Markets : - Dominant Individual
  Rationality and Applications to Market Design}.
\newblock 2016.

\bibitem{Strausz2017}
Roland Strausz.
\newblock {A Theory of Crowdfunding - a mechanism design approach with demand
  uncertainty and moral hazard}.
\newblock {\em American Economic Review}, 107(6):1--40, 2017.

\bibitem{Yang2016}
Yang Yang, Harry~Jiannan Wang, and Gang Wang.
\newblock {Understanding Crowdfunding Processes : a Dynamic Evaluation and
  Simulation Approach}.
\newblock {\em Journal of Electronic Commerce Research}, 17(1):47--64, 2016.

\bibitem{Yum2012}
Haewon Yum, Byungtae Lee, and Myungsin Chae.
\newblock {From the wisdom of crowds to my own judgment in microfinance through
  online peer-to-peer lending platforms}.
\newblock {\em Electronic Commerce Research and Applications}, 11(5):469--483,
  2012.

\end{thebibliography}
\newpage
\appendix

\section{Missing Proofs}\label{sec:proofs}

Before stating our auxiliary lemmas and proofs let us recap some of the relevant notation and introduce some new notation. Given a crowdfunding game, $\Gamma=\Gamma(n,B,\mu,p,\tau)$ all the notation refers to its unique non-trivial symmetric equilibrium strategy $\sigma^*$ and all the random variables pertain to the distribution, $Pr_{\Gamma,\sigma^*}$ over $\Omega\times S^n$,  given by the fundamentals of the game and $\sigma^*$.

\begin{itemize}
\item
For $s\in\{l,h\},$ let  $\sigma^*(s_i)=Pr(a_i=1|s_i)$, is the  probability in which a player with signal $s$ contributes.
\item
For any action vector $(a_1,\dots,a_n) \in \{o0,1\}^n$ let $c_n=\sum_{i=1}^{n}a_i$ be the number of  contributions.
\item
For any $\gamma\in[0,1]$  we let $z(\gamma,n)$ denote a binomial random variable with $n$ trials and a probability for success $\gamma$. Formally,  $z(\gamma,n)\sim Bin(\gamma,n).$
\item
For $k\in\{1\dots n\}$ define $\varphi(\gamma,n,k)= \pr(z(\gamma,n)\geq k).$ In words, it is the probability for $k$ successes or more in $n$ independent trials.
\item
The probability, according to a strategy $\sigma$,  to contribute in state $H$ is
$\alpha^{H}=\pr(a_i=1|\omega)=p\sigma(h)+(1-p)\sigma(l)$.
\item
The probability, according to a strategy $\sigma$,  to contribute in state $L$ is $\alpha^L=\pr(a_i=1|\omega=L)=(1-p)\sigma(h)+p\sigma(l).$
\item
The probability of a successful campaign in the equilibrium of the game $\Gamma(\Params)$, conditional on state $\omega$ is
$Pr(c_n\ge B | \omega) = Pr_{\Gamma,\sigma^*}(c_n\ge B | \omega)=\varphi(\alpha^{\omega},n,B).$
\end{itemize}

The utility of consumer $i$ with signal $s_i$ is:

\begin{eqnarray}\label{eq:Expected_util_gen}
E u_i(a_i=1|s_i)&=&\nonumber\\
& &\pr(\omega=H|s_i)\pr(c_{n-1}\ge B-1|\omega=H)(1-\tau)\\
&-&\pr(\omega=L|s_i)\pr(c_{n-1}\ge B-1|\omega=L)\tau=\nonumber\\
& & \pr(\omega=H|s_i)\varphi(\alpha^H,n-1,B-1)(1-\tau)-\pr(\omega=L|s_i)\varphi(\alpha^L,n-1,B-1)\tau.\nonumber
\end{eqnarray}

The player's expected utility induced by each signal is then:
\begin{eqnarray}\label{eq:Expected_util_gen2}
E u_i(a_i=1|h)&=&\nonumber\\ & & \frac{p\mu}{p\mu+(1-p)(1-\mu)}\varphi(\alpha^H,n-1,B-1)(1-\tau)\\ & &-\frac{(1-p)(1-\mu)}{p\mu+(1-p)(1-\mu)}\varphi(\alpha^L,n-1,B-1)\tau\nonumber\\
E u_i(a_i=1|l)&=&\nonumber\\ & & \frac{(1-p)\mu}{(1-p)\mu+p(1-\mu)}\varphi(\alpha^H,n-1,B-1)(1-\tau)\\
& &-\frac{p(1-\mu)}{(1-p)\mu+p(1-\mu)}\varphi(\alpha^L,n-1,B-1)\tau.\nonumber
\end{eqnarray}

\subsection*{Proof of Theorem \ref{thm:gen_atmost_one_eq}}

%

The following lemma specifies some characteristics of symmetric equilibria. Informally it suggests that the high type player is always more keen about contributing than the low type one.

\begin{lemma}\label{lem:psi_ge_lambda}
	Let $\sigma^*$ is a symmetric non-trivial Bayesian-Nash equilibrium in $\Gamma,$ then
	\begin{equation*}
	\begin{split}
	\mbox{ if } \sigma^*(l)>0&\mbox{ then }\sigma^*(h)=1\\
	\mbox{ if } \sigma^*(h)<1&\mbox{ then }\sigma^*(l)=0\\
	\end{split}
	\end{equation*}
	and
	\begin{equation*}
	\begin{split}
	&\mbox{ if } \sigma^*(l)<1\mbox{ then }\varphi(\alpha^H,n-1,B-1)>\varphi(\alpha^L,n-1,B-1)\\
	&\mbox{ if } \sigma(l)=1\mbox{ then }\varphi(\alpha^H,n-1,B-1)=\varphi(\alpha^L,n-1,B-1)=1.\\
	\end{split}
	\end{equation*}
\end{lemma}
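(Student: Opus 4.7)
\textbf{Proof plan for Lemma \ref{lem:psi_ge_lambda}.}

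The backbone of the argument is the observation that, in \emph{any} non-trivial symmetric profile, a high-signal player strictly prefers contributing more than a low-signal player does. Concretely, I would subtract the two expressions in \eqref{eq:Expected_util_gen2} to obtain
\[
Eu_i(1\mid h)-Eu_i(1\mid l)
=\bigl[\pr(\omega\eq H\mid h)-\pr(\omega\eq H\mid l)\bigr]\varphi(\alpha^H,n-1,B-1)(1-\tau)
+\bigl[\pr(\omega\eq L\mid l)-\pr(\omega\eq L\mid h)\bigr]\varphi(\alpha^L,n-1,B-1)\tau.
\]
Since $p>1/2$, both posterior differences in brackets are strictly positive, so the sign of the expression is controlled by the two $\varphi$-terms. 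Non-triviality of $\sigma^{*}$ says $Pr_{\sigma^{*}}(c_n\ge B)>0$, and a one-step coupling between $z(\alpha^{\omega},n)$ and $z(\alpha^{\omega},n-1)$ gives $\varphi(\alpha^{\omega},n-1,B-1)\ge \varphi(\alpha^{\omega},n,B)$, so at least one of the two $\varphi$-values is strictly positive. Hence $Eu_i(1\mid h)>Eu_i(1\mid l)$, strictly.

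With this strict ordering in hand, the first two clauses follow directly from best-response arguments. If $\sigma^{*}(l)>0$, then a low-signal player must be weakly willing to contribute, so $Eu_i(1\mid l)\ge 0$; combined with the strict inequality this forces $Eu_i(1\mid h)>0$ and hence $\sigma^{*}(h)=1$. Symmetrically, if $\sigma^{*}(h)<1$, then $Eu_i(1\mid h)\le 0$, which forces $Eu_i(1\mid l)<0$ and therefore $\sigma^{*}(l)=0$.

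To handle the last two clauses I would translate properties of $\sigma^{*}$ into properties of $\alpha^H$ vs.\ $\alpha^L$, using the identity $\alpha^H-\alpha^L=(2p-1)(\sigma^{*}(h)-\sigma^{*}(l))$. If $\sigma^{*}(l)=1$, the first clause gives $\sigma^{*}(h)=1$, so every player contributes almost surely and $\varphi(\alpha^H,n-1,B-1)=\varphi(\alpha^L,n-1,B-1)=1$ (provided $B\le n$, which is built into the game). If $\sigma^{*}(l)<1$, I split into two subcases: (a) $\sigma^{*}(l)>0$, where clause~(1) gives $\sigma^{*}(h)=1>\sigma^{*}(l)$; (b) $\sigma^{*}(l)=0$, where non-triviality forces $\sigma^{*}(h)>0$ (otherwise no one contributes and the campaign surely fails). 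In both subcases $\sigma^{*}(h)>\sigma^{*}(l)$, so $\alpha^H>\alpha^L$.

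The remaining step is to convert the strict inequality $\alpha^H>\alpha^L$ into $\varphi(\alpha^H,n-1,B-1)>\varphi(\alpha^L,n-1,B-1)$. This is a standard monotone-coupling fact about binomial tails: for $1\le k\le m$ and $0\le \alpha<\beta\le 1$, one has $Pr(z(\beta,m)\ge k)>Pr(z(\alpha,m)\ge k)$ whenever the event $\{z(\alpha,m)<k\le z(\beta,m)\}$ has positive probability, which holds as long as $0<k\le m$ and $\alpha<1$. I expect this to be the least-interesting but most-delicate part of the argument, since one must verify that the relevant thresholds lie in the non-degenerate range (i.e., that $B-1\ge 1$ and $\alpha^L<1$ in case 3, which both follow from the branching above since $\sigma^{*}(l)<1$ and the non-triviality assumption rules out the degenerate case where the threshold is hit automatically).
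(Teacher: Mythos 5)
Your proposal is correct and follows essentially the same route as the paper's proof: both rest on the observation that the high-signal player's expected payoff from contributing strictly exceeds the low-signal player's (you obtain this by subtracting the two expressions in the utility formula; the paper obtains it by rescaling the posterior weights using $p>\tfrac12$), both then read off the first two clauses from best-response logic, and both derive the $\varphi$-comparison from $\sigma^{*}(h)>\sigma^{*}(l)\Rightarrow\alpha^{H}>\alpha^{L}$ together with strict monotonicity of the binomial tail in the success probability. If anything you are more explicit than the paper about where non-triviality enters (guaranteeing that at least one $\varphi$-term is positive, and that $\sigma^{*}(h)>0$ when $\sigma^{*}(l)=0$); the one small quibble is that the strict inequality in the third clause genuinely requires $B\ge 2$ (for $B=1$ both tails equal $1$), a degenerate case that neither your argument nor the paper's actually rules out.
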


\begin{proof}
	First assume that in equilibrium, the low player contributes with positive probability, that is $\sigma^*(l)>0.$ By equation \eqref{eq:Expected_util_gen2}, the following condition for the low type player must be satisfied,
	$$
	E_{\Gamma,\sigma^*} u_i(a_i=1|s_i=l)\ge 0 \Leftrightarrow \mu(1-p)\varphi(\alpha^H,n-1,B-1)(1-\tau)\ge(1-\mu)p\varphi(\alpha^L,n-1,B-1)\tau.
	$$
	As $p>\frac{1}{2}$ this entails that
	$$
	\mu p\varphi(\alpha^H,n-1,B-1)(1-\tau)> (1-\mu)(1-p)\varphi(\alpha^L,n-1,B-1)\tau.
	$$
	
	Note that if the above condition holds, the high player's expected utility from contributing is strictly positive by equation \eqref{eq:Expected_util_gen2}, therefore, contributing is her  best response when other players play $\sigma^*$ in the game $\Gamma(\Params).$
	
	Next assume that the high player assigns a positive probability to opting-out, that is $\sigma^*(h)<1.$ Then by equation \eqref{eq:Expected_util_gen2}, the following condition for the high type player must be satisfied,
	$$
	E u_i(a_i=1|s_i=h)\le 0 \Leftrightarrow\mu p\varphi(\alpha^H,n-1,B-1)(1-\tau)\le (1-\mu)(1-p)\varphi(\alpha^L,n-1,B-1)\tau.
	$$
	As $p>\frac{1}{2}$ this entails
	$$
	\mu (1-p)\varphi(\alpha^H,n-1,B-1)(1-\tau)< (1-\mu)p\varphi(\alpha^L,n-1,B-1)\tau.
	$$
	Note that by equation \eqref{eq:Expected_util_gen2}, whenever the condition above is satisfied, the utility of the low players is always negative, i.e, $E u_i(a_i=1|s_i=l)<0.$ Therefore, when all other players play $\sigma^*,$ the low players best response is to opt-out.

	Therefore if $\sigma(l)^*<1$ then $\sigma^*(h)>\sigma^*(l)$ and by  the definition of $\varphi(\cdot,\cdot,\cdot),\alpha^\omega$,  $\varphi(\alpha^H,n-1,B-1)>\varphi(\alpha^L,n-1,B-1),$
	and if $\sigma(l)=1$ then $\sigma(h)=1$ and $\varphi(\alpha^H,n-1,B-1)=\varphi(\alpha^L,n-1,B-1)=1.$
\end{proof}

Next we show that in any crowdfunding game there can be at most one symmetric non-trivial equilibrium.


\begin{proposition}\label{prop:low_indif_once}
	Let $\sigma$ be a non-trivial, symmetric strategy in $\Gamma.$
	If $\sigma(l)\in(0,1)$ and	$E_{\Gamma,\sigma} u_i(a_i=1|s_i=l)=0,$ then for any non-trivial, symmetric strategy $\tilde\sigma=(\tilde\sigma(l),\sigma(h))$ such that $\tilde\sigma(l)\in[0,\sigma(l)),$ $E_{\Gamma,\tilde\sigma} u_i(a_i=1|s_i=l)>0;$  and for any  non-trivial, symmetric strategy $\tilde\sigma=(\tilde\sigma(l),\sigma(h))$ such that $\tilde\sigma(l)\in(\sigma(l),1],$ $E_{\Gamma,\tilde\sigma} u_i(a_i=1|s_i=l)<0.$
\end{proposition}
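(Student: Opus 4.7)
The plan is to reduce the proposition to a single-crossing property of the low type's net expected payoff from contributing, viewed as a function of $x := \tilde\sigma(l)$ holding $\tilde\sigma(h)=\sigma(h)$ fixed. Because $\sigma(l)>0$, Lemma \ref{lem:psi_ge_lambda} forces $\sigma(h)=1$, so every candidate $\tilde\sigma=(x,1)$ with $x\in[0,1]$ satisfies $\alpha^H(x)=p+(1-p)x$ and $\alpha^L(x)=(1-p)+px$, and non-triviality is automatic. By equation \eqref{eq:Expected_util_gen2}, the sign of $E_{\Gamma,\tilde\sigma}u_i(a_i=1\mid l)$ agrees with the sign of
\[
f(x) := (1-p)\mu(1-\tau)\,\varphi(\alpha^H(x),n-1,B-1)-p(1-\mu)\tau\,\varphi(\alpha^L(x),n-1,B-1),
\]
and the hypothesis asserts $f(\sigma(l))=0$. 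Once $f$ is strictly decreasing on $[0,1]$, the two conclusions follow immediately.

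For $x\in[0,1)$ both tails are strictly positive, so $f(x)$ has the same sign as $g(x)-K$, where $g(x):=\varphi(\alpha^H(x),n-1,B-1)/\varphi(\alpha^L(x),n-1,B-1)$ and $K>0$ depends only on the parameters. A logarithmic differentiation gives
\[
\frac{g'(x)}{g(x)}=(1-p)\,\frac{\varphi'(\alpha^H(x))}{\varphi(\alpha^H(x))}-p\,\frac{\varphi'(\alpha^L(x))}{\varphi(\alpha^L(x))},
\]
so the task reduces to proving $g'(x)/g(x)<0$ on $[0,1)$.

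The core technical step, and the main obstacle, is the monotonicity lemma that the map $\alpha\mapsto \varphi'(\alpha,n-1,B-1)/\varphi(\alpha,n-1,B-1)$ is strictly decreasing on $(0,1)$. I would prove this via the substitution $u=\alpha/(1-\alpha)$: using $\varphi'(\alpha,n-1,B-1)=(n-1)\binom{n-2}{B-2}\alpha^{B-2}(1-\alpha)^{n-B}$ to cancel a common factor in $\varphi/\varphi'$ (reindexing $k=j-B+1$) yields
\[
\frac{\varphi(\alpha,n-1,B-1)}{\varphi'(\alpha,n-1,B-1)}=\frac{\alpha}{(n-1)\binom{n-2}{B-2}}\sum_{k=0}^{n-B}\binom{n-1}{k+B-1}u^{k},
\]
which is $\alpha$ times a polynomial in $u$ with strictly positive coefficients; since both $\alpha$ and $u$ are strictly increasing in $\alpha$, this reciprocal is strictly increasing, and the claim follows.

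Given the lemma, for every $x\in[0,1)$ one has $\alpha^H(x)-\alpha^L(x)=(2p-1)(1-x)>0$, so $\varphi'(\alpha^H)/\varphi(\alpha^H)<\varphi'(\alpha^L)/\varphi(\alpha^L)$; combined with $1-p<p$ this yields $g'(x)/g(x)<0$, and hence $g$, and therefore $f$, is strictly decreasing on $[0,1)$. A direct evaluation at $x=1$ gives $f(1)=(1-p)\mu(1-\tau)-p(1-\mu)\tau$, which is strictly negative: the existence of an indifference point $\sigma(l)\in(0,1)$ rules out $\tau\leq p_l$ (the cheap regime, in which $f$ would be strictly positive at every $x$), so $\tau>p_l$ forces $f(1)<0$. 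Combining strict monotonicity on $[0,1)$ with $f(1)<0$ and $f(\sigma(l))=0$ gives $f(x)>0$ for $x\in[0,\sigma(l))$ and $f(x)<0$ for $x\in(\sigma(l),1]$, which is exactly the statement of the proposition.
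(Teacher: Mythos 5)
Your proof is correct in substance but follows a genuinely different route from the paper's. The paper fixes $\psi=\sigma(h)$, writes the low type's payoff as $f(\lambda)$, and argues locally at a root: assuming $f(\lambda)=0$ and $f'(\lambda)\ge 0$ it extracts $\eta(B-2,n-1)>0$ from the derivative formula and then propagates positivity term by term ($\eta(k,n-1)>0\Rightarrow\eta(k+1,n-1)>0$, a likelihood-ratio comparison of the two binomial pmfs using $\alpha^H>\alpha^L$ and the monotonicity of $x/(1-x)$), summing over $k\ge B-1$ to get $f(\lambda)>0$, a contradiction. You instead prove a global statement: the tail ratio $g(x)=\varphi(\alpha^H(x),n-1,B-1)/\varphi(\alpha^L(x),n-1,B-1)$ is strictly decreasing, via the lemma that $\alpha\mapsto\varphi'(\alpha,n-1,B-1)/\varphi(\alpha,n-1,B-1)$ is strictly decreasing, which you obtain from the identity expressing $\varphi/\varphi'$ as $\alpha$ times a positive-coefficient polynomial in the odds $u=\alpha/(1-\alpha)$. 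That identity is correct and buys you more than the paper gets: a global single crossing of $\mathrm{sign}(f)$ on all of $[0,1]$, rather than only the sign of $f'$ at roots. Two caveats. First, the sentence ``hence $g$, and therefore $f$, is strictly decreasing'' overreaches: $f$ itself need not be monotone (the positive prefactor $\varphi(\alpha^L(x))$ is increasing in $x$), but you only need, and have already established, that $\mathrm{sign}(f(x))=\mathrm{sign}(g(x)-K)$ with $g-K$ strictly decreasing, so the conclusion stands. Second, you set $\sigma(h)=1$ by invoking Lemma \ref{lem:psi_ge_lambda}, which is stated for equilibria, whereas the proposition concerns an arbitrary non-trivial symmetric strategy; strictly speaking your argument covers only the case $\sigma(h)=1$. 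This costs nothing downstream (the proposition is only ever applied with $\sigma(h)=1$), and the paper's own proof carries the analogous implicit restriction $\psi\ge\lambda$ when it asserts $p\psi+(1-p)\lambda>(1-p)\psi+p\lambda$, but you should state the restriction as a hypothesis rather than derive it from a lemma that does not apply. (Both proofs also tacitly assume $B\ge 2$ so that $\varphi'$ has the stated closed form.)
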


\begin{proof}
	Fix the game parameters $p,\mu, n, B$ and $ \tau.$ Let $f$ be the following function on $[0,1]^2$: 
	\begin{eqnarray}\label{eq:f1}
	f(\lambda;\psi)&=& \nonumber\\
	& & \frac{(1-p)\mu}{(1-p)\mu+p(1-\mu)}\varphi(p\psi+(1-p)\lambda,n-1,B-1)(1-\tau)\\
	& & -(1-\frac{(1-p)\mu}{(1-p)\mu+p(1-\mu)})\varphi((1-p)\psi+p\lambda,n-1,B-1)\tau.\nonumber
	\end{eqnarray}
	
	Note that $f$ is the expected utility of a low player in $\Gamma$ when the players play a symmetric strategy $\sigma$ where $\sigma(h)=\psi$ and $\sigma(l)=\lambda.$ We fix the parameter $\psi.$	 
	 Let $\sigma$ be the strategy described in the proposition and let $\bar\psi$ denote the probability that the high player contributes. Now let $f(\lambda) = f(\lambda,\bar\psi)$ be a single parameter function.
	 
	 Recall  that $f(\lambda)$ is continuous in $\lambda,$ therefore, to prove that there is at most one value of $\lambda$ for which $f(\lambda)=0$ it suffices to prove that  whenever $f(\lambda)=0$ then $f'(\lambda)<0.$ 
	 \blue{Moran: Is it clearer now?}

	Assume to the contrary that there exists $\lambda\in(0,1)$ where $f(\lambda)=0$ and  $f'(\lambda)\ge 0.$
	
	Taking the derivative of $f$ entails
	\begin{eqnarray}
	f'(\lambda)&=&\nonumber\\
	& &\frac{(1-p)\mu}{(1-p)\mu+p(1-\mu)}(1-p)\varphi'(p\psi+(1-p)\lambda,n-1,B-1)(1-\tau)\nonumber\\
	& &-(1-\frac{(1-p)\mu}{(1-p)\mu+p(1-\mu)})p\varphi'((1-p)\psi+p\lambda,n-1,B-1)\tau.\nonumber
	\end{eqnarray}
	As $p>\frac{1}{2},$
	\begin{eqnarray}\label{eq:20}
	0<f'(\lambda)&<&\\
	& & p\big(\frac{(1-p)\mu}{(1-p)\mu+p(1-\mu)}\varphi'(p\psi+(1-p)\lambda,n-1,B-1)(1-\tau)\nonumber\\
	& &  -(1-\frac{(1-p)\mu}{(1-p)\mu+p(1-\mu)})\varphi'((1-p)\psi+p\lambda,n-1.B-1)\tau\big).\nonumber
	\end{eqnarray}
	By a standard argument  (see for example Feller \cite{Feller1968}, pp. 173), for any $\rho\in(0,1)$
	$\varphi'(\rho,n-1,B-1)=(n-1)\binom{n-2}{B-2}\rho^{B-2}(1-\rho)^{n-B}$ therefore, by equation \eqref{eq:20}, the assumption $f'(\lambda)\ge 0$ entails the following condition,
	$$0\le f'(\lambda)<C\eta(n-1,B-2)$$ where
	$C=(n-1)\binom{n-2}{B-2}p$ and
	\begin{eqnarray}
	\eta(k,n)&=&\\
	& & \frac{(1-p)\mu}{(1-p)\mu+p(1-\mu)}(p\psi+(1-p)\lambda)^k(1-(p\psi+(1-p)\lambda))^{n-k}(1-\tau)-\nonumber\\
	& & (1-\frac{(1-p)\mu}{(1-p)\mu+p(1-\mu)})(1-p)\psi+p\lambda)^{k}(1-((1-p)\psi+p\lambda))^{n-k}\tau.
	\end{eqnarray}
	As $C>0$ and we assume $f'(\lambda)>0,$ it must be the case that $\eta(B-2,n-1)>0.$
	
	Next we show that $\eta(k,n-1)>0$ yields that $\eta(k+1,n-1)>0.$ To see this note,
	\begingroup\makeatletter\def\f@size{9.5}\check@mathfonts
	\begin{eqnarray*}
	\eta(k+1,n-1)&=&\\
	& &\frac{(1-p)\mu}{(1-p)\mu+p(1-\mu)}(p\psi+(1-p)\lambda)^{k+1}(1-(p\psi(1-p)\lambda))^{n-1-(k+1)}(1-\tau)\\
	&-& (1-\frac{(1-p)\mu}{(1-p)\mu+p(1-\mu)})((1-p\psi+p\lambda))^{k+1}(1-((1-p)\psi+p\lambda))^{n-1-(k+1)}\tau\\
	&=&\frac{(p\psi+(1-p)\lambda)}{1-(p\psi+(1-p)\lambda)}\frac{(1-p)\mu}{(1-p)\mu+p(1-\mu)}(p\psi+(1-p)\lambda)^k(1-(p\psi+(1-p)\lambda))^{n-1-k}(1-\tau)\\
	&-&\frac{(1-p)\psi+p\lambda}{1-((1-p)\psi+p\lambda)}(1-\frac{(1-p)\mu}{(1-p)\mu+p(1-\mu)})((1-p)\psi+p\lambda)^{k}(1-((1-p)\psi+p\lambda))^{n-1-k}\tau\\
	& \ge& \frac{(1-p)\psi+p\lambda}{1-((1-p)\psi+p\lambda)}\eta(k,n-1).
	\end{eqnarray*}
	\endgroup
	Where the last inequality holds as  $p>\frac{1}{2},\psi\ge\lambda$ thus 
	$$p\psi+(1-p)\lambda>(1-p)\psi+p\lambda$$
	and the function $\frac{x}{1-x}$ increases monotonically whenever $x\in(0,1).$
	
	The assumption $f'(\lambda)\ge 0$ entails that $\eta(B-2,n-1)>0$ .
	By definition of Binomial distribution (see for example  \cite{Feller1968}, pp. 147)
	$f(\lambda)=\sum_{k=B-1}^{n-1}\binom{n-1}{k}\eta(k,n-1)>\sum_{k=B-1}^{n-1}\eta(B-2,n-1)>0.$
	This is in contradiction with our assumption that $f(\lambda)=0$.
\end{proof}

\begin{proposition}\label{prop:high_indif_once}
	Let $\sigma$ be a non-trivial, symmetric strategy in $\Gamma.$  If $\sigma(h)\in(0,1)$ and	$E_{\Gamma,\sigma} u_i(a_i=1|s_i=h)=0,$ then for any non-trivial, symmetric strategy $\tilde\sigma=(\sigma(l),\tilde\sigma(h))$ such that $\tilde\sigma(h)\in[0,\sigma(h)),$ $E_{\Gamma,\tilde\sigma} u_i(a_i=1|s_i=h)>0;$  and for any  non-trivial, symmetric strategy $\sigma=(\sigma(l),\tilde\sigma(h))$ such that $\tilde\sigma(h)\in(\sigma(h),1],$ $E_{\Gamma,\tilde\sigma} u_i(a_i=1|s_i=h)<0.$
\end{proposition}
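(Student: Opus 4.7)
My plan is to adapt the proof of Proposition~\ref{prop:low_indif_once} by exchanging the roles of the two signals. Fix $\bar\lambda=\sigma(l)$ and define
\[
g(\psi) \;=\; p\mu\,\varphi(\alpha_H,n-1,B-1)(1-\tau) \;-\; (1-p)(1-\mu)\,\varphi(\alpha_L,n-1,B-1)\tau,
\]
with $\alpha_H=p\psi+(1-p)\bar\lambda$ and $\alpha_L=(1-p)\psi+p\bar\lambda$. Up to a positive constant, $g(\psi)$ equals the expected utility of a player with signal $h$ when the symmetric strategy is $(\bar\lambda,\psi)$. Since $g$ is continuous, it suffices to prove: whenever $\psi_0\in(0,1)$ satisfies $g(\psi_0)=0$, one has $g'(\psi_0)<0$. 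This single fact simultaneously yields uniqueness of the indifference point and the claimed sign pattern on either side.

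Argue by contradiction, assuming $g(\psi_0)=0$ and $g'(\psi_0)\ge 0$. Using Feller's identity $\varphi'(\rho,n-1,B-1)=(n-1)\binom{n-2}{B-2}\rho^{B-2}(1-\rho)^{n-B}$ (see \cite{Feller1968}), compute $g'(\psi_0)$ explicitly, and introduce the auxiliary sequence
\[
\eta(k,n-1) \;=\; p\mu\,\alpha_H^{k}(1-\alpha_H)^{n-1-k}(1-\tau) \;-\; (1-p)(1-\mu)\,\alpha_L^{k}(1-\alpha_L)^{n-1-k}\tau,
\]
so that $g(\psi_0)=\sum_{k=B-1}^{n-1}\binom{n-1}{k}\eta(k,n-1)$. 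The first step is to deduce from $g'(\psi_0)\ge 0$ that $\eta(B-1,n-1)\ge 0$. In the regime $\bar\lambda=0$, which is forced by Lemma~\ref{lem:psi_ge_lambda} whenever the high player strictly mixes in a non-trivial equilibrium, the substitutions $\alpha_H=p\psi_0$ and $\alpha_L=(1-p)\psi_0$ reveal that both $g'(\psi_0)$ and $\eta(B-1,n-1)$ become positive multiples of the same polynomial $p^{B}\mu(1-p\psi_0)^{n-B}(1-\tau) - (1-p)^{B}(1-\mu)(1-(1-p)\psi_0)^{n-B}\tau$, so their signs agree. The second step is the bootstrap: from $\alpha_H>\alpha_L$ and the monotonicity of $x\mapsto x/(1-x)$ on $(0,1)$, the algebraic manipulation used in Proposition~\ref{prop:low_indif_once} gives $\eta(k+1,n-1)\ge\frac{\alpha_L}{1-\alpha_L}\eta(k,n-1)$ whenever $\eta(k,n-1)\ge 0$, with the strict inequality $\eta(k+1,n-1)>0$ propagating once $\eta(k,n-1)\ge 0$ and at least one of its two summands is strictly positive. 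Induction on $k$ then gives $\sum_{k=B-1}^{n-1}\binom{n-1}{k}\eta(k,n-1)>0$, contradicting $g(\psi_0)=0$.

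The main obstacle is the first step --- translating $g'(\psi_0)\ge 0$ into $\eta(B-1,n-1)\ge 0$. In the proof of Proposition~\ref{prop:low_indif_once}, the analogous translation used the bound $f'(\lambda)<p\cdot[\mathrm{bracket}]$ that hinges on $p$ being the larger of $\{p,1-p\}$ appearing alongside $(1-p)$ in $f'(\lambda)$. In the high case the derivative carries $p^{2}$ against $(1-p)^{2}$, so the naive transposition is too weak; the bridge is provided by Lemma~\ref{lem:psi_ge_lambda}, which restricts the relevant strategies to $\bar\lambda=0$ and thereby collapses $\alpha_H$ and $\alpha_L$ into multiples of a common $\psi_0$, letting the derivative inequality and the desired sign on $\eta(B-1,n-1)$ share the same defining polynomial.
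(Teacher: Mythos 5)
Your proposal shares the paper's overall architecture for Proposition \ref{prop:high_indif_once} --- the same utility function (up to the positive normalizing constant $p\mu+(1-p)(1-\mu)$), the same reduction to ``$g(\psi_0)=0$ implies $g'(\psi_0)<0$'', the same Feller identity for $\varphi'$, and the same $\eta$-bootstrap through the binomial expansion. Where you genuinely depart is at the step that converts $g'(\psi_0)\ge 0$ into nonnegativity of the leading coefficient $\eta(B-1,n-1)$. The paper transposes the argument of Proposition \ref{prop:low_indif_once} by asserting $0<\hat f'(\psi)<\frac{p}{1-p}\bigl(\cdots\bigr)$; but whereas in the low-signal case the chain rule places the smaller weight $1-p$ on the positive summand and the larger weight $p$ on the negative one (so replacing both weights by $p$ can only increase the expression), in the high-signal case the weights are reversed, and the claimed upper bound does not follow: the gap between the two sides is $\frac{p^2}{1-p}$ times the positive summand minus $\frac{p-(1-p)^2}{1-p}$ times the negative one, and since $p^2<p-(1-p)^2$ for $p\in(\tfrac12,1)$ its sign is exactly what is in question. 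You correctly identify this obstacle, and your fix --- setting $\bar\lambda=0$ so that $\alpha_H=p\psi_0$ and $\alpha_L=(1-p)\psi_0$, whereupon both $g'(\psi_0)$ and $\eta(B-1,n-1)$ factor as positive multiples ($\psi_0^{B-2}$ and $\psi_0^{B-1}$ times positive constants) of the single expression $p^{B}\mu(1-\tau)(1-p\psi_0)^{n-B}-(1-p)^{B}(1-\mu)\tau(1-(1-p)\psi_0)^{n-B}$ --- is correct and, in my reading, repairs a genuine defect in the paper's own argument rather than merely rephrasing it.

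Two caveats. First, the proposition as stated quantifies over arbitrary non-trivial symmetric strategies with no restriction on $\sigma(l)$, while your proof covers only $\sigma(l)=0$; Lemma \ref{lem:psi_ge_lambda} forces $\sigma(l)=0$ for equilibria in which the high type strictly mixes, not for the arbitrary strategies the proposition speaks of, and for $\bar\lambda>0$ your common-polynomial factorization is unavailable (the negative term of $\eta(B-1,n-1)$ picks up the factor $\alpha_L/(1-p)$, which exceeds the factor $\alpha_H/p$ on the positive term). This is benign in practice, since every invocation of Proposition \ref{prop:high_indif_once} in Lemmas \ref{lem:gen_atmost_one_eq-p1} and \ref{lem:big_n_at_least_one} takes $\sigma(l)=0$ and the paper's own bootstrap already presumes $\psi\ge\lambda$, but you should state the restriction in the proposition you actually prove. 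Second, your induction yields the strict inequality $g(\psi_0)>0$ only when the sum $\sum_{k=B-1}^{n-1}$ contains an index $k\ge B$, i.e.\ when $2\le B\le n-1$; the boundary cases $B=1$ and $B=n$ are degenerate (the expected utility is then a constant or a single monomial in $\psi$) and deserve a one-line separate remark, an issue the paper glosses over as well.
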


\begin{proof}
	The proof of Proposition \ref{prop:high_indif_once} is very similar to that of Proposition \ref{prop:low_indif_once}.
	 Fix the game parameters $p,\mu,n,B$ and $\tau.$ Let $\hat{f}$ be the following function on $[0,1]^2:$$\psi,\lambda\in(0,1),\psi\ge\lambda$ We define  the function
	\begin{eqnarray}\label{eq:f}
	\hat{f}(\psi;\lambda)&=&\nonumber\\
	& & \frac{p\mu}{p\mu+(1-p)(1-\mu)}\varphi(p\psi+(1-p)\lambda,n-1,B-1)(1-\tau)\\
	&-&(1-\frac{p\mu}{p\mu+(1-p)(1-\mu)})\varphi((1-p)\psi+p\lambda,n-1,B-1)\tau.\nonumber
	\end{eqnarray}
	
	Note that $\hat{f}$ is the expected utility of a high player in $\Gamma$ when the players play a symmetric strategy $\sigma$ where $\sigma(h)=\psi$ and $\sigma(l)=\lambda.$ We fix the parameter $\lambda.$ Let $\sigma$ be the strategy described in the proposition and let $\bar\lambda$ denote the probability that the high player contributes. Now let $\hat{f}(\psi) = \hat{f}(\bar\lambda,\psi)$ be a single parameter function.
	
	Recall  that $\hat{f}(\psi)$ is continuous in $\psi,$ therefore, to prove that there is at most one value of $\psi$ for which $\hat{f}(\psi)=0$ it suffices to prove that  whenever $\hat{f}(\psi)=0$ then $\hat{f}'(\psi)<0.$ 
	\blue{Moran: Is it clearer now?}
	
	Assume to the contrary that there exists $\psi\in(0,1)$ where $\hat{f}(\psi)=0$ and  $\hat{f}'(\psi)\ge 0.$
	
	Taking the derivative of $\hat{f}$ entails
	\begin{eqnarray}
	\hat{f}'(\psi)&=&\nonumber\\
	& &\frac{p\mu}{p\mu+(1-p)(1-\mu)}p\varphi'(p\psi+(1-p)\lambda,n-1,B-1)(1-\tau)\\ 
	& &-(1-\frac{p\mu}{p\mu+(1-p)(1-\mu)})(1-p)\varphi'((1-p)\psi+p\lambda,n-1,B-1)\tau.\nonumber
	\end{eqnarray}
	As $p<1,$
	\begin{eqnarray*}		
	0<\hat{f}'(\psi)< &\frac{p}{1-p}\big(\frac{p\mu}{p\mu+(1-p)(1-\mu)}\varphi'(p\psi+(1-p)\lambda,n-1,B-1)(1-\tau)&\\
	&  -(1-\frac{p\mu}{p\mu+(1-p)(1-\mu)})\varphi'((1-p)\psi+p\lambda,n-1.B-1)\tau\big).&
	\end{eqnarray*}
	By a standard argument (see for example Feller \cite{Feller1968} pp.173), for any $\rho\in(0,1)$
	$\varphi'(\rho,n-1,B-1)=(n-1)\binom{n-2}{B-2}\rho^{B-2}(1-\rho)^{n-B}$ therefore, the condition above entails,
	$0\le \hat{f}'(\lambda)<\hat{C}\hat{\eta}(n-1,B-2)$ where
	$\hat{C}=(n-1)\binom{n-2}{B-2}\frac{p}{1-p}$ and
	\begin{eqnarray*}
	 \hat{\eta}(k,n)=&\frac{p\mu}{p\mu+(1-p)(1-\mu)}(p\psi+(1-p)\lambda)^k(1-(p\psi+(1-p)\lambda))^{n-k}(1-\tau)&\\
	 &-(1-\frac{p\mu}{p\mu+(1-p)(1-\mu)})(1-p)\psi+p\lambda)^{k}(1-((1-p)\psi+p\lambda))^{n-k}\tau.&\end{eqnarray*}
	As $\hat{C}>0$ and we assume $\hat{f}'(\psi)>0,$ it must be the case that $\eta(B-2,n-1)>0.$
	
	Next we show that $\hat{\eta}(k,n-1)>0$ yields that $\hat{\eta}(k+1,n-1)>0$ as well.
	\begin{eqnarray*}
	\hat{\eta}(k+1,n-1)=
	 \frac{p\mu}{p\mu+(1-p)(1-\mu)}\big(p\psi+(1-p)\lambda)^{k+1}(1-(p\psi(1-p)\lambda)\big)^{n-1-(k+1)}(1-\tau)& & \\-(1-\frac{p\mu}{p\mu+(1-p)(1-\mu)})((1-(p\psi+p\lambda))^{k+1}(1-((1-p)\psi+p\lambda))^{n-1-(k+1)}\tau=& &\\
	 \frac{(p\psi+(1-p)\lambda)}{1-(p\psi+(1-p)\lambda)}\frac{p\mu}{p\mu+(1-p)(1-\mu)}(p\psi+(1-p)\lambda)^k(1-(p\psi+(1-p)\lambda))^{n-1-k}(1-\tau)&& \\
	 -\frac{p\mu}{p\mu+(1-p)(1-\mu)}(1-\frac{(1-p)\mu}{(1-p)\mu+p(1-\mu)})((1-p)\psi+p\lambda)^{k}(1-((1-p)\psi+p\lambda))^{n-1-k}\tau& &\\
	\ge \frac{(1-p)\psi+p\lambda}{1-((1-p)\psi+p\lambda)}\hat{\eta}(k,n-1).& &
	\end{eqnarray*}
	Where the last inequality holds as  $p>\frac{1}{2},\psi\ge\lambda$ thus 
$$p\psi+(1-p)\lambda>(1-p)\psi+p\lambda$$
and the function $\frac{x}{1-x}$ increases monotonically whenever $x\in(0,1).$

The assumption $\hat{f}'(\lambda)\ge 0$ entails that $\hat{\eta}(B-2,n-1)>0$ .
By definition of Binomial distribution (see for example  \cite{Feller1968}, pp. 147)
$\hat{f}(\lambda)=\sum_{k=B-1}^{n-1}\binom{n-1}{k}\hat{\eta}(k,n-1)>\sum_{k=B-1}^{n-1}\hat{\eta}(B-2,n-1)>0.$
This is in contradiction with our assumption that $\hat{f}(\lambda)=0$.
\end{proof}

\begin{lemma}\label{lem:gen_atmost_one_eq-p1}
No crowdfunding game has more than one symmetric non-trivial Bayes-Nash equilibrium.	
\end{lemma}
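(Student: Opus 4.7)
The plan is to argue by contradiction, relying only on the structural results already proved in this appendix. Suppose $\sigma$ and $\tilde\sigma$ are two distinct non-trivial symmetric Bayes--Nash equilibria of $\Gamma$. By Lemma~\ref{lem:psi_ge_lambda}, every non-trivial symmetric equilibrium satisfies $\sigma(h)=1$ or $\sigma(l)=0$, so the pair $(\sigma,\tilde\sigma)$ falls into one of three configurations: (I) both equilibria lie on the arm $\{\sigma(h)=1\}$, (II) both lie on $\{\sigma(l)=0\}$, or (III) the cross case with $\sigma=(1,\lambda)$ for some $\lambda>0$ and $\tilde\sigma=(\psi,0)$ for some $\psi\in(0,1)$ (non-triviality forbids $\psi=0$).

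Cases~(I) and~(II) are handled directly by the single-crossing content of Propositions~\ref{prop:low_indif_once} and~\ref{prop:high_indif_once}. Focusing on Case~(I), I fix $\bar\psi=1$ and consider $f(\lambda):=E u(1\mid l)|_{(1,\lambda)}$. Proposition~\ref{prop:low_indif_once} states that $f$ is strictly positive below and strictly negative above any interior zero. A short enumeration of sub-cases---two interior; one interior and one pure; both pure---shows that any two distinct equilibria contradict this single-crossing property in all but the sub-case $\{(1,0),(1,1)\}$. For that residual sub-case, the equilibrium condition at $(1,1)$ forces $\tau\le p_l$ via the closed form $f(1)\propto(1-p)\mu(1-\tau)-p(1-\mu)\tau$, and the explicit expression $f(0)\propto(1-p)\mu\varphi(p)(1-\tau)-p(1-\mu)\varphi(1-p)\tau$, combined with the strict monotonicity $\varphi(p)>\varphi(1-p)$ for $p>\tfrac12$, then yields $f(0)>0$, precluding $(1,0)$ as an equilibrium. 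Case~(II) is entirely symmetric, fixing $\bar\lambda=0$ and invoking Proposition~\ref{prop:high_indif_once}.

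The main obstacle is Case~(III), handled in three steps. First, indifference of the high type at $\tilde\sigma$ (which holds because $\psi\in(0,1)$) gives $E u(1\mid h)|_{\tilde\sigma}=0$, and the identity $E u(1\mid h)>E u(1\mid l)$ valid at any non-trivial strategy---an immediate consequence of $p>\tfrac12$ and the likelihood-ratio structure of the expected utilities in~\eqref{eq:Expected_util_gen2}---then yields $E u(1\mid l)|_{\tilde\sigma}<0$. Second, Proposition~\ref{prop:high_indif_once} applied along the arm $\{\lambda=0\}$ gives $E u(1\mid h)|_{(1,0)}<0$ (since $1>\psi$), and hence $f(0):=E u(1\mid l)|_{(1,0)}<0$. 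Third, $\sigma=(1,\lambda)$ being an equilibrium requires $f(\lambda)\ge 0$: if $\lambda\in(0,1)$, Proposition~\ref{prop:low_indif_once} applied at this interior zero of $f$ forces $f(0)>0$, contradicting $f(0)<0$; and if $\lambda=1$, either $f(1)>0$ (whence an interior zero of $f$ exists by the intermediate value theorem and the same contradiction applies) or $f(1)=0$, which by the closed form pins $\tau=p_l$. In this last regime, the indifference equation for $\tilde\sigma$ reduces to $\varphi(p\psi)/\varphi((1-p)\psi)=(1-p)^2/p^2$, which is impossible for $\psi\in(0,1)$ because the left-hand side is at least $1$ while the right-hand side is strictly less than $1$.

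The hardest part of the argument is this last boundary sub-case of Case~(III), where the single-crossing propositions provide no information at $\lambda=1$ and I must use the explicit closed form of $f(1)$ together with the strict monotonicity of the binomial tail $\varphi$ to close the argument.
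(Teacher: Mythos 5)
Your proof is correct and follows essentially the same route as the paper's: both arguments rest on Lemma~\ref{lem:psi_ge_lambda} to restrict attention to the two arms $\{\sigma(h)=1\}$ and $\{\sigma(l)=0\}$, use the single-crossing Propositions~\ref{prop:low_indif_once} and~\ref{prop:high_indif_once} to pin down at most one candidate on each arm, and resolve the cross case through the sign of the utilities at the corner profile where high types contribute surely and low types opt out. The only difference is organizational—you argue by contradiction from two hypothesized equilibria, while the paper constructs the unique candidate via a decision tree on the signs of the corner utilities—and your explicit treatment of the boundary sub-cases (the pair $\{(1,0),(1,1)\}$ and the $\tau=p_l$ knife-edge) is, if anything, slightly more careful than the paper's.
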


\begin{proof}
We prove the first part of the theorem, that is for any $\Params$ there can be at most a single non-trivial symmetric Bayes-Nash equilibrium in $\Gamma(n,B,\mu,p,\tau).$
Let $\sigma$ be a symmetric, non-trivial strategy of $\Gamma(\Params).$ We separate the proof into cases and search for strategies that are candidates for equilbria.

\textbf{Case 1. $\sigma(h)=1$:}  Consider the following  sub-cases: (1.1) First assume that $$E_{\sigma=(\sigma(l)=\lambda,\sigma(h)=1)} u_i(a_i=1|s_i=l)= 0$$ for some $\lambda\in(0,1).$ 
Note that if the other low players play $\hat{\lambda}<\lambda,$ then by Proposition \ref{prop:low_indif_once}, $E_{\sigma=(\hat\lambda,1)} u_i(a_i=1|s_i=l)>0.$ In this case a low player's best response is to contribute. For  any strategy in which $\hat{\lambda}>\lambda,$ again by Proposition \ref{prop:low_indif_once}, $E_{\sigma=(\hat\lambda,1)} u_i(a_i=1|s_i=l)<0$ and opting-our is a low player's best response. However if $\sigma(l)=\lambda,$ then a low type player is indifferent between the actions and can not profit from increasing the probability she assigns to any one of the pure actions.  In addition, by Lemma \ref{lem:psi_ge_lambda}, if low players mix, then the high player has a dominant strategy $\sigma(h)=1,$ therefore, in this case, the only non-trivial, symmetric equilibrium can be $\sigma=(\sigma(l)=\lambda,\sigma(h=1)).$

(1.2) Assume that $$E_{\sigma=(\lambda,1)} u_i(a_i=1|s_i=l)\ne 0$$ for every $\lambda\in[0,1].$ By the continuity of $E_{\sigma=(\lambda,1)} u_i(a_i=1|s_i=l)$  it must be that $E_{\Gamma,\sigma=(\lambda,1)} u_i(a_i=1|s_i=l)>0$ or $E_{\sigma=(\lambda,1)} u_i(a_i=1|s_i=l)<0$ for every $\lambda.$ If $E_{\sigma=(\lambda,1)} u_i(a_i=1|s_i=l)>0$ for every $\lambda\in[0,1]$ then when $\sigma(h)=1,$ contributing is a best response strategy for a low player  and the only equilibrium of this form can only be $\sigma(l)=\sigma(h)=1.$
If $E_{\sigma=(\lambda,1)} u_i(a_i=1|s_i=l)<0$ for every $\lambda.$ then opting-out is a dominant action for a low type player.  If $E_{\sigma=(0,1)} u_i(a_i=1|s_i=h)\ge 0,$ then, by Proposition  \ref{prop:high_indif_once},   $\sigma=\sigma(l)=0,\sigma(h)=1$ is the only equilibrium candidate.
If $E_{\sigma=(0,1)} u_i(a_i=1|s_i=h)< 0,$ then we claim that opting-out is a dominant strategy for low players.
To see this, assume to the contrary that $E_{\sigma=(0,1)} u_i(a_i=1|s_i=h)< 0,$ and there exists some $\lambda\in(0,1]$ such that $E_{\sigma=(\lambda,1)} u_i(a_i=1|s_i=l)\ge 0$, which, by Proposition \ref{prop:low_indif_once} entails that  $E_{\sigma=(0,1)} u_i(a_i=1|s_i=l)> 0.$ a contradiction as $E_{\sigma=(0,1)} u_i(a_i=1|s_i=h)>E_{\sigma=(0,1)} u_i(a_i=1|s_i=l).$

\textbf{Case 2. $\sigma(h)<1:$} By Lemma \ref{lem:psi_ge_lambda}, the only candidates for non-trivial symmetric equilibria can strategies in which $\sigma(l)=0.$ We separate the proof to two sub-cases. (2.1) Assume that  $$E_{\sigma=(\sigma(l)=0,\sigma(h)=\psi)} u_i(a_i=1|s_i=h)= 0$$ for some $\psi\in(0,1].$ Then, by Proposition \ref{prop:high_indif_once}, for every $\tilde{\psi}\in[0,\psi),$ $E_{\sigma=(\sigma(l)=0,\sigma(h)=\tilde\psi)} u_i(a_i=1|s_i=h)> 0$ therefore high player is best if she contributes when all other players play $\sigma=(\sigma(l)=0,\sigma(h)=\tilde\psi)$  and for every $\tilde{\psi}\in(\psi,1]$  $E_{\sigma=(\sigma(l)=0,\sigma(h)=\tilde\psi)} u_i(a_i=1|s_i=h)< 0$ therefore high player is best if she opts-out. The high players is indifferent, only when $\sigma(h)=\psi,$ therefore she can not profit by deviating. Note that as $E_{\sigma=(\sigma(l)=0,\sigma(h)=\psi)} u_i(a_i=1|s_i=h) = 0,$ low player's best response is $\sigma(l)=0.$ 
(2.2)  Assume that  $$E_{\sigma=(\sigma(l)=0,\sigma(h)=\psi)} u_i(a_i=1|s_i=h)\ne 0$$ for every $\psi\in(0,1).$ By the continuity of $E_{\sigma=(\sigma(l)=0,\sigma(h)=\psi)} u_i(a_i=1|s_i=h),$ A non-trivial symmetric equilibrium can only occur if for every $\psi,$ $E_{\sigma=(\sigma(l)=0,\sigma(h)=\psi)} u_i(a_i=1|s_i=h) > 0,$ In this case we get that an equilibrium can occur only when $\sigma(h)=1.$ This case was analyzed in case (1) above.

We conclude, If $E_{\sigma=(0,1)}u_i(a=1|s_i=h)>0$ then $\sigma^*(h)=1.$ In addition: if $E_{\sigma=(0,1)}u_i(a=1|s_i=l)<0$ then $\sigma^*(l)=0;$ if $E_{\sigma=(1,1)}u_i(a=1|s_i=l)\ge 0$ then $\sigma^*(l)=1;$ Otherwise, by Lemma \ref{prop:low_indif_once}, there exist $\lambda\in(0,1)$ for which $E_{\sigma=(\lambda,1)}u_i(a=1|s_i=l)=0$ and $\sigma^*(l)=\lambda.$

Else if $E_{\sigma=(0,1)}u_i(a=1|s_i=h)<0$ then $\sigma^*(l)=0.$ In addition if $E_{\sigma=(0,0)}u_i(a=1|s_i=h)>0$  then by Proposition \ref{prop:high_indif_once}, there exist $\psi\in(0,1)$ such that $E_{\sigma=(0,\psi)}u_i(a=1|s_i=h)=0.$ By Proposition \ref{prop:high_indif_once} again, $E_{\sigma=(0,\tilde\psi)}u_i(a=1|s_i=l)<0$ for any $\tilde{\psi}\in[0,\psi),$ therefore if a high players plays the strategy $\sigma(h)=\tilde{\psi},$ any high player can profit by deviating to action $1$ and  similarly if high players play $\sigma(h)=\tilde{\psi}\in(\psi,1],$ any high player can profit by deviating to action $0.$ When high players play $\sigma(h)=\psi,$ high players have no profitable deviation. In addition, as  $E_{\sigma=(0,\psi)}u_i(a=1|s_i=h)=0>E_{\sigma=(0,\psi)}u_i(a=1|s_i=l),$ low players can not gain by deviating to action $1$ and thus $\sigma^*(l)=0,\sigma^*(h)=\psi$ is an equiblirium. Finally, if $E_{\sigma=(0,0)}u_i(a=1|s_i=h)\le 0$ then, there is no non-trivial symmetric equilibrium of $\Gamma(\Params).$
\end{proof}
%
%
%
%

To complete the proof we will show that for any $4-$tuple of parameters $(q,\mu,p,\tau)$ where $q\in(0,1)$, and for any there exists some $N$ such that for any $n>N,$ the crowdfunding game $\Gamma(n,B,\mu,p,\tau)$ has a unique non-trivial Bayes-Nash equilibrium.

Next we will prove the second part of Theorem \ref{thm:gen_atmost_one_eq}.

\begin{lemma}\label{lem:big_n_at_least_one}
	Let $q\in(0,1)$ and $\{B_n\}_{n=1}^\infty$ be a sequence of thresholds such that $\lim_{n\rightarrow\infty}\frac{B_n}{n}=q\in(0,1].$
	For every 3-tuple $\mu,q,\tau$ there exists $N(\mu,q,\tau)$ such that  every $n>N(\mu,q,\tau),$ there exist a unique non-trivial symmetric equilibrium of $\Gamma(n,B_n,\mu,p,\tau).$
\end{lemma}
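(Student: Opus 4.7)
The plan is to prove existence of a non-trivial symmetric equilibrium for all sufficiently large $n$; uniqueness is already supplied by Lemma~\ref{lem:gen_atmost_one_eq-p1}. The approach is to combine the asymptotic behaviour of the binomial tail $\varphi(\alpha, n-1, B_n-1)$ with case-by-case applications of the intermediate value theorem to the continuous best-reply maps. The central analytic input, provable via the weak law of large numbers, is that for any sequence $\alpha_n \to \alpha \ne q$,
\begin{equation*}
\varphi(\alpha_n, n-1, B_n - 1) \longrightarrow \mathbf{1}\{\alpha > q\}.
\end{equation*}
Plugging this into~\eqref{eq:Expected_util_gen2} together with $p_h=\frac{p\mu}{p\mu+(1-p)(1-\mu)}$ and $p_l=\frac{(1-p)\mu}{(1-p)\mu+p(1-\mu)}$ yields the limiting sign of $E_{\sigma} u_i(a=1\mid s_i)$ at any candidate boundary profile.

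I would then split along two axes: the ordering of $\tau$ against $p_h$ and of $q$ against $1-p$. When $\tau \le p_h$, or when $\tau > p_h$ together with $q > 1-p$, a direct computation gives $\lim_n E_{\sigma=(0,1)} u_i(a=1\mid s_i=h) > 0$, forcing $\sigma^*(h)=1$ for large $n$. The equilibrium is then determined by the low-signal best response: it is the corner $(1,1)$ in the cheap sub-case $\tau<p_l$, the corner $(0,1)$ in the moderate sub-case $p_l<\tau<p_h$ with $q\le1-p$ (since $\lim_n E_{\sigma=(0,1)}u_i(a=1\mid s_i=l)=p_l-\tau<0$), and in all remaining sub-cases an interior root $\lambda^*\in(0,1)$ of $\lambda\mapsto E_{\sigma=(\lambda,1)} u_i(a=1\mid s_i=l)$, obtained via IVT from the tabulated limit signs at $\lambda=0$ and $\lambda=1$. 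Non-triviality is automatic from $\alpha^H \ge p$ (or, when $q>p$, from the fact that the root $\lambda^*$ forces $\alpha^H>q$).

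The delicate case is $\tau > p_h$ with $q \le 1-p$. Here $\lim_n E_{\sigma=(0,1)} u_i(a=1\mid s_i=h) = p_h - \tau < 0$ rules out $\sigma^*(h)=1$. To exhibit a mixed equilibrium on the high side, I would fix any $\psi^{\dagger} \in (q/p,\, q/(1-p))$ — a non-empty subinterval of $(0,1)$ since $q<1-p<p$ — and note that at $(0,\psi^{\dagger})$ one has $\alpha^H=p\psi^{\dagger}>q>(1-p)\psi^{\dagger}=\alpha^L$, so $\lim_n E_{\sigma=(0,\psi^{\dagger})} u_i(a=1\mid s_i=h)=p_h(1-\tau)>0$. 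Applying IVT to $\psi\mapsto E_{\sigma=(0,\psi)} u_i(a=1\mid s_i=h)$ on $[\psi^{\dagger},1]$ then supplies, for every sufficiently large $n$, a root $\psi^* \in (\psi^{\dagger},1)$. The algebraic implication used inside the proof of Lemma~\ref{lem:psi_ge_lambda} (namely, that $E u_h \le 0$ forces $E u_l < 0$ when $p>\tfrac12$) guarantees that $E_{\sigma=(0,\psi^*)} u_i(a=1\mid s_i=l)<0$, so $\sigma^*(l)=0$ is a strict best response and $(0,\psi^*)$ is the desired equilibrium; non-triviality holds since $\alpha^H=p\psi^*>q$.

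The main obstacle is precisely this last case. The limiting map $\psi \mapsto \lim_n E_{\sigma=(0,\psi)} u_i(a=1\mid s_i=h)$ is \emph{discontinuous} at $\psi=q/(1-p)$, jumping from the positive value $p_h(1-\tau)$ to the negative value $p_h-\tau$. One therefore cannot simply take $\psi^*$ to be the limiting root; rather one must isolate $\psi^{\dagger}$ strictly away from the discontinuity, exploit uniform convergence of $\varphi(p\psi,n-1,B_n-1)\to 1$ and $\varphi((1-p)\psi,n-1,B_n-1)\to 0$ on compact subintervals of $(q/p,q/(1-p))$, and invoke IVT at each finite $n$ before passing to the limit. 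Once the $n$-uniform sign-stabilisation is in hand, the rest of the argument reduces to elementary intermediate-value reasoning.
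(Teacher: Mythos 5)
Your overall strategy---constructive existence via the intermediate value theorem, organised by cases on $\tau$ versus $p_h$ and $q$ versus $1-p$---is genuinely different from the paper's argument. The paper instead argues by contradiction: if no non-trivial symmetric equilibrium existed for arbitrarily large $n$, then $E_{\sigma=(0,\psi)}u_i(a=1\mid s_i=h)<0$ would have to hold for \emph{every} $\psi\in(0,1]$; letting $\psi\to 0$ and applying L'H\^opital's rule to the tail ratio gives $\lim_{\psi\to 0}\varphi((1-p)\psi,n-1,B_n-1)/\varphi(p\psi,n-1,B_n-1)=\left(\tfrac{1-p}{p}\right)^{B_n-1}$, which vanishes as $B_n\to\infty$ and contradicts the fixed positive bound $\tfrac{p\mu(1-\tau)}{(1-p)(1-\mu)\tau}$. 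That single computation covers all parameter configurations at once and requires no case analysis.

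The genuine gap in your proposal is the regime $q>p$ (and, at the boundaries, $q=p$, $q=1-p$, or $\tau=p_h$), which the lemma must cover and which actually occurs in the paper's own tables (e.g.\ $B=98$, $n=100$, $p=0.75$, so $q=0.98>p$). At the profile $(0,1)$ one then has $\alpha^H=p<q$ and $\alpha^L=1-p<q$, so your central analytic input gives $\varphi(\alpha^H,n-1,B_n-1)\to 0$ \emph{and} $\varphi(\alpha^L,n-1,B_n-1)\to 0$: the limit of $E_{\sigma=(0,1)}u_i(a=1\mid s_i)$ is $0-0$, not a strictly positive number, and the sign at finite $n$---which you need both to ``force $\sigma^*(h)=1$'' and to get the positive endpoint for the IVT applied to $\lambda\mapsto E_{\sigma=(\lambda,1)}u_i(a=1\mid s_i=l)$---cannot be read off from $\mathbf{1}\{\alpha>q\}$. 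Closing this requires a rate comparison for two simultaneously vanishing binomial tails, e.g.\ a large-deviations estimate showing $\varphi(\alpha^L,n-1,B_n-1)/\varphi(\alpha^H,n-1,B_n-1)\to 0$ whenever $\alpha^L<\alpha^H<q$; this is true but is a genuinely additional ingredient that your proposal neither states nor proves (your closing paragraph flags the discontinuity at $\psi=q/(1-p)$ in the expensive case, but not this one). The paper's $\psi\to 0$ device is precisely a way of obtaining such a ratio exactly, via L'H\^opital, without any large-deviations machinery.
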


\begin{proof}	
	Assume to the contrary that for some tuple $(\mu,q,\tau)$ and a sequence of thresholds $\{B_n\}_{n=1}^\infty$ for which $\lim_{n\rightarrow\infty}\frac{B_n}{n}=q,$  there exists an arbitrarily large $n$ such that the corresponding game $\Gamma(n,B_n,\mu,p,\tau)$ has no non-trivial symmetric equilibrium.
	
	This entails that for every $\psi\in(0,1],$ $$E_{\Gamma^n,\sigma(l)=0,\sigma(h)=\psi}u_i(a_i=1|s_i=h)<0.$$
	To see this, first consider a case in which the expected utility for high player is positive for every $\psi\in(0,1),$ in this case,  action $1$ is a dominant strategy for high players and thus a non-trivial symmetric equilibrium or the form  $\sigma^*(h)=1,\sigma^*(l)\in[0,1]$ exists; second consider a case in which there exists $\psi\in(0,1)$ for which $E_{\Gamma^n,\sigma(l)=0,\sigma(h)=\psi}u_i(a_i=1|s_i=h)=0.$ By Proposition \ref{prop:high_indif_once} there can be at most one such $\psi.$ Note that this yields that $\sigma^*(h)=\psi,\sigma^*(l)=0$ as no player can gain from deviating to any of the pure actions.
	
	Therefore, by the contrary assumption, for any $\psi\in(0,1]$ following condition holds,
	\begin{eqnarray}\label{eq:cond1}
	\frac{p\mu}{p\mu+(1-p)(1-\mu)}\varphi(p\psi,n-1,B_n-1)(1-\tau)-& &\nonumber\\
	\frac{(1-p)(1-\mu)}{p\mu+(1-p)(1-\mu)}\varphi((1-p)\psi,n-1,B_n-1)\tau<0\Leftrightarrow& &\nonumber\\
	\frac{p\mu}{p\mu+(1-p)(1-\mu)}\varphi(p\psi,n-1,B_n-1)(1-\tau)
	<& &\\
	\frac{(1-p)(1-\mu)}{p\mu+(1-p)(1-\mu)}\varphi((1-p)\psi,n-1,B_n-1)\tau \Leftrightarrow& &\nonumber\\
	\frac{p\mu(1-\tau)}{(1-p)(1-\mu)\tau}<
	\frac{\varphi((1-p)\psi,n-1,B_n-1)}{\varphi(p\psi,n-1,B_n-1)}.& &\nonumber
	\end{eqnarray}
	As stated above, equation \eqref{eq:cond1} must be satisfied for any $\psi\in(0,1],$ including arbitrarily small values and thus,
	\begin{equation}\label{eq:cond2}
	\begin{split}
	&\frac{p\mu(1-\tau)}{(1-p)(1-\mu)\tau}<\lim_{\psi\rightarrow 0}\frac{\varphi((1-p)\psi,n-1,B_n-1)}{\varphi(p\psi,n-1,B_n-1)}.
	\end{split}
	\end{equation}
	By the definition of $\varphi(\cdot,\cdot,\cdot),$
	$$\lim_{\psi=0}\varphi((1-p)\psi,n-1,B_n-1)=\lim_{\psi\rightarrow 0}\varphi((1-p)\psi,n-1,B_n-1)=0,$$
	and hence we must apply L'hopital's rule to calculate the limit of the right-hand side of equation \eqref{eq:cond2}, that is,
	\begin{equation}\label{eq:cond3}
	\begin{split}
	&\frac{p\mu(1-\tau)}{(1-p)(1-\mu)\tau}<\lim_{\psi\rightarrow 0}\frac{\varphi'((1-p)\psi,n-1,B_n-1)}{\varphi'(p\psi,n-1,B_n-1)}.
	\end{split}
	\end{equation}
	
	By a standard argument (see for example Feller \cite{Feller1968} pp.173), 
	\begin{equation}
	\varphi(\gamma,n-1,B-1)=(n-1)\binom{n-2}{B-2}\int_0^{\gamma} t^{B-2}(1-t)^{n-B}dt.
	\end{equation}
	and hence
	\begin{equation}\label{eq:der_varphi_1}
	\varphi'(\gamma,n-1,B-1)=(n-1)\binom{n-2}{B-2}\gamma^{B-2}(1-\gamma)^{n-B}.
	\end{equation}
	Therefore
	\begin{equation}\label{eq:psi_limit_zero}
	\begin{split}
	&\lim_{\psi\rightarrow 0} \frac{\varphi((1-p)\psi,n-1,B-1)}{\varphi(p\psi,n-1,B-1)}=
	\lim_{\psi\rightarrow 0} \frac{\varphi'((1-p)\psi,n-1,B-1)(1-p)}{\varphi'(p\psi,n-1,B-1)p}=\\
	&\lim_{\psi\rightarrow 0}(\frac{(1-p)}{p})^{B-1}(\frac{1-(1-p)\psi}{1-p\psi})^{n-B}=\\
	&(\frac{1-p}{p})^{B-1}.
	\end{split}
	\end{equation}
	
	We plug this into equation \eqref{eq:cond3} and get,
	
	\begin{equation}\label{eq:cond3}
	\begin{split}
	&\frac{p\mu(1-\tau)}{(1-p)(1-\mu)\tau}<(\frac{1-p}{p})^{B_n-1},
	\end{split}
	\end{equation}
	a contradiction as $p>\frac{1}{2}$ and $n$ is arbitrarily large (and thus so is $B_n$).
\end{proof}

Theorem \ref{thm:gen_atmost_one_eq} joins together Lemmas \ref{lem:big_n_at_least_one} and \ref{lem:gen_atmost_one_eq-p1}.

 \begin{customthm}{\ref{thm:gen_atmost_one_eq}}
 	(1) No crowdfunding game has more than one symmetric non-trivial Bayes-Nash equilibrium. (2) Consider the sequence $\{B_n\}_{n=1}^{\infty}$ where $\lim_{n\rightarrow\infty}\frac{B_n}{n}=q$ for some $q\in(0,1].$   For any 4-tuple of parameters $(q,\mu,p,\tau)$ there exists some $N$ such that for any $n>N,$  the crowdfunding game $\Gamma(n,B_n,\mu,p,\tau)$ has a unique symmetric non-trivial Bayes-Nash equilibrium.	
 \end{customthm}

\subsection{Additional Proofs.}
The following lemma is almost immediate by Lemma \ref{lem:psi_ge_lambda},

\begin{lemma}\label{lem:dominant_one}
	For every $s_i\in\{l,h\},$ if $\tau<Pr(\omega=H|s_i)$ then $\sigma^*(a_i=1|s_i)=1.$
\end{lemma}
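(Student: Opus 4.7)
The plan is to read off the claim directly from the expected-utility expression in equation \eqref{eq:Expected_util_gen} once one applies the monotonicity supplied by Lemma \ref{lem:psi_ge_lambda}. Concretely, I would rewrite the assumption $\tau < \Pr(\omega=H\mid s_i)$ as the equivalent inequality
\[
\Pr(\omega=H\mid s_i)(1-\tau) \;>\; \Pr(\omega=L\mid s_i)\,\tau,
\]
which expresses that the net ``per-unit'' return from contributing is strictly positive given the posterior associated with $s_i$.

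Next I would establish that in any symmetric non-trivial equilibrium $\sigma^*$ we have $\varphi(\alpha^H,n-1,B-1)\ge \varphi(\alpha^L,n-1,B-1)$. By Lemma \ref{lem:psi_ge_lambda}, $\sigma^*$ satisfies $\sigma^*(h)\ge \sigma^*(l)$ (since either $\sigma^*(h)=1$ or $\sigma^*(l)=0$). Combined with $p>\tfrac12$, this yields $\alpha^H-\alpha^L = (2p-1)(\sigma^*(h)-\sigma^*(l))\ge 0$, and monotonicity of $\varphi$ in its first argument delivers the desired ordering.

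Plugging this back into the expected-utility formula \eqref{eq:Expected_util_gen} and pulling out the factor $\varphi(\alpha^L,n-1,B-1)$ bounds the payoff from contributing below by
\[
\varphi(\alpha^L,n-1,B-1)\Bigl[\Pr(\omega=H\mid s_i)(1-\tau)-\Pr(\omega=L\mid s_i)\,\tau\Bigr].
\]
The bracketed term is strictly positive by the hypothesis, so it remains to argue that $\varphi(\alpha^L,n-1,B-1)>0$. Non-triviality of $\sigma^*$ forces $\alpha^L>0$ (otherwise $\sigma^*(h)=\sigma^*(l)=0$ and the campaign never succeeds), and for any such $\alpha^L$ the binomial tail $\varphi(\alpha^L,n-1,B-1)$ is strictly positive for $B\le n$. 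Hence contributing strictly dominates opting out under $s_i$, so any best response must put probability one on $a_i=1$, giving $\sigma^*(a_i=1\mid s_i)=1$.

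There is no real obstacle here; the only subtle point is checking that $\sigma^*(h)\ge\sigma^*(l)$ holds in every non-trivial symmetric equilibrium, which is precisely what Lemma \ref{lem:psi_ge_lambda} provides, and confirming that non-triviality prevents $\varphi(\alpha^L,n-1,B-1)$ from vanishing.
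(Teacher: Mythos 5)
Your argument is correct and follows essentially the same route as the paper: write the expected utility of contributing, invoke Lemma \ref{lem:psi_ge_lambda} to order $\varphi(\alpha^H,n-1,B-1)\ge\varphi(\alpha^L,n-1,B-1)$, and use non-triviality to make the resulting lower bound strictly positive. The only (harmless) difference is that the paper bounds the negative term by replacing $\varphi(\alpha^L,n-1,B-1)$ with $\varphi(\alpha^H,n-1,B-1)$ and factors out the latter, so it only needs $\varphi(\alpha^H,n-1,B-1)>0$, which is immediate from non-triviality, whereas your factorization through $\varphi(\alpha^L,n-1,B-1)$ requires the extra (correct, since $p<1$) observation that $\alpha^L=0$ would force $\alpha^H=0$.
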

\begin{proof}
	The expected utility for player $i$ from the action $a_i=1$ is
	$$
	Pr(\omega=H|s_i)(1-\tau)\varphi(\alpha^H,n-1,B-1)-(1-Pr_\mu(\omega=H|s_i))\tau \varphi(\alpha^L,n-1,B-1).
	$$
	By Lemma \ref{lem:psi_ge_lambda} we know that $\varphi(\alpha^H,n-1,B-1)\ge \varphi(\alpha^L,n-1,B-1)$ and as $\sigma^*$ is non-trivial we know that $\varphi(\alpha^H,n-1,B-1)>0$ therefore,
	\begin{equation*}
	\begin{split}
	&Pr(\omega=H|s_i)(1-\tau)\varphi(\alpha^H,n-1,B-1)-(1-Pr_\mu(\omega=H|s_i))\tau \varphi(\alpha^L,n-1,B-1)\ge\\ & (Pr_\mu(\omega=H|s_i)(1-\tau)-(1-Pr_\mu(\omega=H|s_i))\tau )\varphi(\alpha^H,n-1,B-1)>0.
	\end{split}
	\end{equation*}
	Where the last inequality holds as we assume $\tau<Pr(\omega=H|s_i).$
\end{proof}

By Lemma \ref{lem:dominant_one}, we can distinguish between three crowdfunding game types by the relationship between $\mu,$ the prior for state $H$  and the pre-determined price level $\tau.$ In Game $\Gamma(\Params),$ the price will be called \textit{cheap} if $\tau<Pr_\mu(\omega=H|s_i=l)=\frac{\mu(1-p)}{\mu(1-p)+p(1-\mu)};$ the price will be called \textit{moderate} if $\tau\in[Pr_\mu(\omega=H|s_i=l),Pr_\mu(\omega=H|s_i=l)=[\frac{(1-p)\mu}{(1-p)\mu+p(1-\mu)},\frac{p\mu}{p\mu+(1-p)(1-\mu)});$ and finally, the price will be called  \textit{expensive} if $\tau>\frac{p\mu}{p\mu+(1-p)(1-\mu)}.$

Let $\{B_n\}_{n=1}^{\infty}$ be a sequence of thresholds such that $\lim_{n\rightarrow\infty}\frac{B_n,n}=q$ for some $q\in(0,1).$ For every $\mu,p,\tau$ let $\Gamma_n$ denote the corresponding game $\Gamma(n,B_n,\mu,p,\tau).$ We denote the  non-trivial symmetric equilibria of $\Gamma_n$ by $\sigma^*_n.$ By by taking a sub-sequence if necessary, the limit $\lim_{n\rightarrow\infty}\sigma^*_n=\sigma^*_\infty,$ exists and is the non-trivial, symmetric equilibrium of game $\Gamma_\infty.$ By Theorem \ref{thm:gen_atmost_one_eq} we know this is well defined.

\begin{lemma}
	If $q\ge 1-p$ then for every $\mu,p,\tau,$ $$\sigma^*_\infty(h)=1.$$
\end{lemma}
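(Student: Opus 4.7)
I will argue by contradiction. Suppose $\sigma^*_\infty(h) = \psi < 1$. For all sufficiently large $n$ we then have $\psi_n := \sigma^*_n(h) < 1$, and by Lemma \ref{lem:psi_ge_lambda} this forces $\sigma^*_n(l) = 0$. Non-triviality rules out $\psi_n = 0$, so $\psi_n \in (0,1)$, which means high-signal players strictly mix, and therefore are indifferent. The indifference equation, after rearranging as in \eqref{eq:cond1}, reads
\begin{equation*}
\frac{p\mu(1-\tau)}{(1-p)(1-\mu)\tau} \;=\; \frac{\varphi\bigl((1-p)\psi_n,\,n-1,\,B_n-1\bigr)}{\varphi\bigl(p\psi_n,\,n-1,\,B_n-1\bigr)},
\end{equation*}
since under $\sigma^*_n(l) = 0$ we have $\alpha^H_n = p\psi_n$ and $\alpha^L_n = (1-p)\psi_n$. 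The plan is to show that the right-hand side tends to $0$, contradicting the strictly positive left-hand side.

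First I note that $\alpha^L_n = (1-p)\psi_n \to (1-p)\psi < 1-p \le q$, so by the weak law of large numbers $\varphi((1-p)\psi_n, n-1, B_n-1) \to 0$, in fact exponentially fast with rate given by the relative-entropy $I\bigl(q \,\|\, (1-p)\psi\bigr) > 0$ (standard Cram\'er/Chernoff estimate for binomial tails, cf.\ Feller \cite{Feller1968}). The argument then splits on the location of $\alpha^H_n = p\psi_n$ relative to $q$.

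If $p\psi > q$, then $\varphi(p\psi_n, n-1, B_n-1) \to 1$, so the ratio tends to $0$. If $p\psi = q$, the central limit theorem gives $\varphi(p\psi_n, n-1, B_n-1) \to \tfrac12$ (assuming $B_n/n \to q$ at the usual rate; otherwise pass to a sub-sequence), and the ratio again tends to $0$. The only substantive case is $p\psi < q$, where both numerator and denominator decay to zero. Here I invoke the sharp large-deviation estimate: for $\alpha < q$,
\begin{equation*}
\lim_{n\to\infty} \frac{1}{n}\log \varphi(\alpha, n-1, B_n-1) \;=\; -\,I(q\,\|\,\alpha),
\end{equation*}
where $I(q\,\|\,\alpha) = q\log(q/\alpha) + (1-q)\log((1-q)/(1-\alpha))$ is strictly decreasing in $\alpha$ on $(0,q)$. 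Since $(1-p)\psi < p\psi < q$, we get $I(q\,\|\,(1-p)\psi) > I(q\,\|\,p\psi)$, so the numerator decays strictly faster than the denominator, and the ratio again tends to $0$.

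In all three subcases the ratio tends to $0$, contradicting its equality with the positive constant $\frac{p\mu(1-\tau)}{(1-p)(1-\mu)\tau}$. Hence $\psi = 1$, completing the proof. The only delicate step is the comparison of decay rates in the third subcase; this rests on the strict monotonicity of the Kullback--Leibler rate function $\alpha \mapsto I(q\,\|\,\alpha)$ on $(0, q)$, which is the main (but standard) tool I expect to need.
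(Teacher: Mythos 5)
Your proof is correct, but it follows a genuinely different route from the paper's. The paper argues via the \emph{low} player: after reducing to $\sigma^*_n(l)=0$ via Lemma \ref{lem:psi_ge_lambda}, it notes that $\alpha^L_n\to(1-p)\psi<q$ forces $\varphi(\alpha^L_n,n-1,B_n-1)\to 0$ and then asserts that the low player's expected utility from contributing becomes positive, yielding a profitable deviation. You instead derive the contradiction from the \emph{high} player's indifference condition, showing that the ratio $\varphi((1-p)\psi_n,n-1,B_n-1)/\varphi(p\psi_n,n-1,B_n-1)$ must equal a fixed positive constant yet tends to $0$, with the delicate subcase $p\psi\le q$ settled by comparing Cram\'er rate functions. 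Your route is arguably the more complete one: the paper's one-line conclusion that $E u_i(a_i=1\mid s_i=l)>0$ is only immediate when $\varphi(\alpha^H_n,n-1,B_n-1)$ stays bounded away from zero (i.e.\ when $p\psi>q$); when $p\psi\le q$ both success probabilities vanish, and the high player's indifference in fact forces them to vanish at the \emph{same} rate, which makes the low player's utility negative rather than positive --- so the contradiction really has to come from the indifference equation itself, exactly as you argue. Two small loose ends in your write-up: (i) the strict inequality $(1-p)\psi<p\psi$ presumes $\psi>0$; if $\psi_n\to 0$ you need the separate estimate that the ratio is of order $((1-p)/p)^{B_n-1}\to 0$, as in the computation inside Lemma \ref{lem:big_n_at_least_one}; (ii) at the boundary $p\psi=q$ the denominator need not converge to $\tfrac12$ for an arbitrary sequence $B_n$ with $B_n/n\to q$, but it decays at most subexponentially while the numerator decays exponentially, so your conclusion stands.
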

\begin{proof}
	
	Assume by contradiction that there exist a sequence $\{B_n\}_{n=1}^{\infty}$ such that $\lim_{n\rightarrow\infty}\frac{B_n}{n}=q>1-p,$ and the corresponding sequence of equilibrium strategies $\{\sigma^*_n\}$ converges to $\sigma^*_\infty(h)=\psi$ for some $\psi\in(0,1).$ By Lemma \ref{lem:psi_ge_lambda}, if $\sigma^*_\infty(h)<1$ then  $\sigma^*_\infty(l)=0.$
	
	In addition, by the law of large numbers, the expected number of contributions $c^L_n\equiv\sum_i a_i$ in state $\omega=L$ is
	$$\lim_{n\rightarrow\infty}c^L_n= (1-p)\psi<q,$$
	and thus $\lim_{n\rightarrow\infty}\varphi(\alpha^L_n,n-1,B_n-1)=0.$ Therefore for sufficiently large $n,$ $E_{\sigma^*_n}u(a_i=1|s_i=l)>0,$ a contradiction as low player profits by deviating to the pure action $1.$
\end{proof}	

Lastly, we follow \cite{Roth2016} and \cite{Levine1995} and show that in our model, players are (almost) non-pivotal.
\begin{lemma}\label{lem:almost_unpivotal}
	Consider any tuple $p,\mu,tau,q$ and any sequence of thresholds $\{B_n\}$ such that $\limninf\frac{B_n}{n}=q.$ Let $\{\Gamma_n\},$ denote  the corresponding sequence of crowdfunding games and the corresponding equlibria sequence $\{\sigma^*_n\}$ the following equality must be satisfied,
	$$\limninf \varphi(\alpha^{\omega},n-1,B_n-1)=\limninf \varphi(\alpha^{\omega},n-1,B_n-1)$$
\end{lemma}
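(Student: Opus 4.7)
The statement as printed has the same expression on both sides; I will read it as the natural ``non-pivotal'' claim that the limit of $\varphi(\alpha^\omega,n-1,B_n-1)$ equals the limit of $\varphi(\alpha^\omega,n,B_n)$ along any subsequence where both converge (equivalently, that $\varphi(\alpha^\omega,n-1,B_n-1)-\varphi(\alpha^\omega,n-1,B_n)\to 0$). The plan is to reduce the difference to a single binomial point mass and then to show that point mass vanishes as $n\to\infty$.

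First I would record the pivot identity. Conditioning a $\mathrm{Bin}(n,\gamma)$ random variable on the first coordinate, one gets
\begin{equation*}
\varphi(\gamma,n,B)=\gamma\,\varphi(\gamma,n-1,B-1)+(1-\gamma)\,\varphi(\gamma,n-1,B),
\end{equation*}
which rearranges to
\begin{equation*}
\varphi(\gamma,n,B)-\varphi(\gamma,n-1,B-1)=-(1-\gamma)\,\Pr\bigl(\mathrm{Bin}(n-1,\gamma)=B-1\bigr).
\end{equation*}
Applying this at $\gamma=\alpha^{\omega}_n$ and $B=B_n$, it suffices to show that $\Pr(\mathrm{Bin}(n-1,\alpha^{\omega}_n)=B_n-1)\to 0$ as $n\to\infty$.

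Next I would invoke the characterizations of $\sigma^{*}_n$ already established (Lemma~\ref{lem:asympt_probs} in the moderate case and Theorem~\ref{lem:complement_11} in the expensive case, together with the trivial cheap case) to show that the contribution probabilities $\alpha^{\omega}_n=p\sigma^{*}_n(h)+(1-p)\sigma^{*}_n(l)$ (in state $H$, and symmetrically in state $L$) stay bounded inside $(0,1)$ for any $q\in(0,1]$ and $p\in(\tfrac12,1)$. Once $\alpha^{\omega}_n\in[\epsilon,1-\epsilon]$ for some $\epsilon>0$ and all large $n$, Stirling's approximation gives the uniform bound
\begin{equation*}
\max_{k}\Pr\bigl(\mathrm{Bin}(n-1,\alpha^{\omega}_n)=k\bigr)=O\!\left(\frac{1}{\sqrt{(n-1)\,\alpha^{\omega}_n(1-\alpha^{\omega}_n)}}\right)=O\!\left(\tfrac{1}{\sqrt n}\right),
\end{equation*}
so the pivotal probability vanishes and combining with the identity above finishes the proof.

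The main obstacle I anticipate is the boundary behavior of $\alpha^{\omega}_n$. If one insists on allowing $q=1$ or equilibria in which $\sigma^{*}_n(l)\to 0$ and $\sigma^{*}_n(h)\to 1$ simultaneously in state $L$, then $\alpha^{L}_n$ could drift toward $1-p$ or $p$ but never to $0$ or $1$ since $p\in(\tfrac12,1)$; the worry is only spurious. A cleaner safety net, in case one wants a uniform argument without case analysis, is to apply a Chernoff bound: whenever $B_n/n\to q$ and $\alpha^{\omega}_n\to\alpha^{\omega}_\infty\neq q$, the event $\{\mathrm{Bin}(n-1,\alpha^{\omega}_n)=B_n-1\}$ has exponentially small probability by large deviations, while if $\alpha^{\omega}_\infty=q$ the Stirling bound above already suffices. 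Either way the pivotal probability is $o(1)$, giving the desired equality of limits.
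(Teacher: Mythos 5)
Your proposal is correct (and you correctly diagnose the typo in the statement: the intended claim is $\lim\varphi(\alpha^\omega,n-1,B_n-1)=\lim\varphi(\alpha^\omega,n,B_n)$), but it takes a genuinely different and more rigorous route than the paper. The paper's own proof simply rewrites $\varphi(\alpha^\omega,n-1,B_n-1)$ as $Pr\bigl(\tfrac{c^\omega_n-1}{n}\ge\tfrac{B_n}{n}\bigr)$ and then \emph{asserts} that dropping the $-\tfrac{1}{n}$ shift does not change the limit; that assertion is exactly the claim that the point mass $Pr(c^\omega_{n-1}=B_n-1)$ vanishes, and the paper never justifies it. Your argument supplies precisely the missing step: the one-step binomial recursion gives the exact identity $\varphi(\gamma,n,B)-\varphi(\gamma,n-1,B-1)=-(1-\gamma)\,Pr(\mathrm{Bin}(n-1,\gamma)=B-1)$, and the point mass is then killed either by the Stirling bound $O(1/\sqrt{n})$ when $\alpha^\omega_n$ stays in a compact subset of $(0,1)$, or by large deviations when $\alpha^\omega_\infty\neq q$. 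Note also that the $(1-\gamma)$ prefactor in your identity automatically disposes of the cheap-price case $\alpha^\omega=1$, where the point mass itself need not vanish. The only residual caveat is the degenerate regime $q=0$ with $\alpha^\omega_n\to 0$ at rate $1/n$ (e.g.\ $B_n\equiv 1$), where the point mass genuinely need not vanish; you flag this, and it is excluded both by the equilibrium characterizations (which keep $\alpha^\omega_n$ bounded away from $0$ for $q>0$) and by the fact that the lemma is only invoked for sequences with $q\in(0,1]$. In short, your proof is a strict improvement on the paper's at its one non-trivial step.
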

\begin{proof}
	Let $c^\omega_n$ denote the expected number of contributions. By definition of $\varphi$ we get,
	$$\varphi(\alpha^\omega,n-1,B_n-1)=Pr_{\Gamma_n,\sigma^*_n}(c^\omega_n\ge b_n-1|\omega)=Pr_{\Gamma_n,\sigma^*_n}(\frac{c^\omega_n-1}{n}\ge \frac{B_n}{n}|\omega).$$
	
	The above equality must also hold in the limit and thus,
	\begin{eqnarray*}
	\limninf\varphi(\alpha^\omega,n-1,B_n-1)&=&\limninf Pr_{\Gamma_n,\sigma^*_n}(\frac{c^\omega_n-1}{n}\ge \frac{B_n}{n}|\omega)=\\
	& &\limninf Pr_{\Gamma_n,\sigma^*_n}(\frac{c^\omega_n}{n}\ge \frac{B_n}{n}|\omega)=\\
	& &\limninf\varphi(\alpha^{\omega},n,B_n).
	\end{eqnarray*}
\end{proof}	

\subsection{Proofs for cheap prices.}\label{sec:cheap_price_proofs}

\begin{customthm}{3.1}
	In any crowdfunding game  with a cheap price there exists a unique symmetric Bayesian Nash equilibrium where all players contribute. In particular, this equilibrium is non-trivial.
\end{customthm}

\begin{proof}
	The proof of Theorem \ref{thm:eq_asym_game_towards} is immediate by Lemma \ref{lem:dominant_one} and the definition of cheap price.
\end{proof}

\subsection{Proofs for moderate prices.}\label{sec:cheap_price_proofs}

\begin{customthm}{3.3}
	For any crowdfunding game, $\Gamma(n,B,\mu,p,\tau)$, with a moderate price, there exists a \emph{unique} symmetric non-trivial Bayesian Nash equilibrium $\sigma^*=(\sigma^*_1,\ldots,\sigma^*_n).$ Moreover, $\sigma^*_i$ has the following form,
	\begin{equation}\label{eq_equilibrium}
	\sigma^*_i(s_i)=\begin{cases}
	1&\mbox{ if }s_i=h\\
	\lambda=\lambda(n,B,\mu,p,\tau))\in[0,1) &\mbox{ if }s_i=l.
	\end{cases}.
	\end{equation}
\end{customthm}

\begin{proof}
	By Lemma \ref{lem:dominant_one}, in any crowdfunding game $\Gamma$ with moderate price, $\sigma^*(h)=1.$

	First we show that $\sigma^*(l)<1.$ Let $\sigma_1$ be the strategy profile in which all players play the pure action $1.$ We use the following shorthand notation,
	$$p_l=Pr_{\Gamma}(\omega=H|s_i=l)=\frac{\mu(1-p)}{p(1-\mu)+\mu(1-p)}$$
	and
	$$p_h=Pr_{\Gamma}(\omega=H|s_i=h)=\frac{\mu p}{p\mu+(1-\mu)(1-p)}.$$	
	As all players contribute under $\sigma_1,$ the threshold is achieved in both states with probablity $1,$ and thus the expected utility of a low player is,
	$E_{\Gamma,\sigma_1} u_i(a_i=1|s_i=l)=p_l(1-\tau)-(1-p_l)\tau<0,$
	where the last inequality stems from the assumption of moderate prices. Therefore $\sigma^*(l)<1.$
	
	Next we show that there exists a non-trivial symmetric Bayes Nash equilibrium in any crowdfunding game with moderate price.
	We separate the proof into two cases.
	
	(1) For any $\lambda\in[0,1),$  $$E_{\Gamma,\sigma(h)=1,\sigma(l)=\lambda}u_i(a_i=1|s_i=l)<0.$$ In this case the action $0$ is a low player's best response against any symmetric strategy. Therefore $\sigma^*(l)=0.$	By Lemma \ref{lem:dominant_one}, $\sigma^*(h)=1$ and thus $\sigma^*(h)=1,\sigma^*(l)=0$ is an equilibrium. Note that by definition of $\varphi(cdot,cdot,cdot),$ the equilibrium is non-trivial for any $n,B,\mu,p$ and a moderate $\tau.$
	
	(2) There exists some $\lambda\in(0,1)$ such that $$E_{\Gamma,\sigma(h)=1,\sigma(l)=\lambda}u_i(a_i=1|s_i=l)=0.$$ By Proposition \ref{prop:low_indif_once} there can be at most one such $\lambda.$ We show that $\sigma^*(h)=1,\sigma^*(l)=\lambda$ is the unique non-trivial equilibrium of $\Gamma$ in this case. Note that $\sigma^*(h)=1$ by Lemma \ref{lem:dominant_one}. In addition, by Propositoin \ref{prop:low_indif_once}, the low type player loses utility by deviating to the pure action $1$ and can not gain by deviating to the pure action $0.$
\end{proof}

\begin{customlemma}{\ref{lem:asympt_probs}}
	Let $\{\Gamma(n,B_n,\mu,p,\tau)\}_n$ be a sequence of moderately priced crowdfunding games such that $\lim_{n\rightarrow\infty} \frac{B_n}{n}=q$ for some $q\in[0,1]$ and $\{\sigma^*_n\}$ be the corresponding sequence of non-trivial symmetric equilibria.  Then the limit equilibrium strategy  is:
	\begin{equation}
	\lim_{n\rightarrow\infty} \sigma^*_n(l)=\begin{cases}
	0&\mbox{ if } q\leq 1-p\\
	\frac{q-(1-p)}{p}&\mbox{ otherwise}
	\end{cases}\mbox{ and } \lim_{n\rightarrow\infty} \sigma^*_n(h)=1.
	\end{equation}
\end{customlemma}

\begin{proof}
	By Lemma \ref{lem:dominant_one}, $\sigma^*(h)=1$ in any crowdfunding game with moderate price. This also hold in the limit.
	
	Next we show that if $q<1-p$ then $\lim_{n\rightarrow\infty} \sigma^*_n(l)=0.$ Assume by contradiction that $\lim_{n\rightarrow\infty} \sigma^*_n(l)=\lambda>0.$
	Let $c^L_n=E_{\Gamma_n,\sigma^*_n}\sum_i a_i|\omega=L.$ That is $C_n$  denote the expected number of contributors in state $L.$ Recall that by the law of large numbers $$\lim_{n\rightarrow\infty}\frac{c^L_n}{n}=\alpha^L=p+(1-p)\lambda>1-p>q,$$ therefore for sufficiently large $n$ the low player expected utility is,
	$$
	E_{\Gamma_n,\sigma^*_n}u(a_i=1|s_i=l)=p_l(1-\tau)-(1-p_l)\tau<0
	$$
	Where 	$$p_l=Pr(\omega=H|s_i=l)=\frac{\mu(1-p)}{p(1-\mu)+\mu(1-p)}.$$ The inequality holds as prices are moderate.
	
	Finally we show that if $q\ge 1-p$ then  $\lim_{n\rightarrow\infty} \sigma^*_n(l)=\frac{q-(1-p)}{p}.$ For any $\lambda,$ if $\lim_{n\rightarrow\infty} \sigma^*_n(l)=\lambda$ then by the law of large numbers,
	$$
	\lim_{n\rightarrow\infty}\frac{c^L_n}{n}=\alpha^L_\infty=p+(1-p)\lambda.	
	$$
	If $\lambda<\frac{q-(1-p)}{p},$ then $\lim_{n\rightarrow\infty}\varphi(\alpha^L_n,n-1,B-1)=0,$ and thus,
	$$\lim_{n\rightarrow\infty}E_{\Gamma_n,\sigma^*}u(a=1|s=l)=p_l(1-\tau)>0.$$ Similarly if $\lambda>\frac{q-(1-p)}{p},$ then $\lim_{n\rightarrow\infty}\varphi(\alpha^L_n,n-1,B-1)=1$ and thus
	$$\lim_{n\rightarrow\infty}E_{\Gamma_n,\sigma_n^*}u(a=1|s=l)=p_l(1-\tau)-(1-p_l)\tau<0,$$ where again, the inequality holds as prices are moderate.
\end{proof}

The following corollary is immediate by Lemma \ref{lem:asympt_probs}.
\begin{corollary}\label{cor:mod_price_H_1}
	For any $\mu,p,$ a moderate price $\tau$ and a sequence $\{B_n\}$ s.t. $\limninf \frac{B_n}{n}=q$ for some $q\in[0,1],$ $\limninf \varphi(\alpha^H_n,n-1,B_n-1)=1.$
\end{corollary}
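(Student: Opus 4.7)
The strategy is to use Lemma \ref{lem:asympt_probs} to compute the limiting per-capita contribution probability in state $H$ and then apply standard binomial concentration to conclude that the normalized contribution count surpasses the normalized threshold with probability tending to one.

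First, I would compute the limit $\alpha^H_\infty := \lim_n \alpha^H_n = \lim_n\bigl(p\,\sigma^*_n(h) + (1-p)\sigma^*_n(l)\bigr)$ by plugging in Lemma \ref{lem:asympt_probs}, which gives $\sigma^*_n(h) \to 1$ and the two-case formula for $\sigma^*_n(l)$. In the case $q \le 1-p$ this yields $\alpha^H_\infty = p$, and in the case $q > 1-p$ a direct simplification yields $\alpha^H_\infty = (2p - 1 + (1-p)q)/p$.

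Next, I would verify the key strict inequality $\alpha^H_\infty > q$. For $q \le 1-p$, this is immediate since $p > 1/2 \ge 1-p \ge q$. For $q > 1-p$, an elementary calculation gives
\begin{equation*}
\alpha^H_\infty - q \;=\; \frac{(2p-1)(1-q)}{p},
\end{equation*}
which is strictly positive whenever $p > 1/2$ and $q < 1$.

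Once this is in hand, the conclusion follows from the weak law of large numbers. Conditional on $\omega = H$ the signals (and hence actions under the symmetric strategy $\sigma^*_n$) are independent across players, so $c^H_n \sim \mathrm{Bin}(n, \alpha^H_n)$, and likewise the count of contributions among any fixed $n-1$ players is $\mathrm{Bin}(n-1,\alpha^H_n)$. Fix $\delta := (\alpha^H_\infty - q)/2 > 0$. For $n$ large enough, $\alpha^H_n > q + \delta$ and $(B_n - 1)/(n-1) < q + \delta/2$, so that Chebyshev's (or Chernoff's) inequality applied to $\mathrm{Bin}(n-1, \alpha^H_n)$ gives $\varphi(\alpha^H_n, n-1, B_n - 1) = \Pr[\mathrm{Bin}(n-1,\alpha^H_n) \ge B_n - 1] \to 1$.

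The only potentially delicate point is the boundary case $q = 1$, where $\alpha^H_\infty = 1 = q$ and the above gap argument degenerates. However, in that case the indifference condition of the low players forces $\lambda_n := \sigma^*_n(l) \to 1$ and pins down the rate so that both $n - c^H_n$ and $n - B_n$ are $o(n)$ with comparable orders; a short argument comparing the binomial tail to the gap $n - B_n + 1$ suffices. This boundary case is the only step requiring more than a routine concentration bound; the heart of the corollary is simply the strict inequality $\alpha^H_\infty > q$ coming from $p > 1/2$.
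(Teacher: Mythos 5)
Your argument is essentially the paper's own: compute $\limninf \alpha^H_n$ from Lemma \ref{lem:asympt_probs}, check that it strictly exceeds $q$, and let binomial concentration finish. You are in fact more careful than the paper, whose entire proof is the single line $\limninf\alpha^H_n = p+(1-p)\frac{q-(1-p)}{p}>q$ (written only for the case $q>1-p$, with the law of large numbers doing the rest implicitly); your explicit gap $(2p-1)(1-q)/p$ and the Chebyshev/Chernoff step are exactly what that sketch suppresses. For $q<1$ your proof is complete and matches the paper.

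The one place you and the paper part ways is $q=1$, and there your honesty works against you: you correctly flag that the gap degenerates, but the ``short argument'' you promise does not exist, because the claim fails there. Take $B_n=n$. Then $\varphi(\gamma,n-1,n-1)=\gamma^{n-1}$, and the low player's indifference condition reads $\bigl(\alpha^H_n/\alpha^L_n\bigr)^{n-1}=\frac{(1-p_l)\tau}{p_l(1-\tau)}=:C>1$ (moderate price gives $\tau>p_l$, hence $C>1$). Since $\alpha^H_n-\alpha^L_n=(2p-1)(1-\lambda_n)$ and $\alpha^L_n\to 1$, this forces $(n-1)(1-\lambda_n)\to\frac{\ln C}{2p-1}$, a positive constant rather than $o(1)$, whence
\begin{equation*}
\varphi(\alpha^H_n,n-1,n-1)=\bigl(1-(1-p)(1-\lambda_n)\bigr)^{n-1}\longrightarrow C^{-\frac{1-p}{2p-1}}<1.
\end{equation*}
So the indifference condition pins the rate at exactly the scale where the state-$H$ success probability converges to a limit strictly below one. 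The corollary should carry the restriction $q<1$ (which is all that is ever used downstream); the paper's proof silently assumes it when asserting the strict inequality $\limninf\alpha^H_n>q$. I would state that restriction explicitly rather than attempt to patch the boundary case.
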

In words, Corollary \ref{cor:mod_price_H_1} states that if the price is moderate, the probability that a campaign succeeds when the state is $H$ approaches one as population grows.

\begin{proof}
	By  the Law of Large numbers, the expected number of contribution conditional on the state being $H$ is,
	$
	\limninf \frac{c^H_n}{n}=\limninf \alpha^H_n
	$ where $\alpha^H_n$ is the probability of a player choosing action $1$ in the non-trivial, symmetric equilibrium of game $\Gamma_n.$ By Lemma \ref{lem:asympt_probs}
	$$
	\limninf \alpha^H_n=p+(1-p)\frac{q-(1-p)}{p}>q.
	$$
\end{proof}

\begin{customthm}{\ref{thm:asym_correctness}}
	For any large crowdfunding game with prior $\mu$, signal quality $p$ and a moderate price $\tau$ the probability of making the correct choice is given by:
	\begin{equation}\label{eq:max}
	\theta(\mu,p,\tau) =1-\frac{1-p}{p}\frac{1-\tau}{\tau}\mu .
	\end{equation}
\end{customthm}

\begin{proof}
	First we show that $\theta(\mu,p,\tau) \ge1-\frac{1-p}{p}\frac{1-\tau}{\tau}\mu.$
	Consider a sequence of thresholds $\{B_n\}$ such that $B_n=\frac{n}{2}$ and the corresponding sequence of moderately priced crowdfunding games $\{\Gamma_n\}$ where $\Gamma_n=\Gamma(n,B_n,\mu,p,\tau).$
	As $\limninf B_n=\frac{1}{2}>1-p,$ by Lemma \ref{lem:asympt_probs} we get that eventually,  $\sigma^*_n(l)\in(0,1)$ and thus in equilibrium, the low player is indifferent between both actions yielding,
	$$
	\lim_{n\rightarrow\infty}E_{\Gamma_n,\sigma^*_n}u(a=1|s=l)=0\Leftrightarrow p_l\varphi(\alpha_\infty^H,n-1,B-1)(1-\tau)-(1-p_l)\varphi(\alpha^L_\infty,n-1,B-1)\tau=0
	$$
	where $p_l=Pr(\omega=H|s_i=l)=\frac{\mu(1-p)}{\mu(1-p)+p(1-\mu)}.$
	Recall that $c^\omega_n=E_{\Gamma_n,\sigma^*_n}\sum_i(a_i)|\omega$ is the expected number of contributions in state $\omega\in H,L.$ By the law of large numbers $$\lim_{n\rightarrow\infty}\frac{c^\omega_n}{n}=\alpha^\omega_n,$$
	where, as before, $\alpha_n^\omega$ is the probability that a player plays $1$ in the equilibrium of $\Gamma^n$ when the state is $\omega\in H,L.$
	As $\sigma^*(h)=1$ we get that $\alpha^H>p>\frac{1}{2}$ and thus, when the state is $H,$ the probability that the product be realized approaches $1$ that is $$\lim_{n\rightarrow\infty}\varphi(\alpha^H_n,n-1,B-1)=1.$$ By the low player indifference condition, we can therefore calculate the probability the threshold be reached at the limit.
	\begin{eqnarray*}
	0=\lim_{n\rightarrow\infty}E_{\Gamma_n,\sigma^*_n}u(a=1|s=l)=\frac{(1-p)\mu}{(1-p)\mu+(1-\mu)p}(1-\tau)& &\\
	-\frac{(1-\mu)p}{(1-p)\mu+(1-\mu)p}\tau\lim_{n\rightarrow\infty}\varphi(\alpha_n^L,n-1,B-1).& &
	\end{eqnarray*}
	Rearranging the above equality we get,
	$$
	\lim_{n\rightarrow\infty}\varphi(\alpha_n^L,n-1,B-1)=\frac{1-p}{p}\frac{\mu}{1-\mu}\frac{1-\tau}{\tau}.
	$$
	By definition of the correctness index,
	$$
	\lim_{n\rightarrow\infty}\theta(n,B_n=\frac{n}{2},\mu,p,\tau)=\mu+(1-\mu)\frac{1-p}{p}\frac{\mu}{1-\mu}\frac{1-\tau}{\tau}=1-\frac{1-p}{p}\frac{1-\tau}{\tau}\mu.
	$$
	Therefore the expression in Theorem \ref{lem:asympt_probs} is achievable.
	
	To complete the proof we must show that $\theta(\mu,p,\tau)\le 1-\frac{1-p}{p}\frac{1-\tau}{\tau}\mu.$ Let $\{B^*_n\}$ be a sequence of thresholds for which the maximal correctness is achieved.
	By taking a subsequence if necessary, the limit
	$\lim_{n\rightarrow\infty}B^*_n=q^*$
	exists for some $q^*\in[0,1].$
	
	Next we show that $q^*>1-p.$ Assume to the contrary that $q^*<1-p.$ By Theorem \ref{thm:unique_eq} for any $n,$ $\sigma^*_n(h)=1$ therefore by the Law of large numbers, the expected number of contributions when the state is $L$ is $\limninf\frac{c^L_n}{n}=\alpha^L_n>1-p>q^*$ and thus the probability that the threshold is crossed when the state is $L$ is,
	$$\limninf\varphi(\alpha^L_n,n-1,B-1)=1$$ and hence, by definition of the correctness index and under the contrary assumption,
	$$
	\theta(p,\mu,\tau)=\mu.
	$$
	A contradiction as prices are moderate and thus,
	$$
	1-\frac{1-p}{p}\frac{1-\tau}{\tau}\mu>\mu\Leftrightarrow\frac{1-\mu}{\mu}>\frac{1-p}{p}\frac{1-\tau}{\tau}\Leftrightarrow \frac{(1-p)\mu}{(1-\mu)p+(1-p)\mu}>\tau,
	$$ and we have seen that the expression on the right is achievable. Therefore it must be that the optimal sequence of thresholds is such that,
	$$
	\limninf \frac{B^*_n}{n}=q^*>1-p.
	$$
	
	As $q^*>1-p,$ by Lemma \ref{lem:asympt_probs}, we get that the low player is eventually indifferent in $\Gamma_n$ and thus for sufficiently large $n,$
	\begin{equation}\label{eq:pre-linear-prog}
	\begin{split}
	&(1-p)\mu(1-\tau)\varphi(\alpha_n^H,n-1,B_n-1)-p(1-\mu)\tau\varphi(\alpha_n^L,n-1,B_n-1)\Rightarrow\\
	&\varphi(\alpha^H_n,n-1,B_n-1)=\frac{p(1-\mu)\tau}{(1-p)\mu(1-\tau)}\varphi(\alpha_n^L,n-1,B_n-1).
	\end{split}
	\end{equation}
	
	Assume to the contrary that for some $\mu,p$ and a moderate price $\tau,$
	$$\theta(\mu,p,\tau)>1-\frac{1-p}{p}\frac{1-\tau}{\tau}\mu=\mu+(1-\mu)(1-\frac{1-p}{p}\frac{\mu}{1-\mu}\frac{1-\tau}{\tau}).$$ by the definition of correctness and the inequality above, This entails that there exists a sequence of thresholds $\{B^*_n\}$ and a sequence of corresponding games $\Gamma_n$ such that $\limninf\varphi(\alpha^L_n,n-1,B_n-1)<\frac{1-p}{p}\frac{\mu}{1-\mu}\frac{1-\tau}{\tau}.$
	Therefore equation \eqref{eq:pre-linear-prog} becomes,
	\begin{equation}\label{eq:pre-linear-prog1}
	\begin{split}
	&\varphi(\alpha_n^H,n-1,B_n-1)=\frac{p(1-\mu)\tau}{(1-p)\mu(1-\tau)}\varphi(\alpha^L_n,n-1,B_n-1)<1,
	\end{split}
	\end{equation}
	a contradiction as by Lemma \ref{lem:asympt_probs}, the expected proportion of contributions when the state is $H$ approaches
	$$\limninf \frac{c^H_n}{n}=\limninf\alpha^H_n=p+(1-p)\frac{q^*-(1-p)}{p}>q^*.$$
	and thus, by the law of large numbers, $\limninf\varphi(\alpha_n^H,n-1,B_n-1)=1.$
\end{proof}

\begin{customthm}{\ref{thm:revenue}} For any crowdfunding game with prior $\mu$, signal quality $p$ and a moderate price $\tau$, the participation ratio in a large campaign is given by:
	\begin{equation}\label{eq:revenue}
	R(\mu,p,\tau) =\mu(1+\frac{1-p}{p}\frac{1-\tau}{\tau})
	\end{equation}
\end{customthm}
\begin{proof}
	
	First we show that for any $\mu,p$ and a moderate price $\tau,$ $R(\mu,p,\tau)\ge \mu(1+\frac{1-p}{p}\frac{1-\tau}{\tau}).$ Let $\{B_n\}$ be a sequence of thresholds such that $\limninf\frac{B_n}{n}=q$ for some $q>1-p.$ By Lemma \ref{lem:asympt_probs}, for sufficently large $n,$ the low player is indifferent between both actions and thus,
	\begin{equation*}
	\begin{split}
	&\limninf E_{\Gamma^n,\sigma^*_n} u(a=1|s=l)=0\Leftrightarrow\\
	&\mu(1-p)(1-\tau)\limninf\varphi(\alpha_n^H,n-1,B_n-1)=(1-\mu)p\tau \limninf\varphi(\alpha^L_n,n-1,B_n-1).
	\end{split}
	\end{equation*}
	By Corollary \ref{cor:mod_price_H_1} $\limninf\varphi(\alpha^H_n,n-1,B_n-1)=1$ and thus,
	\begin{equation}\label{eq:prob_l1}
	\limninf\varphi(\alpha^L_n,n-1,B_n-1)=\frac{\mu(1-p)(1-\tau)}{(1-\mu)p\tau}.
	\end{equation}
	
	Recall that by Lemma \ref{lem:almost_unpivotal}, in large crowdfunding games players are (almost) non-pivotal as,
	\begin{equation}
	\begin{split}
	&\limninf\varphi(\alpha^\omega_n,n-1,B_n-1)=\limninf Pr_{\Gamma_n,\sigma^*_n} (c^\omega_n-1\ge B_n-1)=\\
	&\limninf Pr_{\Gamma_n,\sigma^*_n} (\frac{c^\omega_n}{n}-\frac{1}{n}>q-\frac{1}{n})=
	\limninf\varphi(\alpha^\omega_n,n,B_n).
	\end{split}
	\end{equation}
	Thus, by the definition of Participation
	\begin{equation}
	\begin{split}
	&\limninf R(B_n,n\mu,p,\tau)\ge q(\mu\limninf\varphi(\alpha^H_n,n,B_n)+(1-\mu)\limninf\varphi(\alpha^L_n,n,B_n))=\\
	&q(\mu+(1-\mu)\frac{\mu(1-p)(1-\tau)}{(1-\mu)p\tau}).
	\end{split}
	\end{equation}
	By taking $q$ to $1$ we can see that,
	$$
	R(\mu,p,\tau)\ge (\mu+(1-\mu)\frac{\mu(1-p)(1-\tau)}{(1-\mu)p\tau})=\mu(1+\frac{1-p}{p}\frac{1-\tau}{\tau}).
	$$
	
	Next we prove that $R(\mu,p,\tau)\le\mu(1+\frac{1-p}{p}\frac{1-\tau}{\tau}).$
	By Corollary \ref{cor:mod_price_H_1}, $\limninf \varphi(\alpha^H_n,n-1,B_n-1)=1.$ Next note that
	\begin{flalign*}
	&\limninf\varphi(\alpha^H_n,n-1,B_n-1)= \\
	&\limninf Pr_{\Gamma_n,\sigma^*_n} (\frac{c^H_n-1}{n}\ge \frac{B_n-1}{n})=\\
	&\limninf Pr_{\Gamma_n,\sigma^*_n} (\frac{c^H_n}{n}\ge \frac{B_n}{n})=\varphi(\alpha^H_n,n,B_n). 
	\end{flalign*}
	Thus the maximal participation is bounded from above by
	$$
	R(\mu,p,\tau)\le \mu+(1-\mu)y^*
	$$
	where $y^*$ is the solution of the following linear programming:
	\begin{equation}
	\begin{aligned}
	& {\text{max}}
	& & \mu + (1-\mu )y \\
	& \text{s.t.} & & 1\geq  y\geq 0 \\
	& & &  (1-p)\mu (1-\tau)-p(1-\mu)y\tau\ge 0. \\
	\end{aligned}
	\end{equation}
	Which is simply $y^*=\frac{(1-p)\mu(1-\tau)}{(1-p)\mu\tau}$ and thus
	$$
	R(\mu,p,\tau)\le\mu(1+\frac{1-p}{p}\frac{1-\tau}{\tau}).	
	$$
\end{proof}

\subsection{Proofs for expensive prices.}\label{sec:expensive_price_proofs}

\begin{customthm}{\ref{lem:complement_11}}
	Let $\{\Gamma(n,B_n,\mu,p,\tau)\}_n$ be a sequence of expensively priced crowdfunding games such that $\lim_{n\rightarrow\infty} \frac{B_n}{n}=q$ for some $q\in[0,1]$. Then the limit equilibrium strategy is:
	\begin{equation}
	\lim_{n\rightarrow\infty} \sigma^*_n(l)=\begin{cases}
	0&\mbox{ if } q\leq 1-p\\
	\frac{q-(1-p)}{p}&\mbox{ otherwise}
	\end{cases}\mbox{ and }
	\lim_{n\rightarrow\infty} \sigma^*_n(h)= 	
	\begin{cases}
	\frac{q}{1-p}&\mbox{ if } q\leq 1-p\\
	1&\mbox{ otherwise}
	\end{cases}
	\end{equation}
\end{customthm}

%

\begin{proof}
	
	First we prove the lemma for all sequences for which $q>1-p.$ We start by showing that in this case $\lim_{n\rightarrow\infty} \sigma^*_n(h)=1.$ Assume to the contrary that there exists some $\lim_{n\rightarrow\infty} \frac{B_n}{n}=q>1-p$ for which $\lim_{n\rightarrow\infty} \sigma^*_n(h)<1.$
	Then, by Lemma \ref{lem:psi_ge_lambda}, for sufficiently large $n$ it must be that $\sigma^*_n(l)=0.$  Therefore, by the Law of Large Numbers,
	$$\limninf c^L_n=\limninf\alpha^L_n<1-p<q$$ and thus $\limninf\varphi(\alpha^L_n,n-1,B_n-1)=0.$
	In addition, as $\sigma^*_n(h)<1$ the following inequality must be satisfied for the high-type player,
	$$
	\limninf (p\mu(1-\tau)\varphi(\alpha^H_n,n-1,B_n-1)-(1-p)(1-\mu)\tau\varphi(\alpha^L_n,n-1,B_n-1))\le 0.
	$$
	This entails a contradiction as $\sigma^*_n$ is a non trivial equilibrium and thus $\varphi(\alpha^H_n,B_n-1,n-1)>0.$
	
	As $\limninf \sigma^*_n(h)=1,$ the proof that $\limninf \sigma^*_n(l)=\frac{q-(1-p)}{p}$ is similar to the case of moderate prices and thus omitted.
	
	Next we prove the lemma for all sequences for which $q\le 1-p.$  We start by showing that in this case $\lim_{n\rightarrow\infty} \sigma^*_n(l)=0.$ Assume to the contrary that there exists some $\lim_{n\rightarrow\infty} \frac{B_n}{n}=q\le 1-p$ for which $\lim_{n\rightarrow\infty} \sigma^*_n(l)>0.$ Then, by Lemma \ref{lem:psi_ge_lambda}, for sufficiently large $n$ it must be that $\sigma^*_n(h)=1.$  Therefore, by the Law of Large Numbers,
	$$\limninf c^L_n= \limninf\alpha^L_n>1-p\ge q$$ and thus $\limninf\varphi(\alpha^L_n,n-1,B_n-1)=1.$
	In addition, as $\sigma^*_n(l)>0$ the following inequality must be satisfied for the low-type player,
	\begin{flalign*}
	&\limninf ((1-p)\mu(1-\tau)\varphi(\alpha^H_n,n-1,B_n-1)-p (1-\mu)\tau\varphi(\alpha^L_n,n-1,B_n-1))=\\
	&(1-p)\mu(1-\tau)-p (1-\mu)\tau\ge 0.
	\end{flalign*}
	This entails a contradiction as the price is expensive. Therefore $\lim_{n\rightarrow\infty} \sigma^*_n(l)=0.$
	
	Next assume by contradiction that there exist a sequence  $\lim_{n\rightarrow\infty} \frac{B_n}{n}=q\le 1-p$ for which  $\lim_{n\rightarrow\infty} \sigma^*_n(h)>\frac{q}{1-p}.$ By the law of large numbers, $$\limninf c^L_n= \limninf\alpha^L_n=(1-p)\limninf\sigma^*_n(h)\ge q$$ and thus $\limninf\varphi(\alpha^L_n,n-1,B_n-1)=1.$
	
	As $\lim_{n\rightarrow\infty} \sigma^*_n(h)>0$ the following inequality must be satisfied for high players,
	\begin{flalign*}
	&\limninf p\mu(1-\tau) \varphi(\alpha^H_n,n-1,B_n-1)\\
	&-(1-p)(1-\mu)\tau \varphi(\alpha^L_n,n-1,B_n-1)\ge 0\\
	&\Leftrightarrow
	  p\mu(1-\tau) - (1-p)(1-\mu)\tau\ge 0,	
	\end{flalign*}
	a contradiction as the price is expensive.
	
	Finally, assume by contradiction that there exist a sequence  $\lim_{n\rightarrow\infty} \frac{B_n}{n}=q\le 1-p$ for which  $\lim_{n\rightarrow\infty} \sigma^*_n(h)<\frac{q}{1-p}.$ By the law of large numbers, $$\limninf c^L_n= \limninf\alpha^L_n=(1-p)\limninf\sigma^*_n(h)< q$$ and thus $\limninf\varphi(\alpha^L_n,n-1,B_n-1)=0.$
	
	As $\lim_{n\rightarrow\infty} \sigma^*_n(l)=0$ the following inequality must be satisfied for low players,
	\begin{flalign*}
	&\limninf p\mu(1-\tau) \varphi(\alpha^H_n,n-1,B_n-1)-\\
	&(1-p)(1-\mu)\tau \varphi(\alpha^L_n,n-1,B_n-1)\le 0\Leftrightarrow\\
	&  p\mu(1-\tau)\limninf\varphi(\alpha^H_n,,n-1,B_n-1)\le 0,	
	\end{flalign*}
	a contradiction as the equilibrium is non-trivial.
\end{proof}

\begin{customthm}{\ref{thm:correct_asym_game_e_price}}
	\begin{equation}
	\theta(\mu,p,\tau)=
	1-\frac{1-p}{p}\frac{1-\tau}{\tau}\mu.
	\end{equation}
\end{customthm}
\begin{proof}
	First we show that 	$\theta(\mu,p,\tau)\ge	1-\frac{1-p}{p}\frac{1-\tau}{\tau}\mu.$ Consider the sequence $\{B_n\}$ where $B_n=\frac{n}{2}$ and thus $\limninf \frac{B_n}{n}=\frac{1}{2}>1-p.$ By Theorem \ref{lem:complement_11} the limit of the sequence of non-trivial Bayes-Nash equilibria is
	\begin{equation}\label{eq:e_price_lim_probs}
	\limninf \sigma^*_n(s_i)=\begin{cases}
	1&\mbox{ if }s_i=h\\
	\frac{\frac{1}{2}-(1-p)}{p}=\frac{2p-1}{p}&\mbox{ if }s_i=l
	\end{cases}.
	\end{equation}
	Therefore for sufficiently large $n$, the low player must be indifferent between both actions and the following equality must be satisfied,
	\begin{flalign*}
	&E_{\sigma^*_n} u(a=1|s=l)=0\Leftrightarrow\\
	& (1-p)\mu(1-\tau)\limninf\varphi(\alpha^H_n,n-1,B_n-1)\\
	&-p(1-\mu)\tau\limninf\varphi(\alpha^L_n,n-1,B_n-1)=0.
	\end{flalign*}
	
	In addition, by equation \ref{eq:e_price_lim_probs}, $\limninf c^H_n=\limninf\alpha^H_n=p+(1-p)\frac{2p-1}{p}>p>\frac{1}{2}$ and thus $\limninf\varphi(\alpha^H_n,n-1,B_n-1)=1$ which yields,
	$$
	\limninf\varphi(\alpha^L_n,n-1,B_n-1)=\frac{1-p}{p}\frac{\mu}{1-\mu}\frac{1-\tau}{\tau}.	
	$$
	
	Recall that by Lemma \ref{lem:almost_unpivotal}, in large crowdfunding games players are (almost) non-pivotal as,
	\begin{equation}
	\begin{split}
	&\limninf\varphi(\alpha^\omega_n,n-1,B_n-1)=\limninf Pr_{\Gamma_n,\sigma^*_n} (c^\omega_n-1\ge B_n-1)=\\
	&\limninf Pr_{\Gamma_n,\sigma^*_n} (\frac{c^\omega_n}{n}-\frac{1}{n}>q-\frac{1}{n})=
	\limninf\varphi(\alpha^\omega_n,n,B_n).
	\end{split}
	\end{equation} Thus, by definition of the correctness index,
	$$
	\theta(\mu,p,\tau)\ge\limninf \theta(n,\frac{n}{2},\mu,p,\tau)=\mu+(1-\mu)(1-\frac{1-p}{p}\frac{\mu}{1-\mu}\frac{1-\tau}{\tau})=1-\frac{1-p}{p}\frac{1-\tau}{\tau}\mu.
	$$
	
	Next we show that $\theta(\mu,p,\tau)\le 1-\frac{1-p}{p}\frac{1-\tau}{\tau}\mu.$ Let $\{B^*_n\}$ be a sequence of thresholds for which the optimal correction is achieved. We show that $\limninf \frac{B^*_n}{n}=q^*>1-p.$ Assume by contradiction that $\limninf \frac{B^*_n}{n}=q^*\le 1-p.$ By Lemma \ref{eq:e_price_lim_probs}, this entails that,
	$$
	\limninf c^H_n=\limninf\alpha^H_n=p\frac{q^*}{1-p}>q^*,
	$$
	and thus $\limninf \varphi(\limninf\alpha^H_\infty,B^*_n-1,n-1)=1.$
	Again by Lemma \ref{eq:e_price_lim_probs}, the high player is eventually indifferent between both actions and thus the following equality must be satisfied,
	$$
	\limninf E_{\sigma^*_n} u(a=1|s=h)=0\Leftrightarrow p\mu(1-\tau)-(1-p)(1-\mu)\tau \limninf \varphi(\limninf\alpha^L_n,n-1,B^*_n-1)=0
	$$
	and thus $\varphi(\limninf\alpha^L_n,B^*_n-1,n-1)=\frac{p\mu(1-\tau)}{(1-p)(1-\mu)\tau}.$
	
	By the definition of correctness and by Lemma \ref{lem:almost_unpivotal},
	$$
	\limninf \theta(n,B^*_n,\mu,p,\tau)=\mu+(1-\mu)(1-\frac{p\mu(1-\tau)}{(1-p)(1-\mu)\tau})=1-\frac{p}{1-p}\frac{1-\tau}{\tau}\mu<1-\frac{1-p}{p}\frac{1-\tau}{\tau}\mu,
	$$ a contradiction as we saw that the latter expression can be achieved for $B_n=\frac{n}{2}.$
	
	As $\limninf \frac{B^*_n}{n}=q^*>1-p$, by Lemma \ref{eq:e_price_lim_probs}, high player eventually surely contribute and low players eventually become indifferent between choosing both actions, thus  $\limninf \varphi(\alpha^H_nn-1,B^*_n-1)=1.$ Therefore the maximal correctness index is bounded from above by the following linear programming problem,
	\begin{equation*}
	\begin{aligned}
	& {\text{max}}
	& & \mu +(1-\mu)(1-y) \\
	& \text{s.t.} & & 1\geq y\geq 0 \\
	& & &  (1-p)\mu (1-\tau)-p(1-\mu)y\tau=0. \\
	\end{aligned}
	\end{equation*}
	Which is simply
	$$y=\frac{1-p}{p}{\mu}{1-\mu}\frac{1-\tau}{\tau}.$$ and thus by definition of asymptotic correctness,
	$$
	\theta(\mu,p,\tau)\le 1-\frac{1-p}{p}\frac{1-\tau}{\tau}\mu.
	$$
\end{proof}

\begin{customthm}{\ref{thm:part_asym_game1}}
	\begin{itemize}
		\item  If $\mu<\frac{1}{3}$ and $p\le\sqrt{3}-1,$ or if $\mu<\frac{1}{3}, p>\sqrt{3}-1$ and $\tau>\frac{2\mu }{(1-\mu)p+2(1-p)\mu)}$ then	 	
		\begin{equation*}
		\lim_{n\rightarrow\infty}\max_{B\in\{1\dots n\}} R_\mu(n,B,\tau)=\mu p+(1-\mu)\frac{1-p}{2}.
		\end{equation*}
		\item Otherwise,	
		\begin{equation*}
		\lim_{n\rightarrow\infty}\max_{B\in\{1\dots n\}} R_\mu(n,B,\tau)=
		\mu(1+\frac{1-p}{p}\frac{1-\tau}{\tau})
		\end{equation*}	
	\end{itemize}
\end{customthm}
\begin{proof}
	First note that in any game $\Gamma(\Params),$ the participation is bounded by
	\begin{flalign}\label{eq:bounds_of_R}
	&1(\mu\varphi(\alpha^H,n-1,B-1)+(1-\mu) \varphi(\alpha^L,n-1,B-1))\ge R(\Params)\ge\\
	& \ge\frac{B}{n}(\mu\varphi(\alpha^H,n-1,B-1)+(1-\mu) \varphi(\alpha^L,n-1,B-1)).\nonumber
	\end{flalign}
	
	Next we show that the asymptotic participation  is bounded from bellow by the following expression
	$$R(\mu,p,\tau)\ge \mu(1+\frac{1-p}{p}\frac{1-\tau}{\tau}).$$
	Consider a sequence $\{B_n\}$ such that $B_n=\frac{n}{2}.$ As $\limninf \frac{B_n}{n}=\frac{1}{2}>1-p,$ by Theorem \ref{lem:complement_11} and the law of large numbers we get that the proportion of contribution when the state is $H$ converges to,
	$\limninf c^H_n=\alpha^H_n=p+(1-p)\frac{\frac{1}{2}-(1-p)}{p}>\frac{1}{2}$ and thus the probability that the product succeeds in state $H$ approaches $\limninf\varphi(\alpha^H_n,n,B_n)=1.$
	
	In addition, by Lemma \ref{lem:asympt_probs_e_price}, the low player is eventually indifferent between both actions, thus the following equality is satisfied for large $n$s,
	$$
	\limninf E_{\sigma^*_n}u(a=1|s=l)=0\Rightarrow (1-p)\mu(1-\tau)-p(1-\mu)\tau\limninf\varphi(\alpha^H_n,B_n-1,n-1)=0,
	$$
	and thus $\limninf\varphi(\alpha^H_n,B_n-1,n-1)=\frac{1-p}{p}\frac{1-\tau}{\tau}\frac{\mu}{1-\mu}.$
	Recall that, by Lemma \ref{lem:almost_unpivotal} in large crowdfunding games players are (almost) non-pivotal as,
	\begin{equation}
	\begin{split}
	&\limninf\varphi(\alpha^\omega_n,n-1,B_n-1)=\limninf Pr_{\Gamma_n,\sigma^*_n} (c^\omega_n-1\ge B_n-1)=\\
	&\limninf Pr_{\Gamma_n,\sigma^*_n} (\frac{c^\omega_n}{n}-\frac{1}{n}>q-\frac{1}{n})=
	\limninf\varphi(\alpha^\omega_n,n,B_n).
	\end{split}
	\end{equation}
	
	By the definition of participation and asymptotic participation then,
	\begin{equation}\label{eq:con_1_e_price_R}
	R(\mu,p,\tau)\ge\limninf R(n,\frac{n}{2},\mu,p,\tau)=\mu(1+\frac{1-p}{p}\frac{1-\tau}{\tau}).
	\end{equation}
	
	Next let $\{B^*_n\}$ be the sequence for which the maximal asymptotic participation is achieved. We divide the proof into two cases. (1) where $\limninf\frac{B^*_n}{n}=q^*\ge 1-p$ and (2)$\limninf\frac{B^*_n}{n}=q^*< 1-p.$
	
	\textbf{Case (1) $\limninf\frac{B^*_n}{n}=q^*\ge 1-p$:} Similarily to our discussion for the sequence  $\{\frac{n}{2}$ above, by Lemma \ref{eq:e_price_lim_probs}, eventually high players contribute and low players are indifferent between both actions and thus $\limninf\varphi(\alpha^H_n,n-1,B^*_n-1)=1$ and $\limninf\varphi(\alpha^L_n,n-1,B^*_n-1)=\frac{1-p}{p}\frac{1-\tau}{\tau}\frac{\mu}{1-\mu}.$
	Therefore by equation \ref{eq:bounds_of_R}, the following condition is satisfied,
	\begin{equation}
	\limninf R(n,B^*n,\mu,p,\tau)\le \mu+(1-\mu)\frac{1-p}{p}\frac{1-\tau}{\tau}\frac{\mu}{1-\mu}=\mu(1+\frac{1-p}{p}\frac{1-\tau}{\tau}).
	\end{equation}
	This, together with equation \eqref{eq:con_1_e_price_R} yields that if the maximal asymptotic participation is acheived for a sequence where  $\limninf\frac{B^*_n}{n}=q^*\ge 1-p,$ then
	$$
	R(\mu,p,\tau=\mu(1+\frac{1-p}{p}\frac{1-\tau}{\tau}))\equiv R_1.
	$$
	We denote this expression by $R_1.$
	
	\textbf{Case (2) $\limninf\frac{B^*_n}{n}=q^*< 1-p$:} 	We can calculate the expected number of contributions provided that the threshold is reached using the laws of conditional expectation. First we present the calculations for $\omega=H.$ As before, one can see that in state $H$ the campaign will eventually succeeds. To see that recall by Theorem \ref{lem:complement_11},
	$$
	\limninf c^H_n=\alpha_H(\limninf\sigma^*_n)=p\frac{q}{1-p}>q
	$$
	and hence by the law of large numbers
	$\limninf\varphi(\alpha^H_n,n-1,B_n-1)=1.$ Therefore, at $H,$ the sum of contributions conditional on the campaign success equals the expected number of contributions and,
	\begin{equation}\label{eq:e_price_con1}
	\limninf c^H_n|_{c^H_n\ge B_n-1}=\limninf \frac{c^H_n}{n}=p\frac{q}{1-p}.
	\end{equation}
	Next we calculate the expected number of contributions when $omega=L$ and the population size increases.  We will use the "Binomial approximation to Normal distribution" (see \cite{Feller1968}).
	
	By \cite{Feller1968} pp. 174 - 187,
	Let $Z\sim Bin(n,\gamma)$ be a random variable for some $\gamma<1$ and define the transformation
	$$
	R=\frac{Z-n\gamma}{\sqrt{n\gamma(1-\gamma)}} .
	$$
	By DeMoivre-Laplace limit theorem, as $q<1-p<0.5$ and $n\rightarrow \infty$, the distribution of $R$ approaches  the standard normal distribution (See \cite{Feller1968} page 186, Theorem 2 and equation 3.18.). Therefore
	$$\lim_{n\rightarrow\infty}Pr(Z>n\gamma)=\lim_{n\rightarrow\infty}Pr(R>0)=\frac{1}{2}.$$
	Next we calculate the conditional expected value of $Z,$
	\begin{equation}\label{eq:Bin_appx}
	\begin{split}
	&E(Z|Z>n\gamma)=E(\sqrt{n\gamma(1-\gamma)}*R+n\gamma|\sqrt{n\gamma(1-\gamma)}*R+n\gamma>n\gamma)=\\
	&\sqrt{n\gamma(1-\gamma)}E(R|\sqrt{n\gamma(1-\gamma)}*R>0)+n\gamma=\\
	&\sqrt{n\gamma(1-\gamma)}E(R|R>0)+n\gamma=\\
	&\sqrt{n\gamma(1-\gamma)}\sqrt{\frac{2}{\pi}}+n\gamma.
	\end{split}
	\end{equation}
	
	As $q<1-p,$ By Theorem \ref{lem:complement_11} we get that
	$$
	\limninf c^L_n = \alpha^L_n=(1-p)\frac{q}{1-p}=q.
	$$
	
	Therefore, by equation \eqref{eq:Bin_appx},
	\begin{equation}\label{eq:e_price_con2}
	\limninf E_{\sigma^*_n} c^L_n|\frac{c^L_n}{n}>q = \limninf\frac{1}{2}(\sqrt{\frac{2}{\pi}\frac{q(1-q)}{n}}+q)=\frac{q}{2}.
	\end{equation}
	
	We plug in equations \eqref{eq:e_price_con1} and \eqref{eq:e_price_con2} and by the law of large numbers,
	\begin{equation}\label{eq:e_price_con2+1}
	\limninf R(n,B_n,\mu,p,\tau)=\mu p\frac{q}{1-p}+(1-\mu)\frac{q}{2}.
	\end{equation}
	The expression in equation \eqref{eq:e_price_con2+1} increases in $q$ therefore it will be reached when $q^*\rightarrow (1-p)^-.$
	Therefore if $\limninf \frac{B^*_n}{n}=q^*<1-p$ we get that
	$$
	R(\mu,p,\tau)=p\mu+(1-\mu)\frac{1-p}{2}\equiv R_2.
	$$
	We denote the expression by $R_2.$
	
	For any 3-tuple $(\mu,p,\tau)$ of expensive prices, the asymptotic participation index will be $R(\mu,p,\tau)=\max\{R_1,R_2\}.$
	Simple algebraic calculations show that $R_1\ge R_2$ if $\mu\ge\frac{1}{3}$  or if  $\mu<\frac{1}{3}$ and $p\le\sqrt{3}-1$   or if   $\mu<\frac{1}{3}$ and $p>\sqrt{3}-1$ and $\tau>\frac{2\mu}{(1-\mu)p+2(1-p)\mu}.$ Otherwise $R_2>R_1.$
\end{proof}	
\section{Calculations for Crowdfunding Games.}
\label{sec:finite_tab}
\begin{longtable}[c]{|c|c|c|c|c|c|c|c|c|}
	\caption{Calculations of equilibrium strategies and efficiency for Small Crowdfunding Games }
	\label{tab:finite_tab}\\
	\hline
	$\mu$ & $n$     & $p$   & $B$    & $\tau$ & $\psi$ & $\lambda$ & $\Theta$ & $R$   \\ \hline
	\endhead
	0.2 & 40   & 0.55 & 2    & 0.5 & 0 & 0  & 0.800 & 0 \\ \hline
	0.2 & 40   & 0.55 & 20   & 0.5 & 0.894 & 0  & 0.795 & 0.115 \\ \hline
	0.2 & 40   & 0.55 & 2    & 0.7 & 0 & 0  & 0.800 & 0 \\ \hline
	0.2 & 40   & 0.55 & 20   & 0.7 & 0.670 & 0  & 0.807 & 0.009 \\ \hline
	0.2 & 40   & 0.75 & 2    & 0.5 & 0.135 & 0  & 0.670 & 0.040 \\ \hline
	0.2 & 40   & 0.75 & 20   & 0.5 & 1 & 0.156  & 0.953 & 0.182 \\ \hline
	0.2 & 40   & 0.75 & 2    & 0.7 & 0 & 0  & 0.800 & 0 \\ \hline
	0.2 & 40   & 0.75 & 20   & 0.7 & 1 & 0.114  & 0.981 & 0.165 \\ \hline
	0.5 & 40   & 0.55 & 2    & 0.5 & 1 & 0  & 0.500 & 0.500 \\ \hline
	0.5 & 40   & 0.55 & 20   & 0.5 & 1 & 0.164  & 0.606 & 0.517 \\ \hline
	0.5 & 40   & 0.55 & 2    & 0.7 & 0 & 0  & 0.500 & 0 \\ \hline
	0.5 & 40   & 0.55 & 20   & 0.7 & 1 & 0  & 0.736 & 0.313 \\ \hline
	0.5 & 40   & 0.75 & 2    & 0.5 & 1 & 0  & 0.500 & 0.500 \\ \hline
	0.5 & 40   & 0.75 & 20   & 0.5 & 1 & 0.254  & 0.863 & 0.480 \\ \hline
	0.5 & 40   & 0.75 & 2    & 0.7 & 1 & 0  & 0.500 & 0.500 \\ \hline
	0.5 & 40   & 0.75 & 20   & 0.7 & 1 & 0.188  & 0.947 & 0.426 \\ \hline
	0.7 & 40   & 0.55 & 2    & 0.5 & 1 & 1  & 0.700 & 1 \\ \hline
	0.7 & 40   & 0.55 & 20   & 0.5 & 1 & 1  & 0.700 & 1 \\ \hline
	0.7 & 40   & 0.55 & 2    & 0.7 & 1 & 0  & 0.700 & 0.520 \\ \hline
	0.7 & 40   & 0.55 & 20   & 0.7 & 1 & 0.164  & 0.748 & 0.553 \\ \hline
	0.7 & 40   & 0.75 & 2    & 0.5 & 1 & 0  & 0.700 & 0.600 \\ \hline
	0.7 & 40   & 0.75 & 20   & 0.5 & 1 & 0.380  & 0.781 & 0.717 \\ \hline
	0.7 & 40   & 0.75 & 2    & 0.7 & 1 & 0  & 0.700 & 0.600 \\ \hline
	0.7 & 40   & 0.75 & 20   & 0.7 & 1 & 0.254  & 0.918 & 0.613 \\ \hline
	0.9 & 40   & 0.55 & 2    & 0.5 & 1 & 1  & 0.900 & 1 \\ \hline
	0.9 & 40   & 0.55 & 20   & 0.5 & 1 & 1  & 0.900 & 1 \\ \hline
	0.9 & 40   & 0.55 & 2    & 0.7 & 1 & 1  & 0.900 & 1 \\ \hline
	0.9 & 40   & 0.55 & 20   & 0.7 & 1 & 1  & 0.900 & 1 \\ \hline
	0.9 & 40   & 0.75 & 2    & 0.5 & 1 & 1  & 0.900 & 1 \\ \hline
	0.9 & 40   & 0.75 & 20   & 0.5 & 1 & 1  & 0.900 & 1 \\ \hline
	0.9 & 40   & 0.75 & 2    & 0.7 & 1 & 1  & 0.900 & 1 \\ \hline
	0.9 & 40   & 0.75 & 20   & 0.7 & 1 & 1  & 0.900 & 1 \\ \hline
	0.2 & 100  & 0.55 & 5    & 0.5 & 0 & 0  & 0.800 & 0 \\ \hline
	0.2 & 100  & 0.55 & 24   & 0.5 & 0.417 & 0  & 0.796 & 0.047 \\ \hline
	0.2 & 100  & 0.55 & 44   & 0.5 & 0.887 & 0  & 0.786 & 0.171 \\ \hline
	0.2 & 100  & 0.55 & 50   & 0.5 & 1 & 0  & 0.827 & 0.174 \\ \hline
	0.2 & 100  & 0.55 & 98   & 0.5 & 1 & 0.818  & 0.800 & 0.003 \\ \hline
	0.2 & 100  & 0.55 & 5    & 0.7 & 0 & 0  & 0.800 & 0 \\ \hline
	0.2 & 100  & 0.55 & 24   & 0.7 & 0.316 & 0  & 0.807 & 0.004 \\ \hline
	0.2 & 100  & 0.55 & 44   & 0.7 & 0.800 & 0  & 0.859 & 0.073 \\ \hline
	0.2 & 100  & 0.55 & 50   & 0.7 & 0.943 & 0  & 0.875 & 0.106 \\ \hline
	0.2 & 100  & 0.55 & 98   & 0.7 & 1 & 0.748  & 0.800 & 0 \\ \hline
	0.2 & 100  & 0.75 & 5    & 0.5 & 0.204 & 0  & 0.534 & 0.061 \\ \hline
	0.2 & 100  & 0.75 & 24   & 0.5 & 1 & 0  & 0.497 & 0.289 \\ \hline
	0.2 & 100  & 0.75 & 44   & 0.5 & 1 & 0.150  & 0.947 & 0.182 \\ \hline
	0.2 & 100  & 0.75 & 50   & 0.5 & 1 & 0.228  & 0.946 & 0.189 \\ \hline
	0.2 & 100  & 0.75 & 98   & 0.5 & 1 & 0.919  & 0.891 & 0.175 \\ \hline
	0.2 & 100  & 0.75 & 5    & 0.7 & 0.114 & 0  & 0.861 & 0.024 \\ \hline
	0.2 & 100  & 0.75 & 24   & 0.7 & 0.836 & 0  & 0.794 & 0.179 \\ \hline
	0.2 & 100  & 0.75 & 44   & 0.7 & 1 & 0.123  & 0.978 & 0.166 \\ \hline
	0.2 & 100  & 0.75 & 50   & 0.7 & 1 & 0.201  & 0.978 & 0.171 \\ \hline
	0.2 & 100  & 0.75 & 98   & 0.7 & 1 & 0.902  & 0.895 & 0.124 \\ \hline
	0.5 & 100  & 0.55 & 5    & 0.5 & 1 & 0  & 0.500 & 0.500 \\ \hline
	0.5 & 100  & 0.55 & 24   & 0.5 & 1 & 0  & 0.500 & 0.500 \\ \hline
	0.5 & 100  & 0.55 & 44   & 0.5 & 1 & 0.045  & 0.606 & 0.478 \\ \hline
	0.5 & 100  & 0.55 & 50   & 0.5 & 1 & 0.154  & 0.603 & 0.526 \\ \hline
	0.5 & 100  & 0.55 & 98   & 0.5 & 1 & 0.954  & 0.561 & 0.594 \\ \hline
	0.5 & 100  & 0.55 & 5    & 0.7 & 0.021 & 0  & 0.502 & 0 \\ \hline
	0.5 & 100  & 0.55 & 24   & 0.7 & 0.493 & 0  & 0.711 & 0.162 \\ \hline
	0.5 & 100  & 0.55 & 44   & 0.7 & 0.957 & 0  & 0.751 & 0.366 \\ \hline
	0.5 & 100  & 0.55 & 50   & 0.7 & 1 & 0.032  & 0.814 & 0.339 \\ \hline
	0.5 & 100  & 0.55 & 98   & 0.7 & 1 & 0.866  & 0.518 & 0.037 \\ \hline
	0.5 & 100  & 0.75 & 5    & 0.5 & 1 & 0  & 0.500 & 0.500 \\ \hline
	0.5 & 100  & 0.75 & 24   & 0.5 & 1 & 0  & 0.686 & 0.462 \\ \hline
	0.5 & 100  & 0.75 & 44   & 0.5 & 1 & 0.211  & 0.854 & 0.469 \\ \hline
	0.5 & 100  & 0.75 & 50   & 0.5 & 1 & 0.291  & 0.852 & 0.489 \\ \hline
	0.5 & 100  & 0.75 & 98   & 0.5 & 1 & 0.951  & 0.795 & 0.575 \\ \hline
	0.5 & 100  & 0.75 & 5    & 0.7 & 1 & 0  & 0.500 & 0.500 \\ \hline
	0.5 & 100  & 0.75 & 24   & 0.7 & 1 & 0  & 0.686 & 0.462 \\ \hline
	0.5 & 100  & 0.75 & 44   & 0.7 & 1 & 0.170  & 0.941 & 0.423 \\ \hline
	0.5 & 100  & 0.75 & 50   & 0.7 & 1 & 0.249  & 0.940 & 0.437 \\ \hline
	0.5 & 100  & 0.75 & 98   & 0.7 & 1 & 0.931  & 0.823 & 0.422 \\ \hline
	0.7 & 100  & 0.55 & 5    & 0.5 & 1 & 1  & 0.700 & 1 \\ \hline
	0.7 & 100  & 0.55 & 24   & 0.5 & 1 & 1  & 0.700 & 1 \\ \hline
	0.7 & 100  & 0.55 & 44   & 0.5 & 1 & 1  & 0.700 & 1 \\ \hline
	0.7 & 100  & 0.55 & 50   & 0.5 & 1 & 1  & 0.700 & 1 \\ \hline
	0.7 & 100  & 0.55 & 98   & 0.5 & 1 & 1  & 0.700 & 1 \\ \hline
	0.7 & 100  & 0.55 & 5    & 0.7 & 1 & 0  & 0.700 & 0.520 \\ \hline
	0.7 & 100  & 0.55 & 24   & 0.7 & 1 & 0  & 0.700 & 0.520 \\ \hline
	0.7 & 100  & 0.55 & 44   & 0.7 & 1 & 0.045  & 0.762 & 0.514 \\ \hline
	0.7 & 100  & 0.55 & 50   & 0.7 & 1 & 0.154  & 0.760 & 0.562 \\ \hline
	0.7 & 100  & 0.55 & 98   & 0.7 & 1 & 0.954  & 0.602 & 0.618 \\ \hline
	0.7 & 100  & 0.75 & 5    & 0.5 & 1 & 0  & 0.700 & 0.600 \\ \hline
	0.7 & 100  & 0.75 & 24   & 0.5 & 1 & 0.015  & 0.783 & 0.589 \\ \hline
	0.7 & 100  & 0.75 & 44   & 0.5 & 1 & 0.290  & 0.777 & 0.685 \\ \hline
	0.7 & 100  & 0.75 & 50   & 0.5 & 1 & 0.371  & 0.776 & 0.713 \\ \hline
	0.7 & 100  & 0.75 & 98   & 0.5 & 1 & 0.977  & 0.759 & 0.908 \\ \hline
	0.7 & 100  & 0.75 & 5    & 0.7 & 1 & 0  & 0.700 & 0.600 \\ \hline
	0.7 & 100  & 0.75 & 24   & 0.7 & 1 & 0  & 0.811 & 0.577 \\ \hline
	0.7 & 100  & 0.75 & 44   & 0.7 & 1 & 0.211  & 0.912 & 0.603 \\ \hline
	0.7 & 100  & 0.75 & 50   & 0.7 & 1 & 0.291  & 0.911 & 0.623 \\ \hline
	0.7 & 100  & 0.75 & 98   & 0.7 & 1 & 0.951  & 0.827 & 0.692 \\ \hline
	0.9 & 100  & 0.55 & 5    & 0.5 & 1 & 1  & 0.900 & 1 \\ \hline
	0.9 & 100  & 0.55 & 24   & 0.5 & 1 & 1  & 0.900 & 1 \\ \hline
	0.9 & 100  & 0.55 & 44   & 0.5 & 1 & 1  & 0.900 & 1 \\ \hline
	0.9 & 100  & 0.55 & 50   & 0.5 & 1 & 1  & 0.900 & 1 \\ \hline
	0.9 & 100  & 0.55 & 98   & 0.5 & 1 & 1  & 0.900 & 1 \\ \hline
	0.9 & 100  & 0.55 & 5    & 0.7 & 1 & 1  & 0.900 & 1 \\ \hline
	0.9 & 100  & 0.55 & 24   & 0.7 & 1 & 1  & 0.900 & 1 \\ \hline
	0.9 & 100  & 0.55 & 44   & 0.7 & 1 & 1  & 0.900 & 1 \\ \hline
	0.9 & 100  & 0.55 & 50   & 0.7 & 1 & 1  & 0.900 & 1 \\ \hline
	0.9 & 100  & 0.55 & 98   & 0.7 & 1 & 1  & 0.900 & 1 \\ \hline
	0.9 & 100  & 0.75 & 5    & 0.5 & 1 & 1  & 0.900 & 1 \\ \hline
	0.9 & 100  & 0.75 & 24   & 0.5 & 1 & 1  & 0.900 & 1 \\ \hline
	0.9 & 100  & 0.75 & 44   & 0.5 & 1 & 1  & 0.900 & 1 \\ \hline
	0.9 & 100  & 0.75 & 50   & 0.5 & 1 & 1  & 0.900 & 1 \\ \hline
	0.9 & 100  & 0.75 & 98   & 0.5 & 1 & 1  & 0.900 & 1 \\ \hline
	0.9 & 100  & 0.75 & 5    & 0.7 & 1 & 1  & 0.900 & 1 \\ \hline
	0.9 & 100  & 0.75 & 24   & 0.7 & 1 & 1  & 0.900 & 1 \\ \hline
	0.9 & 100  & 0.75 & 44   & 0.7 & 1 & 1  & 0.900 & 1 \\ \hline
	0.9 & 100  & 0.75 & 50   & 0.7 & 1 & 1  & 0.900 & 1 \\ \hline
	0.9 & 100  & 0.75 & 98   & 0.7 & 1 & 1  & 0.900 & 1 \\ \hline
	0.2 & 1000 & 0.55 & 50   & 0.5 & 0.097 & 0  & 0.790 & 0.016 \\ \hline
	0.2 & 1000 & 0.55 & 240  & 0.5 & 0.516 & 0  & 0.771 & 0.113 \\ \hline
	0.2 & 1000 & 0.55 & 440  & 0.5 & 0.958 & 0  & 0.766 & 0.211 \\ \hline
	0.2 & 1000 & 0.55 & 500  & 0.5 & 1 & 0.065  & 0.844 & 0.195 \\ \hline
	0.2 & 1000 & 0.55 & 980  & 0.5 & 1 & 0.948  & 0.810 & 0.100 \\ \hline
	0.2 & 1000 & 0.55 & 50   & 0.7 & 0.085 & 0  & 0.838 & 0.005 \\ \hline
	0.2 & 1000 & 0.55 & 240  & 0.7 & 0.497 & 0  & 0.904 & 0.078 \\ \hline
	0.2 & 1000 & 0.55 & 440  & 0.7 & 0.936 & 0  & 0.901 & 0.147 \\ \hline
	0.2 & 1000 & 0.55 & 500  & 0.7 & 1 & 0.050  & 0.934 & 0.148 \\ \hline
	0.2 & 1000 & 0.55 & 980  & 0.7 & 1 & 0.939  & 0.811 & 0.022 \\ \hline
	0.2 & 1000 & 0.75 & 50   & 0.5 & 0.214 & 0  & 0.436 & 0.064 \\ \hline
	0.2 & 1000 & 0.75 & 240  & 0.5 & 0.992 & 0  & 0.414 & 0.298 \\ \hline
	0.2 & 1000 & 0.75 & 440  & 0.5 & 1 & 0.223  & 0.938 & 0.189 \\ \hline
	0.2 & 1000 & 0.75 & 500  & 0.5 & 1 & 0.303  & 0.937 & 0.197 \\ \hline
	0.2 & 1000 & 0.75 & 980  & 0.5 & 1 & 0.963  & 0.934 & 0.263 \\ \hline
	0.2 & 1000 & 0.75 & 50   & 0.7 & 0.182 & 0  & 0.782 & 0.039 \\ \hline
	0.2 & 1000 & 0.75 & 240  & 0.7 & 0.930 & 0  & 0.759 & 0.199 \\ \hline
	0.2 & 1000 & 0.75 & 440  & 0.7 & 1 & 0.214  & 0.974 & 0.172 \\ \hline
	0.2 & 1000 & 0.75 & 500  & 0.7 & 1 & 0.294  & 0.973 & 0.178 \\ \hline
	0.2 & 1000 & 0.75 & 980  & 0.7 & 1 & 0.960  & 0.972 & 0.225 \\ \hline
	0.5 & 1000 & 0.55 & 50   & 0.5 & 1 & 0  & 0.500 & 0.500 \\ \hline
	0.5 & 1000 & 0.55 & 240  & 0.5 & 1 & 0  & 0.500 & 0.500 \\ \hline
	0.5 & 1000 & 0.55 & 440  & 0.5 & 1 & 0.006  & 0.596 & 0.462 \\ \hline
	0.5 & 1000 & 0.55 & 500  & 0.5 & 1 & 0.115  & 0.595 & 0.511 \\ \hline
	0.5 & 1000 & 0.55 & 980  & 0.5 & 1 & 0.968  & 0.586 & 0.845 \\ \hline
	0.5 & 1000 & 0.55 & 50   & 0.7 & 0.108 & 0  & 0.738 & 0.039 \\ \hline
	0.5 & 1000 & 0.55 & 240  & 0.7 & 0.533 & 0  & 0.749 & 0.209 \\ \hline
	0.5 & 1000 & 0.55 & 440  & 0.7 & 0.978 & 0  & 0.745 & 0.384 \\ \hline
	0.5 & 1000 & 0.55 & 500  & 0.7 & 1 & 0.078  & 0.831 & 0.379 \\ \hline
	0.5 & 1000 & 0.55 & 980  & 0.7 & 1 & 0.954  & 0.668 & 0.342 \\ \hline
	0.5 & 1000 & 0.75 & 50   & 0.5 & 1 & 0  & 0.500 & 0.500 \\ \hline
	0.5 & 1000 & 0.75 & 240  & 0.5 & 1 & 0  & 0.611 & 0.474 \\ \hline
	0.5 & 1000 & 0.75 & 440  & 0.5 & 1 & 0.243  & 0.840 & 0.477 \\ \hline
	0.5 & 1000 & 0.75 & 500  & 0.5 & 1 & 0.323  & 0.839 & 0.497 \\ \hline
	0.5 & 1000 & 0.75 & 980  & 0.5 & 1 & 0.970  & 0.834 & 0.659 \\ \hline
	0.5 & 1000 & 0.75 & 50   & 0.7 & 1 & 0  & 0.500 & 0.500 \\ \hline
	0.5 & 1000 & 0.75 & 240  & 0.7 & 1 & 0  & 0.611 & 0.474 \\ \hline
	0.5 & 1000 & 0.75 & 440  & 0.7 & 1 & 0.230  & 0.933 & 0.434 \\ \hline
	0.5 & 1000 & 0.75 & 500  & 0.7 & 1 & 0.309  & 0.932 & 0.448 \\ \hline
	0.5 & 1000 & 0.75 & 980  & 0.7 & 1 & 0.966  & 0.929 & 0.565 \\ \hline
	0.7 & 1000 & 0.55 & 50   & 0.5 & 1 & 1  & 0.700 & 1 \\ \hline
	0.7 & 1000 & 0.55 & 240  & 0.5 & 1 & 1  & 0.700 & 1 \\ \hline
	0.7 & 1000 & 0.55 & 440  & 0.5 & 1 & 1  & 0.700 & 1 \\ \hline
	0.7 & 1000 & 0.55 & 500  & 0.5 & 1 & 1  & 0.700 & 1 \\ \hline
	0.7 & 1000 & 0.55 & 980  & 0.5 & 1 & 1  & 0.700 & 1 \\ \hline
	0.7 & 1000 & 0.55 & 50   & 0.7 & 1 & 0  & 0.700 & 0.520 \\ \hline
	0.7 & 1000 & 0.55 & 240  & 0.7 & 1 & 0  & 0.700 & 0.520 \\ \hline
	0.7 & 1000 & 0.55 & 440  & 0.7 & 1 & 0.006  & 0.757 & 0.498 \\ \hline
	0.7 & 1000 & 0.55 & 500  & 0.7 & 1 & 0.115  & 0.757 & 0.547 \\ \hline
	0.7 & 1000 & 0.55 & 980  & 0.7 & 1 & 0.968  & 0.729 & 0.879 \\ \hline
	0.7 & 1000 & 0.75 & 50   & 0.5 & 1 & 0  & 0.700 & 0.600 \\ \hline
	0.7 & 1000 & 0.75 & 240  & 0.5 & 1 & 0  & 0.767 & 0.585 \\ \hline
	0.7 & 1000 & 0.75 & 440  & 0.5 & 1 & 0.268  & 0.770 & 0.677 \\ \hline
	0.7 & 1000 & 0.75 & 500  & 0.5 & 1 & 0.348  & 0.769 & 0.705 \\ \hline
	0.7 & 1000 & 0.75 & 980  & 0.5 & 1 & 0.977  & 0.767 & 0.925 \\ \hline
	0.7 & 1000 & 0.75 & 50   & 0.7 & 1 & 0  & 0.700 & 0.600 \\ \hline
	0.7 & 1000 & 0.75 & 240  & 0.7 & 1 & 0  & 0.767 & 0.585 \\ \hline
	0.7 & 1000 & 0.75 & 440  & 0.7 & 1 & 0.243  & 0.904 & 0.611 \\ \hline
	0.7 & 1000 & 0.75 & 500  & 0.7 & 1 & 0.323  & 0.903 & 0.631 \\ \hline
	0.7 & 1000 & 0.75 & 980  & 0.7 & 1 & 0.970  & 0.900 & 0.792 \\ \hline
	0.9 & 1000 & 0.55 & 50   & 0.5 & 1 & 1  & 0.900 & 1 \\ \hline
	0.9 & 1000 & 0.55 & 240  & 0.5 & 1 & 1  & 0.900 & 1 \\ \hline
	0.9 & 1000 & 0.55 & 440  & 0.5 & 1 & 1  & 0.900 & 1 \\ \hline
	0.9 & 1000 & 0.55 & 500  & 0.5 & 1 & 1  & 0.900 & 1 \\ \hline
	0.9 & 1000 & 0.55 & 980  & 0.5 & 1 & 1  & 0.900 & 1 \\ \hline
	0.9 & 1000 & 0.55 & 50   & 0.7 & 1 & 1  & 0.900 & 1 \\ \hline
	0.9 & 1000 & 0.55 & 240  & 0.7 & 1 & 1  & 0.900 & 1 \\ \hline
	0.9 & 1000 & 0.55 & 440  & 0.7 & 1 & 1  & 0.900 & 1 \\ \hline
	0.9 & 1000 & 0.55 & 500  & 0.7 & 1 & 1  & 0.900 & 1 \\ \hline
	0.9 & 1000 & 0.55 & 980  & 0.7 & 1 & 1  & 0.900 & 1 \\ \hline
	0.9 & 1000 & 0.75 & 50   & 0.5 & 1 & 1  & 0.900 & 1 \\ \hline
	0.9 & 1000 & 0.75 & 240  & 0.5 & 1 & 1  & 0.900 & 1 \\ \hline
	0.9 & 1000 & 0.75 & 440  & 0.5 & 1 & 1  & 0.900 & 1 \\ \hline
	0.9 & 1000 & 0.75 & 500  & 0.5 & 1 & 1  & 0.900 & 1 \\ \hline
	0.9 & 1000 & 0.75 & 980  & 0.5 & 1 & 1  & 0.900 & 1 \\ \hline
	0.9 & 1000 & 0.75 & 50   & 0.7 & 1 & 1  & 0.900 & 1 \\ \hline
	0.9 & 1000 & 0.75 & 240  & 0.7 & 1 & 1  & 0.900 & 1 \\ \hline
	0.9 & 1000 & 0.75 & 440  & 0.7 & 1 & 1  & 0.900 & 1 \\ \hline
	0.9 & 1000 & 0.75 & 500  & 0.7 & 1 & 1  & 0.900 & 1 \\ \hline
	0.9 & 1000 & 0.75 & 980  & 0.7 & 1 & 1  & 0.900 & 1 \\ \hline
	0.2 & 5000 & 0.55 & 250  & 0.5 & 0.107 & 0  & 0.773 & 0.024 \\ \hline
	0.2 & 5000 & 0.55 & 1200 & 0.5 & 0.526 & 0  & 0.763 & 0.116 \\ \hline
	0.2 & 5000 & 0.55 & 2200 & 0.5 & 0.969 & 0  & 0.760 & 0.213 \\ \hline
	0.2 & 5000 & 0.55 & 2500 & 0.5 & 1 & 0.080  & 0.840 & 0.198 \\ \hline
	0.2 & 5000 & 0.55 & 4900 & 0.5 & 1 & 0.500  & 0.800 & 0 \\ \hline
	0.2 & 5000 & 0.55 & 250  & 0.7 & 0.103 & 0  & 0.903 & 0.016 \\ \hline
	0.2 & 5000 & 0.55 & 1200 & 0.7 & 0.518 & 0  & 0.899 & 0.081 \\ \hline
	0.2 & 5000 & 0.55 & 2200 & 0.7 & 0.960 & 0  & 0.898 & 0.151 \\ \hline
	0.2 & 5000 & 0.55 & 2500 & 0.7 & 1 & 0.073  & 0.932 & 0.151 \\ \hline
	0.2 & 5000 & 0.55 & 4900 & 0.7 & 1 & 0.500  & 0.800 & 0 \\ \hline
	0.2 & 5000 & 0.75 & 250  & 0.5 & 0.207 & 0  & 0.416 & 0.062 \\ \hline
	0.2 & 5000 & 0.75 & 1200 & 0.5 & 0.975 & 0  & 0.406 & 0.293 \\ \hline
	0.2 & 5000 & 0.75 & 2200 & 0.5 & 1 & 0.240  & 0.935 & 0.191 \\ \hline
	0.2 & 5000 & 0.75 & 2500 & 0.5 & 1 & 0.320  & 0.935 & 0.199 \\ \hline
	0.2 & 5000 & 0.75 & 4900 & 0.5 & 1 & 0.969  & 0.934 & 0.264 \\ \hline
	0.2 & 5000 & 0.75 & 250  & 0.7 & 0.193 & 0  & 0.760 & 0.041 \\ \hline
	0.2 & 5000 & 0.75 & 1200 & 0.7 & 0.948 & 0  & 0.750 & 0.203 \\ \hline
	0.2 & 5000 & 0.75 & 2200 & 0.7 & 1 & 0.236  & 0.972 & 0.174 \\ \hline
	0.2 & 5000 & 0.75 & 2500 & 0.7 & 1 & 0.316  & 0.972 & 0.180 \\ \hline
	0.2 & 5000 & 0.75 & 4900 & 0.7 & 1 & 0.968  & 0.972 & 0.226 \\ \hline
	0.5 & 5000 & 0.55 & 250  & 0.5 & 1 & 0  & 0.500 & 0.500 \\ \hline
	0.5 & 5000 & 0.55 & 1200 & 0.5 & 1 & 0  & 0.500 & 0.500 \\ \hline
	0.5 & 5000 & 0.55 & 2200 & 0.5 & 1 & 0  & 0.538 & 0.484 \\ \hline
	0.5 & 5000 & 0.55 & 2500 & 0.5 & 1 & 0.102  & 0.593 & 0.505 \\ \hline
	0.5 & 5000 & 0.55 & 4900 & 0.5 & 1 & 0.500  & 0.500 & 0 \\ \hline
	0.5 & 5000 & 0.55 & 250  & 0.7 & 0.111 & 0  & 0.750 & 0.044 \\ \hline
	0.5 & 5000 & 0.55 & 1200 & 0.7 & 0.534 & 0  & 0.743 & 0.210 \\ \hline
	0.5 & 5000 & 0.55 & 2200 & 0.7 & 0.978 & 0  & 0.741 & 0.384 \\ \hline
	0.5 & 5000 & 0.55 & 2500 & 0.7 & 1 & 0.086  & 0.827 & 0.381 \\ \hline
	0.5 & 5000 & 0.55 & 4900 & 0.7 & 1 & 0.500  & 0.500 & 0 \\ \hline
	0.5 & 5000 & 0.75 & 250  & 0.5 & 1 & 0  & 0.500 & 0.500 \\ \hline
	0.5 & 5000 & 0.75 & 1200 & 0.5 & 1 & 0  & 0.525 & 0.494 \\ \hline
	0.5 & 5000 & 0.75 & 2200 & 0.5 & 1 & 0.249  & 0.836 & 0.479 \\ \hline
	0.5 & 5000 & 0.75 & 2500 & 0.5 & 1 & 0.329  & 0.836 & 0.499 \\ \hline
	0.5 & 5000 & 0.75 & 4900 & 0.5 & 1 & 0.972  & 0.834 & 0.660 \\ \hline
	0.5 & 5000 & 0.75 & 250  & 0.7 & 1 & 0  & 0.500 & 0.500 \\ \hline
	0.5 & 5000 & 0.75 & 1200 & 0.7 & 1 & 0  & 0.525 & 0.494 \\ \hline
	0.5 & 5000 & 0.75 & 2200 & 0.7 & 1 & 0.243  & 0.930 & 0.436 \\ \hline
	0.5 & 5000 & 0.75 & 2500 & 0.7 & 1 & 0.323  & 0.930 & 0.451 \\ \hline
	0.5 & 5000 & 0.75 & 4900 & 0.7 & 1 & 0.970  & 0.929 & 0.566 \\ \hline
	0.7 & 5000 & 0.55 & 250  & 0.5 & 1 & 1  & 0.700 & 1 \\ \hline
	0.7 & 5000 & 0.55 & 1200 & 0.5 & 1 & 1  & 0.700 & 1 \\ \hline
	0.7 & 5000 & 0.55 & 2200 & 0.5 & 1 & 1  & 0.700 & 1 \\ \hline
	0.7 & 5000 & 0.55 & 2500 & 0.5 & 1 & 1  & 0.700 & 1 \\ \hline
	0.7 & 5000 & 0.55 & 4900 & 0.5 & 1 & 0.500  & 0.300 & 0 \\ \hline
	0.7 & 5000 & 0.55 & 250  & 0.7 & 1 & 0  & 0.700 & 0.520 \\ \hline
	0.7 & 5000 & 0.55 & 1200 & 0.7 & 1 & 0  & 0.700 & 0.520 \\ \hline
	0.7 & 5000 & 0.55 & 2200 & 0.7 & 1 & 0  & 0.723 & 0.510 \\ \hline
	0.7 & 5000 & 0.55 & 2500 & 0.7 & 1 & 0.102  & 0.756 & 0.541 \\ \hline
	0.7 & 5000 & 0.55 & 4900 & 0.7 & 1 & 0.500  & 0.300 & 0 \\ \hline
	0.7 & 5000 & 0.75 & 250  & 0.5 & 1 & 0  & 0.700 & 0.600 \\ \hline
	0.7 & 5000 & 0.75 & 1200 & 0.5 & 1 & 0  & 0.715 & 0.597 \\ \hline
	0.7 & 5000 & 0.75 & 2200 & 0.5 & 1 & 0.260  & 0.768 & 0.674 \\ \hline
	0.7 & 5000 & 0.75 & 2500 & 0.5 & 1 & 0.340  & 0.768 & 0.702 \\ \hline
	0.7 & 5000 & 0.75 & 4900 & 0.5 & 1 & 0.975  & 0.767 & 0.925 \\ \hline
	0.7 & 5000 & 0.75 & 250  & 0.7 & 1 & 0  & 0.700 & 0.600 \\ \hline
	0.7 & 5000 & 0.75 & 1200 & 0.7 & 1 & 0  & 0.715 & 0.597 \\ \hline
	0.7 & 5000 & 0.75 & 2200 & 0.7 & 1 & 0.249  & 0.902 & 0.612 \\ \hline
	0.7 & 5000 & 0.75 & 2500 & 0.7 & 1 & 0.329  & 0.902 & 0.632 \\ \hline
	0.7 & 5000 & 0.75 & 4900 & 0.7 & 1 & 0.972  & 0.900 & 0.793 \\ \hline
	0.9 & 5000 & 0.55 & 250  & 0.5 & 1 & 1  & 0.900 & 1 \\ \hline
	0.9 & 5000 & 0.55 & 1200 & 0.5 & 1 & 1  & 0.900 & 1 \\ \hline
	0.9 & 5000 & 0.55 & 2200 & 0.5 & 1 & 1  & 0.900 & 1 \\ \hline
	0.9 & 5000 & 0.55 & 2500 & 0.5 & 1 & 1  & 0.900 & 1 \\ \hline
	0.9 & 5000 & 0.55 & 4900 & 0.5 & 1 & 0.500  & 0.100 & 0 \\ \hline
	0.9 & 5000 & 0.55 & 250  & 0.7 & 1 & 1  & 0.900 & 1 \\ \hline
	0.9 & 5000 & 0.55 & 1200 & 0.7 & 1 & 1  & 0.900 & 1 \\ \hline
	0.9 & 5000 & 0.55 & 2200 & 0.7 & 1 & 1  & 0.900 & 1 \\ \hline
	0.9 & 5000 & 0.55 & 2500 & 0.7 & 1 & 1  & 0.900 & 1 \\ \hline
	0.9 & 5000 & 0.55 & 4900 & 0.7 & 1 & 0.500  & 0.100 & 0 \\ \hline
	0.9 & 5000 & 0.75 & 250  & 0.5 & 1 & 1  & 0.900 & 1 \\ \hline
	0.9 & 5000 & 0.75 & 1200 & 0.5 & 1 & 1  & 0.900 & 1 \\ \hline
	0.9 & 5000 & 0.75 & 2200 & 0.5 & 1 & 1  & 0.900 & 1 \\ \hline
	0.9 & 5000 & 0.75 & 2500 & 0.5 & 1 & 1  & 0.900 & 1 \\ \hline
	0.9 & 5000 & 0.75 & 4900 & 0.5 & 1 & 1  & 0.900 & 1 \\ \hline
	0.9 & 5000 & 0.75 & 250  & 0.7 & 1 & 1  & 0.900 & 1 \\ \hline
	0.9 & 5000 & 0.75 & 1200 & 0.7 & 1 & 1  & 0.900 & 1 \\ \hline
	0.9 & 5000 & 0.75 & 2200 & 0.7 & 1 & 1  & 0.900 & 1 \\ \hline
	0.9 & 5000 & 0.75 & 2500 & 0.7 & 1 & 1  & 0.900 & 1 \\ \hline
	0.9 & 5000 & 0.75 & 4900 & 0.7 & 1 & 1  & 0.900 & 1 \\ \hline
\end{longtable}

\end{document}